
\documentclass[journal]{IEEEtran}
\ifCLASSINFOpdf
  % \usepackage[pdftex]{graphicx}
  % declare the path(s) where your graphic files are
  % \graphicspath{{../pdf/}{../jpeg/}}
  % and their extensions so you won't have to specify these with
  % every instance of \includegraphics
  % \DeclareGraphicsExtensions{.pdf,.jpeg,.png}
\else
  % or other class option (dvipsone, dvipdf, if not using dvips). graphicx
  % will default to the driver specified in the system graphics.cfg if no
  % driver is specified.
  % \usepackage[dvips]{graphicx}
  % declare the path(s) where your graphic files are
  % \graphicspath{{../eps/}}
  % and their extensions so you won't have to specify these with
  % every instance of \includegraphics
  % \DeclareGraphicsExtensions{.eps}
\fi
\hyphenation{op-tical net-works semi-conduc-tor}

\usepackage{grffile}
\usepackage{amsmath,amssymb,amsbsy}
\usepackage{hyperref}
\usepackage{amsthm} % assumes new font selection scheme installed
\usepackage{color}
\usepackage{comment}
\usepackage{lipsum}% http://ctan.org/pkg/lipsum
\usepackage{graphicx}%
\usepackage{bm}
\usepackage{subfigure}
\usepackage[normalem]{ulem}
\usepackage{enumerate}
\usepackage{epstopdf}
\usepackage{thmtools, thm-restate}
\newtheorem{theorem}{Theorem}[section]
\newtheorem{lemma}[theorem]{Lemma}
\newtheorem{definition}[theorem]{Definition}	
\newtheorem{proposition}[theorem]{Proposition}
\newtheorem{corollary}[theorem]{Corollary}	
	
\newtheorem{problem}[theorem]{Problem}

\newcommand{\essup}{${\rm ess sup}$}
\allowdisplaybreaks

\begin{document}
%
% paper title
% Titles are generally capitalized except for words such as a, an, and, as,
% at, but, by, for, in, nor, of, on, or, the, to and up, which are usually
% not capitalized unless they are the first or last word of the title.
% Linebreaks \\ can be used within to get better formatting as desired.
% Do not put math or special symbols in the title.
\title{Bilinear Controllability of a Class of Advection-Diffusion-Reaction Systems}
%
%
% author names and IEEE memberships
% note positions of commas and nonbreaking spaces ( ~ ) LaTeX will not break
% a structure at a ~ so this keeps an author's name from being broken across
% two lines.
% use \thanks{} to gain access to the first footnote area
% a separate \thanks must be used for each paragraph as LaTeX2e's \thanks
% was not built to handle multiple paragraphs
%

\author{Karthik Elamvazhuthi, Hendrik Kuiper, Matthias Kawski, and Spring Berman% <-this % stops a space

\thanks{This work was supported by National Science Foundation (NSF)
award no. CMMI-1436960.}
\thanks{K. Elamvazhuthi and S. Berman are with the School
for Engineering of Matter, Transport and Energy, Arizona State
University, Tempe, AZ 85287, USA (e-mail: karthikevaz@asu.edu,
spring.berman@asu.edu).}
\thanks{H. Kuiper and M. Kawski are with the School of Mathematical and Statistical Sciences, Arizona State
University, Tempe, AZ 85287, USA (e-mail: kuiper@asu.edu,
kawski@asu.edu).}}

\maketitle

% As a general rule, do not put math, special symbols or citations
% in the abstract or keywords.
\begin{abstract}
In this paper, we investigate the exact controllability properties of an advection-diffusion equation on a bounded domain, using time- and space-dependent velocity fields as the control parameters. This partial differential equation (PDE) is the Kolmogorov forward equation for a reflected diffusion process that models the spatiotemporal evolution of a swarm of agents. We prove that if a target probability density has bounded first-order weak derivatives and is uniformly bounded from below by a positive constant, then it can be reached in finite time using control inputs that are bounded in space and time. We then extend this controllability result to a class of  advection-diffusion-reaction PDEs that corresponds to a hybrid-switching diffusion process (HSDP), in which case the reaction parameters are additionally incorporated as the control inputs.
% Our proof for controllability of the advection-diffusion equation is constructive and is based on linear operator semigroup theoretic arguments and spectral properties of the multiplicatively perturbed Neumann Laplacian $\Delta(a(\mathbf{x}) \cdot)$.
%\spr{Define $a(\mathbf{x})$}\kar{Is a definition needed here in the abstract? It should be clear $a(\mathbf{x})$ is spatially dependent coefficient}.
 For the HSDP, we first constructively prove controllability of the associated continuous-time Markov chain (CTMC) system, in which the state space is finite. Then we show that our controllability results for the advection-diffusion equation and the CTMC can be combined to establish controllability of the forward equation of the HSDP. Lastly, we provide constructive solutions to the problem of asymptotically stabilizing an HSDP to a target non-negative stationary distribution using time-independent state feedback laws, which correspond to spatially-dependent coefficients of the associated system of PDEs.
\end{abstract}

% Note that keywords are not normally used for peerreview papers.
\begin{IEEEkeywords}
Swarm robotics, advection-diffusion-reaction  PDE, stochastic processes, controllability, continuous-time Markov chains.
\end{IEEEkeywords}

% For peer review papers, you can put extra information on the cover
% page as needed:
% \ifCLASSOPTIONpeerreview
% \begin{center} \bfseries EDICS Category: 3-BBND \end{center}
% \fi
%
% For peerreview papers, this IEEEtran command inserts a page break and
% creates the second title. It will be ignored for other modes.
\IEEEpeerreviewmaketitle

\section{Introduction}

In recent years, there has been an extensive amount of work on control of large-scale multi-agent systems. A particular instance of these control problems is when the agents represent autonomous robots that must collectively achieve some desired global objective. This has motivated various investigations on the appropriate technical framework for modeling and control of such systems, called {\it robotic swarms}. One such framework involves modeling swarms using probability densities and controlling them through mean-field models \cite{brockett2012notes,martinoli2004modeling,
milutinovic2006modeling}. Applications of these models to the control of robotic swarms has led to new questions from a control-theoretical point of view. One fundamental question is that of controllability: {\it given a mean-field model with a control parameter, what is the class of reachable probability densities?} Stated as such, the problem is not new since the use of mean-field models in the natural sciences and engineering is classical. 
However, the choice of control parameters in the context of swarm control has led to some non-classical controllability problems. 

% We have previously used this type of model to formulate the population dynamics of a robotic swarm in problems where the swarm is required to achieve a target probability density of coverage over a domain \cite{elamvazhuthi2018optaut,zhang2017performance}.

As one of the main contributions of this paper, we consider such a controllability problem for a robotic swarm  that is described by a mean-field model in the form of an {\it advection-diffusion}  partial differential equation (PDE). Similar controllability problems have been addressed previously in the literature. For example, motivated by problems arising from quantum physics, Blaquiere \cite{blaquiere1992controllability} used techniques from stochastic control to study a controllability problem in which a stochastic process evolves on a $n$-dimensional Euclidean space $\mathbb{R}^n$. A similar controllability result was proved in \cite{dai1991stochastic}. In \cite{porretta2014planning}, Porretta addressed a controllability problem for a Fokker-Planck equation evolving on the $n$-dimensional torus, along with an associated {\it mean-field game} problem \cite{bensoussan2013mean}. This work applied observability inequalities that are typically used in PDE controllability problems. The results in \cite{blaquiere1992controllability,dai1991stochastic} were extended to a more general setting in which the stochastic process is a linear control system perturbed by a diffusion process \cite{chen2017optimal}. Controllability problems for systems with a similar structure have also been considered in work on {\it multiplicative control} of PDEs \cite{khapalov2010controllability}.

In contrast to these works, in this paper, the stochastic process that models the agents' motion is confined to a bounded subset of a Euclidean space. Boundedness of the domain is a common constraint in many problems in swarm robotics, e.g. in \cite{elamvazhuthi2018optaut,milutinovic2006modeling}, where optimal control techniques were used to optimize swarm behavior. Additionally, the results in previous controllability studies were proven with control parameters that are square-integrable. However, in bilinear optimal control of PDEs associated with stochastic processes, the boundedness of the vector fields is a common requirement \cite{elamvazhuthi2018optaut,finotti2012optimal,fleig2016optimal}. Toward this end, we establish controllability with control inputs that are (essentially) bounded in space and time. 

Another contribution of this paper is our analysis of a controllability problem for the forward equation of a class of hybrid switching diffusion processes (HSDPs) \cite{yin2010hybrid}. These processes can be used as models for robots that switch between multiple behavioral states, e.g. \cite{milutinovic2006modeling,elamvazhuthi2018optaut}. Our result is based on a controllability result for the forward equation of a related class of continuous-time Markov chains (CTMCs). A nontrivial issue in the problem of controlling the forward equation of CTMCs is the fact that the control parameters, which correspond to the transition rates of the Markov chain, are constrained to be positive. Hence, classical results on controllability of nonlinear control systems governed by ordinary differential equations (ODEs) do not apply. In spite of this issue, we prove controllability of these forward equations using piecewise constant control inputs. This controllability property can be attributed to the strong connectivity of the associated graphs.
%\spr{(What is attributed?:) which can be attributed mainly to}\kar{the controllability property} the strongly connected nature of the associated graphs. 

As a final contribution of this paper, we consider the problem of stabilizing HSDPs to desired stationary distributions using time-independent and spatially-dependent controls or state feedback laws. 
%\spr{(Explain what these distributions/controls represent for a swarm.)}  
A similar problem was considered in \cite{mesquita2012jump} for general controllable systems on unbounded domains with a single discrete state. 
 
The results presented in this paper are partially extensions of our prior work in \cite{elamvaz2016pde}, where controllability was proved for the case in which a swarm modeled using a Fokker-Planck equation 
evolves on a one-dimensional bounded domain. Here, we generalize this result to multi-dimensional domains with sufficiently smooth boundaries that satisfy the {\it chain condition} (see Definition \ref{chain}). Additionally, the requirement in \cite{elamvaz2016pde} for target densities to have second-order partial derivatives that are essentially bounded in space is relaxed to the requirement that only first-order partial derivatives must be bounded. The controllability result for the forward equation of CTMCs was first summarized in our prior work \cite{elamvaz2016lin} without proof. A complete proof of this result is given in this paper. The controllability and stabilization results presented in this paper for the case of HSDPs are new. 

The paper is organized as follows. In Section \ref{secII}, we establish notation and provide some definitions that are used throughout the paper. In Section \ref{prborm}, we formulate the main problems that are addressed in the paper. In Section \ref{sec:secctrban}, we present a detailed analysis of the controllability properties of the systems defined in Section \ref{prborm}. In Section \ref{stabsec}, we consider the problem of stabilizing target equilibrium densities of the system of PDEs using time-independent and spatially-dependent coefficients of the PDEs. Finally, we discuss conclusions in Section \ref{sec:conc}.

\section{NOTATION}
\label{secII}
We denote the $n$-dimensional Euclidean space by $\mathbb{R}^n$. $\mathbb{R}^{n \times m}$ refers to the space of $n \times m$ matrices, and $\mathbb{R}_+$ refers to the set of non-negative real numbers. Given a vector $\mathbf{x}  \in \mathbb{R}^n$, $x_i$ denotes the $i^{th}$ coordinate value of $\mathbf{x}$. For a matrix $\mathbf{A} \in \mathbb{R}^{n \times m}$, $A^{ij} $ refers to the element in the $i^{th}$ row and $j^{th}$ column of $\mathbf{A}$. For a subset $B \subset \mathbb{R}^M$, ${\rm int}(B)$ refers to the interior of the set $B$. $\mathbb{C}$, $\mathbb{C}_-$, and $\bar{\mathbb{C}}_-$ denote the set of complex numbers, the set of complex numbers with negative real parts, and the set of complex numbers with non-positive real parts, respectively. $\mathbb{Z}_+$ refers to the set of positive integers.
 We denote by $\Omega$ an open, bounded, and connected subset of a Euclidean domain $\mathbb{R}^n$. The  boundary of $\Omega$ is denoted by $\partial \Omega$.
 
\begin{definition}
We will say that $\Omega$ is a \textbf{$C^{1,1}$ domain} if each point $\mathbf{x} \in \partial \Omega$ has a neighborhood $\mathcal{N}$ such that $\Omega \cap \mathcal{N}$ is represented by the inequality $x_n < \gamma(x_1,...,x_{n-1})$ in some Cartesian coordinate system for some function $\gamma : \mathbb{R}^{n-1} \rightarrow \mathbb{R}$ that is at least once differentiable and has derivatives of order $1$ that are Lipschitz continuous. 
\end{definition}

For each $1\leq p< \infty$, we define $L^p(\Omega)$ as the Banach space of complex-valued measurable functions over the  set $\Omega$ whose absolute value raised to $p^{{\rm th}}$ power has finite integral. We define $L^{\infty}(\Omega)$ as the space of essentially bounded measurable functions on $\Omega$. The space $L^{\infty}(\Omega)$ is equipped with the norm
$
\|z\|_{\infty} =~ \essup_{\mathbf{x} \in \Omega} |z(\mathbf{x})|, 
$
where $ \essup_{\mathbf{x} \in \Omega}(\cdot)$ denotes the {\it essential supremum} attained by its argument over the domain $\Omega$. The space $L^2(\Omega)$ is a Hilbert space when equipped with the standard inner product, $\langle \cdot , \cdot \rangle_{2} :  L^2(\Omega) \times L^2(\Omega) \rightarrow \mathbb{C}$, given by 
$
\langle f,g\rangle_{2} = \int_{\Omega} f(\mathbf{x})\bar{g}(\mathbf{x})d\mathbf{x}
$
for each $f,g \in L^2(\Omega)$, where $\bar{g}$ is the complex conjugate of the function $g$. The norm $\|\cdot\|_{2}$ on the space $L^2(\Omega)$ is defined as
$
\|f\|_{2} = \langle f, f\rangle^{1/2}_{2}
$
for each $f \in L^2(\Omega)$. For a function $f \in L^2(\Omega)$ and a given constant $c$, we write $f \geq c$ to imply that $f$ is real-valued and $f(\mathbf{x})\geq c$ for almost every (a.e.) $\mathbf{x} \in \Omega$.

Let $f_{x_i}$ denote the first-order (weak) partial derivative of the function $f$ with respect to the coordinate $x_i$. Similarly,  $f_{x_i x_i}$ will denote the second-order partial derivative of the function $f$ with respect to the coordinate $x_i$. We define the Sobolev space $H^1(\Omega) = \big \lbrace f \in L^2(\Omega): f_{x_i} \in L^2(\Omega) \text{ for } 1 \leq i \leq N \big \rbrace$. We equip this space with the usual Sobolev norm $\|\cdot\|_{H^1}$, given by
$\|f\|_{H^1} = \Big( \|f\|^2_{2} + \sum_{i=1}^n\| f_{x_i}\|^2_{2}\Big)^{1/2} \nonumber
$
for each $f \in H^1(\Omega)$. The weak gradient of a function $f \in H^1(\Omega)$ will be denoted by $\nabla f = [f_{x_1}~...~f_{x_n}]^T$. 
%\kar{might not need $H^2$}We will also need the space $H^2(\Omega) = \big \lbrace z \in H^1(\Omega): z_{x_i x_i} \in L^2(\Omega) \big \rbrace$, which will be equipped with the norm $\|\cdot\|_{H^2}$ given by
%\begin{equation}
%\|f\|_{H^2} = \Big( \|f\|^2_{H^1} + \sum_i^N\|f_{x_ix_i}\|^2_{2}  \text{ for } 1 \leq i \leq N \Big)^{1/2} \nonumber
%\end{equation}
%for each $f \in H^2(\Omega)$ \kar{missing mixed derivatives in $H^2$ norm}. 

\begin{definition}
\label{extn}
We will call $\Omega$ an \textbf{extension domain} if there exists a linear bounded operator $E : H^1(\Omega) \rightarrow H^1(\mathbb{R}^n)$ such that $(Ef)(\mathbf{x}) = f(\mathbf{x})$ for a.e. $\mathbf{x} \in \Omega$.
\end{definition}
An example of an extension domain is a domain with Lipschitz boundary \cite{agranovich2015sobolev}[Theorem 10.4.1]. Unless otherwise stated, the \textbf{default assumption} in this paper will be that \textbf{$\Omega$ is an extension domain}. The exponential stability results will only require this default assumption. However, to prove the controllability result, we will need the stronger assumption that the domain $\Omega$ is $C^{1,1}$ or convex. An additional assumption about the domain $\Omega$ will be needed to prove the controllability result, which motivates the following definition.

\begin{definition}
\label{chain}
The domain $\Omega$ will be said to satisfy the \textbf{chain condition} if there exists a constant $C>0$ such that for every $\mathbf{x},\bar{\mathbf{x}}$ $\in$ $\Omega$ and every positive $n$ $\in$ $\mathbb{Z}_+$, there exists a sequence of points $\mathbf{x}_i$ $\in$ $\Omega$, $0 \leq i \leq n$, such that $\mathbf{x}_0 = \mathbf{x}$, $\mathbf{x}_n =\bar{\mathbf{x}}$, and $|\mathbf{x}_i-\mathbf{x}_{i+1}| \leq \frac{C}{n} |\mathbf{x} - \bar{\mathbf{x}}|$ for all $i = 0,...,n-1$. Here $|\cdot |$ denotes the standard Euclidean norm.
\end{definition}

Note that every convex domain satisfies the chain condition. For a given real-valued function $a \in L^{\infty}(\Omega)$, $L^2_a(\Omega)$ refers to the set of all functions $f$ such that
$
\int_0^1|f(\mathbf{x})|^2a(\mathbf{x})d\mathbf{x}< \infty.
$
We will always assume that the associated function $a$ is uniformly bounded from below by a positive constant, in which case the space $L^2_a(\Omega)$ is a Hilbert space with respect to the weighted inner product $\langle \cdot , \cdot \rangle_{a} :  L^2_a(\Omega) \times L^2_a(\Omega) \rightarrow \mathbb{R}$, given by
$
\langle f, g\rangle_{a} = \int_{\Omega} f(\mathbf{x})\bar{g}(\mathbf{x})a(\mathbf{x})d\mathbf{x}
$
for each $f,g \in L^2_a(\Omega)$. We will also need the space $H^1_a(\Omega) = \big \lbrace z \in L^2_a(\Omega): (az)_{x_i} \in L^2(\Omega) \text{ for } 1 \leq i \leq N \big \rbrace$,
equipped with the norm
$
\|f\|_{H^1_a} = \Big( \|f\|^2_{a} + \sum_{i=1}^n\| (af)_{x_i}\|^2_{2}\Big)^{1/2} \nonumber
$.
When $a =\mathbf{1}$, where $\mathbf{1}$ is the function that takes the value $1$ a.e. on $\Omega$, the spaces $L^1(\Omega)$ and $H^1(\Omega)$ coincide with the spaces $L^1_a(\Omega)$ and $H^1_a(\Omega)$, respectively. We will also need the spaces 
$
W^{1,\infty}(\Omega) = \big \lbrace z \in L^{\infty}(\Omega): z_{x_i} \in L^{\infty}(\Omega) \text{ for } 1 \leq i \leq N \big \rbrace \nonumber
$
and 
$
W^{2,\infty}(\Omega) =  \big \lbrace z \in W^{1,\infty}(\Omega): z_{x_ix_i} \in L^{\infty}(\Omega) \text{ for } 1 \leq i \leq N \rbrace. \nonumber
$
Let $X$ be a Hilbert space with the norm $\|\cdot\|_X$. The space $C([0,T];X)$ consists of all continuous functions $u:[0,T] \rightarrow X$ for which $ \|u\|_{C([0,T];X)} ~:=~ \max_{0 \leq t \leq T} \|u(t)\|_X ~<~ \infty $. If $Y$ is a Hilbert space, then $\mathcal{L}(X,Y)$ will denote the space of linear bounded operators from $X$ to $Y$. We will also use the multiplication operator $\mathcal{M}_a : L^2(\Omega) \rightarrow L^2(\Omega)$, defined as $(\mathcal{M}_a u)(\mathbf{x}) = a(\mathbf{x}) u(\mathbf{x})$ for a.e. $\mathbf{x} \in \Omega$ and each $u \in L^2(\Omega)$.
%\kar{cmt: might get scrapped} For a given $T>0$, $y_0 \in L^2(0,1)$ and $v: [0,T] \times [0,1] \rightarrow \mathbb{}?R$, a function $y$ is a weak solution of the PDE, \eqref{eq:Mainsys1}-\eqref{eq:Mainsysbc}, if
%\begin{equation}
%\langle y_t,\phi \rangle_{V^*,V} = \langle -y_x+vy,\phi_x \rangle_{V^*,V}
%\label{eq:Weak BCS}
%\end{equation} 
%for all $\mathbf{\phi} \in L^2(0,T;H^1(0,1))$.  

We will need an appropriate notion of a solution of the PDE \eqref{eq:Mainsys1}. Toward this end, let $A$ be a closed linear operator that is densely defined on a subset $\mathcal{D}(A)$, the domain of the operator, of a Hilbert space $X$. We will define ${\rm spec}(A)$ as the set $\lbrace \lambda \in \mathbb{C}:\lambda\mathbb{I} -A $ is not invertible in $X \rbrace$, where $\mathbb{I}$ is the identity map on $X$. If $A$ is a bounded operator, then $\|A\|_{op}$ will denote the operator norm induced by the norm defined on $X$. From \cite{engel2000one}, we have the following definition.

\begin{definition} 
For a given time $T>0$, a \textbf{mild solution} of the ODE
\begin{equation} 
\label{eq:absode}
\dot{u}(t) = Au(t); \hspace{2mm} u(0) = u_0 \in X
\end{equation}
is a function $u \in C([0,T];X)$ such that $u(t) = u_0 + A\int_0^tu(s)ds$ for each $t \in [0, T]$. 
\end{definition}
Under appropriate conditions satisfied by $A$, the mild solution is given by a {\it strongly continuous semigroup} of linear operators, $(\mathcal{T}(t))_{t\geq 0}$, that are {\it generated} by the operator $A$ \cite{engel2000one}. 

%By strongly continuous, we mean the limit $\lim_{t \rightarrow 0^+}\|\mathcal{T}(t)u-u\|_{X}=0$ for each $u \in X$ Then the solution of the above ordinary differential equation \eqref{eq:absode} is given by $u(t)= \mathcal{T}(t)u_0$ for each $t \in [0,T]$.

The differential equations that we analyze in this paper will be non-autonomous in general. Hence, we must adapt the notion of a mild solution to these types of equations. 
\begin{definition}
Let $A_i$ be a closed linear operator with domain $\mathcal{D}(A_i)$ for each $i \in \mathbb{Z}_+$. Suppose that for a certain time interval $[0,T]$, a piecewise constant family of operators is given by a map, $t \mapsto A(t)$, for which there exists a partition $[0,T] = \cup_{i \in \mathbb{Z}_+} [a_i,a_{i+1})$ such that $a_i \leq a_{i+1}$ for each $i \in \mathbb{Z}_+$ and $A(t) = A_i$ for each $t \in [a_i,a_{i+1})$. Then a mild solution of the ODE
\begin{equation} 
\label{eq:nonaut}
\dot{u}(t) = A(t)u(t); \hspace{2mm} u(0) = u_0 \in X
\end{equation}
is a function $u \in C([0,T];X)$ such that 
\begin{equation} 
\label{eq:mldsol}
u(t) = u_0 + \sum_{i \in  \mathbb{Z}_+}A_i\int_{\min \lbrace t,a_{i} \rbrace}^{\min \lbrace t,a_{i+1} \rbrace} u(s)ds
\end{equation}
for each $t \in [0, T]$. 
\end{definition}
There is in fact a more general notion of mild solutions that arises from two-parameter semigroups of operators generated by time-varying linear operators. However, the definition \eqref{eq:mldsol} will be sufficient for our purposes, since one can construct solutions of the ODE \eqref{eq:nonaut} by treating it as an autonomous system in each time interval $[a_i,a_{i+1})$ and concatenating these solutions together to obtain the solution $u$. Note that the mild solution is defined with respect to an operator $A$ or  collection of operators $A(t)$; when we refer to such a solution, the associated operator(s) will be clear from the context. We will also need the notion of a positive semigroup, which is defined as follows.

\begin{definition}
A strongly continuous semigroup of linear operators $(\mathcal{T}(t))_{t \geq 0}$ on a Hilbert space $X$ is called \textbf{positive} if $u \in X$ such that $u \geq 0$ implies that $\mathcal{T}(t)u \geq 0$ for all $t \geq 0$.
\end{definition}

We introduce some additional notation from graph theory which will be used in the coming sections. We denote by $\mathcal{G} = (\mathcal{V},\mathcal{E})$ a directed graph with a set of $M$ vertices, $\mathcal{V} = \lbrace 1,2,...,M \rbrace$, and a set of $N_{\mathcal{E}}$ edges, $  \mathcal{E} \subset \mathcal{V} \times \mathcal{V}$. An edge from vertex $i \in \mathcal{V}$ to vertex $j \in \mathcal{V}$ is denoted by $e = (i,j) \in \mathcal{E}$. We define a source map $S : \mathcal{E} \rightarrow \mathcal{V}$  and a target map  $T: \mathcal{E} \rightarrow \mathcal{V}$ for which $S(e) = i$ and $T(e) = j$ whenever $e = (i,j) \in \mathcal{E}$. There is a {\it directed path} of length $s$ from vertex $i\in \mathcal{V}$ to vertex $j\in \mathcal{V}$ if there exists a sequence of edges $\{e_i\}^s_{i=1}$ in $\mathcal{E}$ such that $S(e_1) = i$, $T(e_s) = j$, and $S(e_k)=T(e_{k-1})$ for all $2 \leq k \leq s$. A directed graph $\mathcal{G}=(\mathcal{V},\mathcal{E})$ is called {\em strongly connected} if for every pair of distinct vertices $v_0,\,v_T\in \mathcal{V}$, there exists a {\em directed path} of edges in $\mathcal{E}$ connecting $v_0$ to $v_T$. We assume that $(i,i) \notin \mathcal{E}$ for all $i \in \mathcal{V}$. The graph  $\mathcal{G}$ is said to be {\it bidirected} if $e \in \mathcal{E}$ implies that $\tilde{e} = (T(e), S(e))$ also lies in $\mathcal{E}$.

\section{PROBLEM FORMULATION}
\label{prborm}
In this section, we will formulate the main problems that are addressed in this paper. The precise regularity conditions under which we study the solutions and parameters of the PDEs are presented in Section \ref{sec:secctrban}. On the other hand, the associated systems of stochastic differential equations (SDEs) will be treated in this paper only in a formal manner, in order to motivate the PDE controllability problem that we investigate. 
%\ke{We can remove this statement. It wasn't there in the previous draft: 
We note that one can prove existence of the stochastic processes that we consider by extending the semigroup-theoretic arguments in this paper to the theory of {\it Dirichlet forms} \cite{bass1991some}.

Consider a swarm of $N_p$ agents that are deployed on the $n$-dimensional domain $\Omega$. The position of each agent, indexed by $i \in \lbrace 1,2,...,N_p \rbrace$, evolves according to a stochastic process $\mathbf{Z}_i(t) \in \Omega$, {where $t$ denotes time. We assume that the agents are non-interacting. Therefore, the random variables that correspond to the dynamics of each agent are independent and identically distributed, and we can drop the subscript $i$ and define the problem in terms of a single stochastic process $\mathbf{Z}(t) \in \Omega$. The deterministic motion of each agent is defined by a velocity vector field $\mathbf{v}(\mathbf{x},t) \in \mathbb{R}^n$, where $\mathbf{x} \in \Omega$. This motion is perturbed by a $n$-dimensional {\it Wiener process} $\mathbf{W}(t)$, which models noise. This process can be a model for stochasticity arising from inherent sensor and actuator noise. Alternatively, noise could be actively programmed into the agents' motion %introduced into the system 
to implement more exploratory agent behaviors and to take advantage of the smoothening effect of the process on the agents' probability densities. %Additionally, noise enables more exploratory behavior in the robots. 
Given the parameter $\mathbf{v}(\mathbf{x},t)$, each agent evolves according to a {\it reflected diffusion process} $\mathbf{Z}(t)$ that satisfies the following SDE \cite{pilipenko2014introduction}: 
\begin{eqnarray}
\label{eq:SDE1}
d\mathbf{Z}(t) &=& ~ \mathbf{v}(\mathbf{Z}(t),t)dt + \sqrt{2D}d\mathbf{W}(t) + \mathbf{n}(\mathbf{Z}(t))d\psi(t), \nonumber \\ 
\mathbf{Z}(0) &=& ~ \mathbf{Z}_0,
\end{eqnarray}
where $\psi(t) \in \mathbb{R}$ is called the {\it reflecting function} or {\it local time} \cite{bass1991some,pilipenko2014introduction}, a stochastic process that constrains $\mathbf{Z}(t)$ to the domain $\Omega$; $\mathbf{n}(\mathbf{x})$ is the normal to the boundary at $\mathbf{x} \in \partial \Omega$; and $D>0$ is the diffusion constant. Without loss of generality, we assume that $D =1$.

We now pose the problem of determining the existence of the robot control law, defined as the velocity field $\mathbf{v}(\cdot,t)$, that drives the swarm to a target spatial distribution over the domain.
\begin{problem} 
\label{prob:SDEctrl}
Given a time $t_f>0$ and a target probability density $f:\Omega \rightarrow \mathbb{R}_{+}$ such that $\int_{\Omega} f(\mathbf{x})d\mathbf{x} = 1$, determine if there exists a feedback control law $\mathbf{v}: \Omega \times [0,t_f] \rightarrow \mathbb{R}^n$ such that the process \eqref{eq:SDE1} satisfies $\mathbb{P}(\mathbf{Z}(T) \in \Gamma) = \int_\Gamma f(\mathbf{x})d\mathbf{x}$ for each measurable subset $\Gamma \subset \Omega $.
\end{problem}

The {\it Kolmogorov forward equation} corresponding to the SDE  \eqref{eq:SDE1} is given by:
\begin{eqnarray}
\label{eq:Mainsys1} 
&y_t = \Delta y -  \nabla \cdot (\mathbf{v}(\mathbf{x},t) y) & in  ~~ \Omega \times [0,T] \nonumber \\ 
&y(\cdot,0) = y^0 & in ~~ \Omega \nonumber \\ 
& \mathbf{n}\cdot (\nabla y - \mathbf{v}(\mathbf{x},t) y)= 0  & in ~~ \partial  \Omega \times [0,T].
\end{eqnarray}
The solution $y(\mathbf{x},t)$ of this equation represents the probability density of a single agent occupying position $\mathbf{x} \in \Omega$ at time $t$, or alternatively, the density of a population of agents at this position and time. 
The PDE \eqref{eq:Mainsys1} is related to the SDE \eqref{eq:SDE1} through the relation $\mathbb{P}(\mathbf{Z}(t) \in \Gamma) = \int_{\Gamma} y(\mathbf{x},t)d\mathbf{x}$ for all $t \in [0,t_f]$ and all measurable $\Gamma \subset \Omega$. Therefore, the solution $y(\mathbf{x},t)$ captures the {\it mean-field} behavior of the population. In particular, as the number of agents tends to infinity, the {\it empirical measures} \cite{bensoussan2013mean}  converge to the measure for which this PDE's solution is the density $y(\mathbf{x},t)$. See \cite{zhang2017performance} for such a convergence analysis.
Problem \ref{prob:SDEctrl} can be reframed in terms of equation \eqref{eq:Mainsys1} as a PDE controllability problem as follows:
\begin{problem}
\label{ADEctrl1}
Given $t_f>0$, $y^0 :\Omega\rightarrow \bar{\mathbb{R}}_{+}$, and $f:\Omega \rightarrow \mathbb{R}_{+}$ such that $\int_{\Omega} y^0(\mathbf{x})d\mathbf{x} =\int_{\Omega} f(\mathbf{x})d\mathbf{x} = 1$, determine whether there exists a space- and time-dependent parameter $\mathbf{v}: \Omega \times [0,t_f] \rightarrow \mathbb{R}^n$ such that the solution $y$ of the PDE \eqref{eq:Mainsys1} satisfies $y(\cdot,t_f) = f$.
\end{problem}
%
%Note that in the context of control theory of infinite-dimensional systems the above problem corresponds a question of {\it exact controllability} for the system \eqref{eq:Mainsys1}. In this paper we will mainly address problem \ref{ADEctrl1} as a PDE controllability problem. 

%\begin{remark}
%Throughout this paper, we will not need and will largely ignore the well-posedness issues associated with the SDE model \eqref{eq:SDE1}. For sufficiently regular vector-fields $\mathbf{v}(\mathbf{x},t)$ and domain $\Omega$ with a smooth enough boundary, unique strong solutions of the SDE are guaranteed to exist (see \cite{pilipenko2014introduction} and references therein). However, we will work under weaker assumptions on the regularity of the vector-fields and the domain boundary, for which only existence and uniqueness of the (mild) solutions of the PDE \eqref{eq:Mainsys1} will be shown.  
%\end{remark}

In models of robotic swarms, it is useful to consider {\it hybrid} variants of the SDE \eqref{eq:SDE1} to account for the fact that each robot, in addition to a continuous spatial state $\mathbf{Z}(t)$, can be associated with a discrete state $Y(t) \in \mathcal{V} = \lbrace 1,...,N\rbrace$ at each time $t$ \cite{milutinovic2006modeling,elamvazhuthi2018optaut}. The elements of $\mathcal{V}$ can correspond to different behavioral states or tasks that can be performed by a robot, such as ``searching,'' ``lifting,'' or ``digging.'' In this case, the state of each agent is denoted by the pair $(\mathbf{Z}(t),Y(t)) \in \Omega \times \mathcal{V}$.
Suppose that the variable $Y(t)$ evolves according to a CTMC. We define a graph $\mathcal{G} = (\mathcal{V},\mathcal{E})$ in which the vertex set $\mathcal{V}$ is the set of discrete states, and the edge set $\mathcal{E}$ defines the possible agent transitions between the  discrete states in $\mathcal{V}$. The agents' transition rules are determined by the control parameters $u_{e}:[0,\infty) \rightarrow U$ for each $e \in \mathcal{E}$, also known as the {\it transition rates} of the associated CTMC. Here $U \subset \mathbb{R}_{+}$ is the set of {\it admissible transition rates}. 

The variable $Y(t)$ evolves on the state space $\mathcal{V}$ according to the conditional probabilities 
\begin{equation}
\label{eq:marko}
\mathbb{P}\left(Y(t+h) = T(e) ~|~ Y(t) = S(e)\right) =~ u_{e}(t)h + o(h)
\end{equation}
for each $e \in \mathcal{E}$.  Let  $\mathcal{P}(\mathcal{V}) = \lbrace \mathbf{y} \in \mathbb{R}^{N}_{+} : \sum_v y_v = 1 \rbrace$ be the simplex of probability densities on $\mathcal{V}$. Corresponding to the CTMC is a set of ODEs that determine the time evolution of the probability densities $\mathbb{P}(Y(t) = v) = \mu_v(t) \in \mathbb{R}_{+}$. The forward equation is given by a system of linear ODEs,
\begin{eqnarray}
\label{eq:ctrsys}
\dot{\boldsymbol{\mu}}(t) &=& \sum_{ e\in \mathcal E} u_e(t) \mathbf{Q}_e \boldsymbol{\mu}(t), \hspace{3mm} t \in [0, \infty), \\ \nonumber
\boldsymbol{\mu}(0) &=& \boldsymbol{\mu}^0 \in \mathcal{P}(\mathcal{V}),
\end{eqnarray}
where $\mathbf{Q}_e$ are control matrices whose entries are given by
\[
 Q_e^{ij} =
  \begin{cases}
   -1 & \text{if } i = j=  S(e),\\
    1 & \text{if } i= T(e), \hspace{1mm} j = S(e),\\
   0       & \text{otherwise.}
  \end{cases}
\]
Given these definitions, we can define a hybrid switching diffusion process $(\mathbf{Z}(t),Y(t))$ as a system of SDEs of the form
\begin{eqnarray}
d\mathbf{Z}(t) &\hspace{-2mm}=\hspace{-1mm}&  \mathbf{v}(Y(t),\mathbf{Z}(t),t)dt + \sqrt{2\mathbf{D}}d\mathbf{W}(t) + \mathbf{n}(\mathbf{Z}(t))d\psi(t), \nonumber \\ 
\mathbf{Z}(0) &\hspace{-2mm}=\hspace{-1mm}&  \mathbf{Z}_0, 
\label{eq:SDE2}
\end{eqnarray}
where $\mathbf{v}: \mathcal{V} \times \Omega \times [0,t_f] \rightarrow \mathbb{R}^n$ is the state- and time-dependent velocity vector field, and $\mathbf{D} \in \mathbb{R}^{N}_+$ is a vector of positive elements. Here, $D_k$ is the diffusion parameter associated with each discrete state $k \in \mathcal{V}$. Let $\mathbf{v}_k$ denote the velocity field associated with discrete state $k \in \mathcal{V}$.  Then the forward equation for this system of SDEs is given by the system of PDEs
\begin{eqnarray}
\nonumber
&(y_k)_t = D_k\Delta y_k -  \nabla \cdot (\mathbf{v}_k(\mathbf{x},t) y_k) +\mathcal{F}_k & in  ~~ \Omega \times [0,t_f] \\ \nonumber
&y_k(\cdot,0) = y^0_k & in ~~ \Omega  \\ \nonumber
& \mathbf{n}\cdot (\nabla y_k - \mathbf{v}_k(\mathbf{x},t) y_k)= 0  & in ~~ \partial  \Omega \times [0,t_f], \label{eq:Mainsys2} \\
\end{eqnarray}
where $k \in \mathcal{V}$ and $\mathcal{F}_k =\sum_{e \in\mathcal{E}} \sum_{j \in \mathcal{V}}u_e(t)Q^{kj}_e y_j$. 

We can pose a problem for the system of SDEs \eqref{eq:SDE2} that is similar to the one defined in Problem \ref{ADEctrl1}, with a target spatial distribution assigned to each discrete state:
\begin{problem}
\label{ADRctrl2}
Given $t_f>0$, $\mathbf{y}^0 :\Omega \rightarrow \mathbb{R}_{\geq 0}^{N}$, and $\mathbf{f}:\Omega^{N} \rightarrow \mathbb{R}_{+}^{N}$ such that $\sum_{i \in \mathcal{V}}\int_{\Omega} y^0_i(\mathbf{x})d\mathbf{x} = \sum_{i \in \mathcal{V}} \int_{\Omega} f_i(\mathbf{x})d\mathbf{x} = 1$, determine whether there exists a set of space- and time-dependent parameters $\mathbf{v}_k: \Omega \times [0,t_f] \rightarrow \mathbb{R}^n$ and time-dependent parameters $u_e:[0,t_f] \rightarrow \mathbb{R}_+$ such that the solution $\mathbf{y}$ of the system of PDEs  \eqref{eq:Mainsys2} satisfies $\mathbf{y}(\cdot,t_f) = \mathbf{f}$.
\end{problem}

It will be shown that the answer to Problem \ref{ADRctrl2} can be concluded from the answer to the controllability problem  defined in Problem \ref{ADEctrl1} and the {\it small-time controllability} properties of the system \eqref{eq:ctrsys}. Toward this end, we recall some controllability notions from nonlinear control theory \cite{bloch2015nonholonomic}.
\begin{definition}
Given $U \subset  \bar{\mathbb{R}}_{+}$ and $\boldsymbol{\mu}^0  \in \mathcal{P}(\mathcal{V})$, we define $R^{U}(\boldsymbol{\mu}^0,t)$ to be the set of all $\mathbf{y} \in  \mathcal{P}(\mathcal{V})$ for which there exists an admissible control, $\mathbf{u} = \lbrace u_e \rbrace_{e \in \mathcal{E}}$, taking values in $U$ such that there exists a trajectory of  system (\ref{eq:ctrsys}) with $\boldsymbol{\mu}(0) = \boldsymbol{\mu}^0$, $\boldsymbol{\mu}(t) = \mathbf{y}$. The \textbf{reachable set from $\boldsymbol{\mu}^0$ at time $t_f$} is defined as
\begin{equation}
R_{t_f}^U(\boldsymbol{\mu}^0) = \bigcup_{0 \leq t \leq t_f} R^U(\boldsymbol{\mu}^0,t).
\end{equation}
\end{definition}

\begin{definition}
\label{def1}
The system \eqref{eq:ctrsys} is said to be {\bf small-time locally controllable (STLC)} from an equilibrium configuration $\boldsymbol{\mu}^{eq} \in \mathcal{P}(\mathcal{V})$ if the set of reachable states $R_{t_f}^U(\boldsymbol{\mu}^{eq})$ contains a neighborhood of $\boldsymbol{\mu}^{eq} \in \mathcal{P}(\mathcal{V})$ in the subspace topology of $\mathcal{P}(\mathcal{V})$ (as a subset of $\bar{\mathbb{R}}^{N}_{+}$) for every $t_f>0$.
\end{definition}
Here, we have defined local controllability in terms of the subspace topology of $\mathcal{P}(\mathcal{V})$.  This is because the set $\mathcal{P}(\mathcal{V})$ is invariant for the system \eqref{eq:ctrsys} of controlled ODEs, and hence one cannot expect controllability to a full neighborhood of $\boldsymbol{\mu}^{eq}$. 

Using the above definitions, we will also consider the following problem.
\begin{problem} \label{stlcprb}
Given $\boldsymbol{\mu}^{eq} $, determine whether the system  \eqref{eq:ctrsys} is STLC from $\boldsymbol{\mu}^{eq}.$ 
\end{problem}
%\ke{\sout{We will investigate Problem \ref{ADRctrl2} using a {\it decoupled approach}. First, we will show that Problems \ref{ADEctrl1} and \ref{stlcprb} have positive answers for a large class of target probability densities. These two results will then be combined to answer Problem \ref{ADRctrl2}.}} 

Lastly, we will consider the problem of stabilizing a target stationary distribution $\mathbf{f}^{eq}$ of the process \eqref{eq:SDE2} using time-independent control laws, which are more practical for implementation than time-dependent control laws: 
%\spr{Should we explain why we are specifically considering time-independent parameters?}\kar{//}
\begin{problem}
\label{stabprb}
Given $\mathbf{f}^{eq} : \Omega \rightarrow \mathbb{R}_+^{N}$, where $\mathbf{f}^{eq}=[f_1 ... f_{N}]^T$, determine whether there exist time-independent and possibly spatially-dependent parameters $\mathbf{v}_k:\Omega \rightarrow \mathbb{R}^n$, $K_e : \Omega \rightarrow \mathbb{R}_{+}$ such that the solution of the system
\begin{eqnarray}
\nonumber
&(y_k)_t = D_k\Delta y_k -  \nabla \cdot (\mathbf{v}_k(\mathbf{x}) y_k) +\mathcal{F}_k & in  ~~ \Omega \times [0,\infty) \\ \nonumber
&y_k(\cdot,0) = y^0_k & in ~~ \Omega  \\ \nonumber
& \mathbf{n}\cdot (\nabla y_k - \mathbf{v}_k(\mathbf{x}) y_k)= 0  & in ~~ \partial  \Omega \times [0,\infty), \label{eq:Mainsys3} 
\end{eqnarray}
where $k \in \mathcal{V}$ and $\mathcal{F}_k =\sum_{e \in\mathcal{E}} \sum_{j \in \mathcal{V}}K_e(\mathbf{x})Q^{kj}_e y_j$, satisfies $\lim_{t \rightarrow \infty}y_k(\cdot,t) \rightarrow f_k$ for each $k \in \mathcal{V}$.  
\end{problem}

A similar problem can be posed for the PDE system \eqref{eq:Mainsys1}. The solution to this problem for system \eqref{eq:Mainsys1}, at least for smooth target stationary distributions on bounded domains, can be inferred from existing literature on {\it Gibbs distributions} of Fokker-Planck equations evolving on compact manifolds without boundary \cite{stroock1993logarithmic}. In this paper, we relax the smoothness requirement on the achievable stationary distributions for system \eqref{eq:Mainsys1} (see Lemma \ref{cvglma1}) and also consider (Euclidean) domains with possibly non-smooth boundaries $\partial \Omega$. More importantly, we show in Theorem \ref{solstbpr} that we can construct solutions to Problem \ref{stabprb} by combining the result in Lemma \ref{cvglma1} with the solution to the problem  of stabilizing target stationary distributions of the finite-dimensional system \eqref{eq:ctrsys}, which we previously established in \cite{elamvaz2016lin}. %See Theorem \ref{solstbpr}.

\section{Controllability Analysis}
\label{sec:secctrban}

\subsection{Controllability of PDE System \eqref{eq:Mainsys1}}

In this section, we prove one of the main theorems of this paper. Specifically, we show that the PDE system \eqref{eq:Mainsys1} is controllable to a large class of sufficiently regular target probability densities. We first provide some new definitions that will be used in the subsequent analysis.
%\begin{restatable}[]{theorem}{fta}
%\label{thm:FTA}
%Consider the PDE, 
%\begin{eqnarray}
%\nonumber
%&y_t = a(\mathbf{x})\Delta y &~~ in  ~~ \Omega \times [0,T] \\ \nonumber
%&y(\cdot,0) = y_0 &~~ in ~~ \Omega  \\ \nonumber
%&\mathbf{n} \cdot \nabla y =0 &~~ in ~~ \partial \Omega \times [0,T] \\ 
%\label{eq:varsys1}
%\end{eqnarray}
%Let $a \in L^{\infty}(\Omega)$ be a real-valued function such that $a \geq e$ for some  positive constant $e$. Let $y_0 \in L^2(\Omega)$. Then a unique mild solution $y \in C([0,T];L^2(\Omega))$ of the above PDE exists. Additionally, if there exists a positive constant, $c>0$, such that $y_0
% \geq c$, then $y(\cdot,t) 
% \geq c$ for each $t \in [0, \infty)$.
%\end{restatable}
%\begin{proof}
%See appendix. \kar{proof needs fixing}
%\end{proof}

Given $a \in L^{\infty}(\Omega)$ such that $a \geq c$ for some positive constant $c$, and $\mathcal{D}(\omega_a) = H^1_a(\Omega)$, we define the {\it sesquilinear form} $\omega_{a}:\mathcal{D}(\omega_{a}) \times \mathcal{D}(\omega_{a}) \rightarrow \mathbb{C}$ as
\begin{equation}
\label{eq:Clpgenform1}
\omega_{a}(u,v) = \int_{\Omega}  \nabla ( a (\mathbf{x}) u(\mathbf{x})) \cdot \nabla ( a (\mathbf{x})\bar{v}(\mathbf{x}))d\mathbf{x}
\end{equation} 
for each $u \in \mathcal{D}(\omega_a)$. 
We associate with the form $\omega_a$ an operator $A_a :\mathcal{D}(A_a)  \rightarrow  L^2_a(\Omega)$, defined as $ A_au = v $ if $\omega_a(u,\phi) = \langle v , \phi \rangle_a $ for all $\phi \in \mathcal{D}(\omega_a)$ and for all $u \in \mathcal{D}(A_a) = \lbrace g \in \mathcal{D}(\omega_a): ~ \exists f \in L^2_a(\Omega) ~ \text{s.t.} ~  \omega_a(g,\phi)= \langle f, \phi \rangle_a ~ \forall \phi \in \mathcal{D}(\omega_a) \rbrace$. 

Similarly, given $a \in L^{\infty}(\Omega)$ such that $a \geq c$ for some positive constant $c$ and $\mathcal{D}(\sigma_a) = H^1_a(\Omega)$, we define the {\it sesquilinear form} $\sigma_{a}:\mathcal{D}(\sigma_{a}) \times \mathcal{D}(\sigma_{a}) \rightarrow \mathbb{C}$ as
\begin{equation}
\label{eq:Clpgenform2}
\sigma_{a}(u,v) = \int_{\Omega}  \frac{1}{a(\mathbf{x})} \nabla (a(\mathbf{x}) u(\mathbf{x})) \cdot \nabla ( a(\mathbf{x}) \bar{v}(\mathbf{x})) d\mathbf{x}
\end{equation} 
for each $u \in \mathcal{D}(\sigma_a)$. 
As for the form $\omega_a$, we associate an operator $B_a :\mathcal{D}(B_a)  \rightarrow  L^2_a(\Omega)$ with the form $\sigma_a$. We define this operator as %$B_a :\mathcal{D}(B_a)  \rightarrow  L^2(\Omega)$, by 
$
B_au = v
$
if $\sigma_a(u,\phi) = \langle v , \phi \rangle_a $ for all $\phi \in \mathcal{D}(\sigma_a)$  
and for all $u \in \mathcal{D}(B_a) = \lbrace g \in \mathcal{D}(\sigma_a): \exists f \in L^2_a(\Omega) ~ \text{s.t.} ~ \sigma_a(g,\phi)= \langle f, \phi \rangle_a ~ \forall \phi \in \mathcal{D}(\sigma_a) \rbrace$. 

Note that, formally, $-A_{\mathbf{1}}=-B_{\mathbf{1}}$ is the Laplacian operator $\Delta (\cdot)$ with {\it Neumann boundary condition} $\left(\mathbf{n}\cdot (\nabla \cdot ~)= 0 \mbox{  ~in~  } \partial  \Omega \right)$. For general extension domains $\Omega$, the normal derivative might not make sense since it might not be true that $\mathcal{D}(A_{\mathbf{1}})$ is a subset of $H^2(\Omega)$ \cite{jerison1989functional}. Then, the Neumann boundary condition has to be interpreted in a {\it weak sense}.

Using the above definitions, we derive some preliminary results on the unbounded operators $-A_a$ and $-B_a$. The semigroups generated by these operators will each play an important role in the proof of controllability of system \eqref{eq:Mainsys1}. 

\begin{lemma}
\label{opprop1}
The operators $A_a:\mathcal{D}(A_a) \rightarrow L^2_a(\Omega)$  and $B_a:\mathcal{D}(A_a) \rightarrow L^2_a(\Omega)$ are closed, densely-defined, and self-adjoint. Moreover, the operators have a purely discrete spectrum. 
\end{lemma}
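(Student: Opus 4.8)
The plan is to treat both operators through the \emph{form method}: by construction $A_a$ (respectively $B_a$) is the operator associated in the Hilbert space $L^2_a(\Omega)$ with the sesquilinear form $\omega_a$ (respectively $\sigma_a$), whose common domain is $H^1_a(\Omega)$. The first three claimed properties---closed, densely defined, self-adjoint---will then follow from the classical first representation theorem for forms, once I verify that $\omega_a$ and $\sigma_a$ are densely defined, symmetric, and closed non-negative forms. Symmetry is immediate because $a$, and hence $1/a$, is real-valued: conjugating and interchanging the two arguments in \eqref{eq:Clpgenform1} and \eqref{eq:Clpgenform2} leaves the integrals unchanged. Non-negativity is equally immediate, since $\omega_a(u,u) = \|\nabla(au)\|_2^2 \geq 0$ and, because $a \geq c > 0$, $\sigma_a(u,u) = \int_\Omega \frac{1}{a}|\nabla(au)|^2\, d\mathbf{x} \geq 0$; thus both forms are sectorial with vertex $0$.

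First I would establish closedness. The form norm attached to $\omega_a$ is $\|u\|_a^2 + \omega_a(u,u) = \|u\|_a^2 + \sum_i \|(au)_{x_i}\|_2^2$, which is exactly $\|u\|_{H^1_a}^2$; the form norm attached to $\sigma_a$ differs only by the factor $1/a$ in the gradient term and is therefore equivalent to $\|\cdot\|_{H^1_a}$ because $c \leq a \leq \|a\|_\infty$. Since $\big(H^1_a(\Omega),\|\cdot\|_{H^1_a}\big)$ is complete, both forms are closed. Density of $H^1_a(\Omega)$ in $L^2_a(\Omega)$ is seen most cleanly through the multiplication map $\Phi : z \mapsto az$: using $c \leq a \leq \|a\|_\infty$ one checks that $\Phi$ is a Banach-space isomorphism $L^2_a(\Omega) \to L^2(\Omega)$ that restricts to an isomorphism $H^1_a(\Omega) \to H^1(\Omega)$, so the density of $H^1(\Omega)$ in $L^2(\Omega)$ transports to the weighted spaces. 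With density, symmetry, and closedness in hand, the representation theorem yields that $A_a$ and $B_a$ are self-adjoint, non-negative, closed, and densely defined, and that the operator definitions given before the lemma coincide with the canonical operators of $\omega_a$ and $\sigma_a$.

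The remaining claim---that the spectrum is purely discrete---reduces to showing that the resolvents of $A_a$ and $B_a$ are compact, which for a bounded-below self-adjoint form operator is equivalent to compactness of the embedding $H^1_a(\Omega) \hookrightarrow L^2_a(\Omega)$. Here the same isomorphism $\Phi$ does the work: a direct check gives $\iota_a = \Phi^{-1} \circ \iota \circ \Phi$, where $\iota : H^1(\Omega) \hookrightarrow L^2(\Omega)$ is the unweighted inclusion, so $\iota_a$ inherits compactness from $\iota$. Compactness of $\iota$ is precisely the Rellich--Kondrachov theorem, valid because $\Omega$ is bounded and, by the default assumption, an extension domain \cite{agranovich2015sobolev}. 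I expect this last point to be the only genuinely delicate step: the rest is soft form theory, whereas the compact embedding is exactly where the geometric hypothesis on $\Omega$ enters, and one must take care that the weight $a$---being merely $L^\infty$ and bounded below, with no assumed smoothness---is absorbed through the multiplication isomorphism rather than through any regularity of $a$ itself.
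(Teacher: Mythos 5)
Your proposal is correct and follows essentially the same route as the paper's proof: both arguments invoke the representation theorem for densely defined, symmetric, closed, semibounded forms to get self-adjointness, verify closedness of $\omega_a$ and $\sigma_a$ via the multiplication isomorphism $u\mapsto au$ between the weighted and unweighted Sobolev spaces (with the equivalence of the $\sigma_a$ form norm to $\|\cdot\|_{H^1_a}$ coming from the two-sided bounds on $a$), and obtain discreteness of the spectrum from the compact embedding $H^1(\Omega)\hookrightarrow L^2(\Omega)$ on an extension domain transported to the weighted spaces. Your phrasing in terms of compact resolvents and the factorization $\iota_a=\Phi^{-1}\circ\iota\circ\Phi$ is just a slightly more explicit rendering of the paper's appeal to the compact-embedding criterion for purely discrete spectrum.
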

\begin{proof}
Consider the associated form $\omega_a$. This form is $\it{closed}$, i.e., the space $\mathcal{D} (\omega_a)$ equipped with the norm $\|\cdot\|_{\omega_a}$, given by 
$
\|u\|_{\omega_a} = (\|u\|^2_a+ \omega_a(u,u))^{1/2}
$
for each $u \in \mathcal{D}(\omega_a)$, is complete. This is true due to the fact that the multiplication map $u \mapsto  a \cdot u$ is an isomorphism from $H^1_a(\Omega)$ to $H^1(\Omega)$ and $H^1(\Omega)$ is a Banach space. Moreover, the space $H^1_a(\Omega)$ is dense in $L^2_a(\Omega)$. This follows from the inequality $\|au-av\|_2 \leq \|a\|_{\infty}\|u-v\|_2$ for each $u, v \in L^2(\Omega)$, the fact that the spaces $L^2_{\mathbf{1}}(\Omega)$ and $L^2_{a}(\Omega)$ are isomorphic, and the fact that the $H^1(\Omega)$ is dense in $L^2(\Omega)$. In addition, it follows from the definition of the form $\omega_a$ that $\omega_a$ is \textit{symmetric}, meaning that $\omega_a(u,v) = \overline{\omega_a (v,u)}$ for each $u, v \in \mathcal{D}(\omega_a)$. The form $\omega_a$ is also \textit{semibounded}, i.e., there exists $m \in \mathbb{R}$ such that $\omega_a(u,u) \geq m \|u\|^2_a$ for each $u \in \mathcal{D}(\omega_a)$. Particularly, this inequality is true for $m=0$ since $\omega_a(u,u)$ is non-negative for all $u \in \mathcal{D}(\omega_a)$. Hence, it follows from \cite{schmudgen2012unbounded}[Theorem 10.7] that the operator $A_a$ is self-adjoint. To establish the discreteness of the spectrum of $A_a$, we note that $H^1(\Omega)$ is compactly embedded in $L^2(\Omega)$ whenever $\Omega$ is an extension domain (Definition \ref{extn}). This implies that when $H^1_a(\Omega) = \mathcal{D}(\omega_a)$ is equipped with the norm $\|\cdot\|_{\omega_a}$, then it is  also compactly embedded in $L^2_a(\Omega)$. From \cite{schmudgen2012unbounded}[Proposition 10.6], it follows that $A_a$ has a purely discrete spectrum.

For the operator $B_a$, we only check that the form $\sigma_a$ is closed. The rest of the proof follows exactly the same arguments as the proof for $A_a$. To prove that the form $\sigma_a$ is closed, we need to prove that the space $\mathcal{D}(\sigma_a)$ equipped with the norm $\|\cdot\|_{\sigma_a}$, given by 
$
\|u\|_{\sigma_a} = (\|u\|^2_a+ \sigma_a(u,u))^{1/2}
$
for each $u \in \mathcal{D}(\sigma_a)$, is complete. Note that due to the lower bound $c$ on $a$, there exist constants $k_1, k_2 > 0$ such that 
\begin{eqnarray}
&\hspace{-2cm} k_1 \int_{\Omega}  \nabla ( a (\mathbf{x}) u(\mathbf{x})) \cdot \nabla ( a (\mathbf{x})\bar{u}(\mathbf{x}))d\mathbf{x} \nonumber  \\  
&~~\leq~  \int_{\Omega} \frac{1}{a (\mathbf{x})} \nabla ( a (\mathbf{x}) u(\mathbf{x})) \cdot \nabla ( a (\mathbf{x})\bar{u}(\mathbf{x}))d\mathbf{x} \nonumber \\  
&\leq~  k_2  \int_{\Omega}  \nabla ( a (\mathbf{x}) u(\mathbf{x})) \cdot \nabla ( a (\mathbf{x})\bar{u}(\mathbf{x}))d\mathbf{x}
\end{eqnarray}
 for all $u \in H^1_a(\Omega)$. From these inequalities, it follows that 
$
k_1 \|u\|_{H^1_a} \leq \|u\|_{\sigma_a} \leq k_2 \|u\|_{H^1_a}
$ for all $u \in \mathcal{D}(\sigma_a)=H^1_a(\Omega)$.
Hence, the form $\sigma_a$ is closed. Due to the symmetry and semiboundedness of this form, it follows from \cite{schmudgen2012unbounded}[Theorem 10.7] that the operator is self-adjoint. Since the norm $\|\cdot \|_{\sigma_a}$ is equivalent to the norm $\|\cdot\|_{H^1_a}$, the discreteness of the spectrum of $B_a$ again follows from \cite{schmudgen2012unbounded}[Proposition 10.6] due to the compact embedding of $H^1_a(\Omega)$ in $L^2_a(\Omega)$.
\end{proof}

\begin{corollary}
\label{CorII}
Consider the PDE
\begin{eqnarray}
&y_t = \Delta (a(\mathbf{x})y) &~~ in  ~~ \Omega \times [0,T] \nonumber \\ 
&y(\cdot,0) = y^0 &~~ in ~~ \Omega \nonumber \\ 
&\mathbf{n} \cdot \nabla( a (\mathbf{x})y) =0 &~~ in ~~ \partial \Omega \times [0,T].   
\label{eq:cllp1}
\end{eqnarray}

Let $y^0 \in L^2_a({\Omega})$. Then $-A_a$ generates a semigroup of operators $(\mathcal{T}^A_a(t))_{t \geq 0}$ such that the unique mild solution $y \in C([0,T];L^2_a(\Omega))$ of the above PDE exists and is given by $y(\cdot,t)=\mathcal{T}^A_a(t)y^0$ for all $t \geq 0$. Additionally, the semigroup $(\mathcal{T}^A_a(t))_{t \geq 0}$ is positive.
 Finally, if $\|\mathcal{M}_a  y^0  \|_{\infty} \leq 1$, then $\|\mathcal{M}_a \mathcal{T}^A_a(t)y^0  \|_{\infty} \leq 1$ for all $t\geq 0$. 
\end{corollary}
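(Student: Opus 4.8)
The plan is to recognize $-A_a$ as the generator associated with the non-negative symmetric form $\omega_a$ of \eqref{eq:Clpgenform1}, and then to read off each of the three claims directly from the form. For generation, Lemma \ref{opprop1} already gives that $A_a$ is self-adjoint, and the defining identity yields $\langle A_a u, u\rangle_a = \omega_a(u,u) \ge 0$, so $-A_a$ is self-adjoint and $\le 0$. By the standard semigroup theory for such operators (cf. \cite{engel2000one}), $-A_a$ generates a strongly continuous (in fact analytic) contraction semigroup $(\mathcal{T}^A_a(t))_{t\ge 0}$ on $L^2_a(\Omega)$. Integrating by parts in the relation $\omega_a(u,\phi) = \langle v,\phi\rangle_a$ identifies $-A_a u = \Delta(a u)$ together with the weak Neumann condition $\mathbf{n}\cdot\nabla(a u)=0$, so \eqref{eq:cllp1} is precisely $\dot y = -A_a y$, whose unique mild solution is $y(\cdot,t)=\mathcal{T}^A_a(t)y^0$.

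For positivity I would verify the first Beurling--Deny criterion for $\omega_a$: for real $u \in H^1_a(\Omega)=\mathcal{D}(\omega_a)$ one needs $|u|\in\mathcal{D}(\omega_a)$ and $\omega_a(|u|,|u|)\le\omega_a(u,u)$. This is immediate after the substitution $w=au$, using that $u\mapsto au$ is the isomorphism of $H^1_a(\Omega)$ onto $H^1(\Omega)$ already invoked in Lemma \ref{opprop1}: since $a|u|=|w|$ and $H^1(\Omega)$ is stable under $|\cdot|$ with $|\nabla|w||=|\nabla w|$ a.e., one actually obtains the equality $\omega_a(|u|,|u|)=\int_\Omega |\nabla|w||^2\,d\mathbf{x}=\int_\Omega|\nabla w|^2\,d\mathbf{x}=\omega_a(u,u)$. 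The criterion then delivers positivity of $(\mathcal{T}^A_a(t))_{t\ge 0}$.

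For the $L^\infty$ estimate I would use the invariance-of-convex-sets criterion for form-generated semigroups (Ouhabaz's criterion). It is cleanest to conjugate by $\mathcal{M}_a$: the same substitution $w=au$ shows $\mathcal{M}_a$ is a unitary map of $L^2_a(\Omega)$ onto $L^2_{1/a}(\Omega)$ that carries $\omega_a$ to the Dirichlet form $\hat\omega(w,z)=\int_\Omega\nabla w\cdot\nabla\bar z\,d\mathbf{x}$ on $H^1(\Omega)$, and hence $\mathcal{T}^A_a(t)$ to the semigroup $\hat{\mathcal{T}}(t)=\mathcal{M}_a\mathcal{T}^A_a(t)\mathcal{M}_a^{-1}$ generated by $a\Delta$ (weak Neumann) on $L^2_{1/a}(\Omega)$. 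The claim is exactly that $\hat{\mathcal{T}}(t)$ leaves the closed convex set $C=\{w:\|w\|_\infty\le 1\}$ invariant; since the semigroup preserves real functions it suffices to treat real $w$. Because the weight $1/a$ is pointwise positive, the $L^2_{1/a}$-metric projection onto $C$ is still the pointwise truncation $Pw=(\,w\wedge 1)\vee(-1)$, which maps $H^1(\Omega)$ into itself; and on each of $\{|w|\le 1\}$, $\{w>1\}$, $\{w<-1\}$ the gradients $\nabla(Pw)$ and $\nabla(w-Pw)$ have disjoint supports, giving $\hat\omega(Pw,w-Pw)=0\ge 0$. The invariance criterion then yields $\|\hat{\mathcal{T}}(t)w^0\|_\infty\le\|w^0\|_\infty$, which is the stated bound with $w^0=\mathcal{M}_a y^0$.

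The routine parts are generation and positivity; I expect the delicate step to be the $L^\infty$ estimate, because for a general extension domain the Neumann condition holds only weakly, so the classical parabolic maximum principle is unavailable. The form-based invariance argument sidesteps this, but it requires care that the weight $a$ enters only through the inner product (not the projection) and that $P$ really is the metric projection in $L^2_{1/a}(\Omega)$.
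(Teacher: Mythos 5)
Your proposal is correct and follows essentially the same route as the paper: generation via self-adjointness and non-negativity of the form (the paper phrases this through the Lumer--Phillips corollary), positivity via the Beurling--Deny criterion (the paper cites Ouhabaz's Theorem~2.7), and the $L^\infty$ bound via Ouhabaz's convex-set-invariance criterion with a truncation projection. The only cosmetic differences are that you conjugate by $\mathcal{M}_a$ before applying the invariance criterion and treat the two-sided set $\{\|w\|_\infty\le 1\}$ directly, whereas the paper works in $L^2_a(\Omega)$ with the one-sided set $\{u\le 1/a\}$ and the projection $Pu={\rm Re}(u)\wedge 1/a$.
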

\begin{proof}
First, we note that the operator $-A_a$ is \textit{dissipative}, i.e., $\|(\lambda + A_a)u\|_a \geq \lambda \|u\|_a$ for all $\lambda>0$ and all $u \in \mathcal{D}(A_a)$, since $\omega_a(u,u) \geq 0$ for all  $u \in \mathcal{D}(\omega_a)$. Next, we note that $-A_a$ is self-adjoint, and hence the adjoint operator $-A^*_a$ is dissipative as well. It follows from a corollary of the {\it Lumer-Phillips theorem} \cite{engel2000one}[Corollary II.3.17] that $-A_a$ generates a semigroup of operators $(\mathcal{T}^A_a(t))_{t \geq 0}$ that solves the PDE \eqref{eq:cllp1} in the mild sense. 

Second, we establish the positivity of the semigroup. Toward this end, we note that the absolute value function $|\cdot|: \mathbb{R} \rightarrow \mathbb{R}$ is Lipschitz. Hence, it follows from \cite{ziemer2012weakly}[Theorem 2.1.11] that $v \in H^1(\Omega)$ implies that $|v| \in H^1(\Omega)$ whenever $v$ is only real-valued. This implies that if $u \in \mathcal{D}(\mathcal{\omega}_a)$, then $|{\rm Re}(u)| \in \mathcal{D}(\mathcal{\omega}_a)$, where ${\rm Re}(\cdot)$ denotes the real component of its argument. Then the  positivity of the semigroup follows from \cite{ouhabaz2009analysis}[Theorem 2.7]. 

To prove the last statement in the corollary, consider the closed convex set $C = \lbrace u \in L^2(\Omega); ~{\rm Re}( u) = u, ~ u(\mathbf{x}) \leq 1/a(\mathbf{x}) ~ {a.e.} 	~\mathbf{x} \in \Omega \rbrace$. The projection of a function $u \in L^2_a(\Omega)$ onto the set $C$ can be represented by the (nonlinear) operator $P$, given by $P u = {\rm Re}(u) \wedge 1/a = \frac{1}{2}{\rm Re} (u) + \frac{1}{2}|{\rm Re} (u)-1/a|$. If $u \in \mathcal{D}(\omega_a)$, then it follows from the chain rule \cite{ziemer2012weakly}[Theorem 2.1.11] that $\nabla (aPu) = \frac{1}{2}{\rm sign}({\rm Re}(au) - 1 ) \nabla ({\rm Re}(au))+\frac{1}{2}\nabla ({\rm Re}(au))$. Hence, it follows that $\omega_a(Pu,Pu) \leq   \omega_a(u,u)$ for all $u \in \mathcal{D}(\Omega_a)$. According to \cite{ouhabaz2009analysis}[Theorem 2.3], this implies that the set $C$ is invariant under the positive semigroup $(\mathcal{T}^A_a(t))_{t \geq 0}$;  therefore, we can conclude that if $\| \mathcal{M}_ay^0  \|_{\infty} \leq 1$, then $\|\mathcal{M}_a \mathcal{T}^A_a(t)y^0  \|_{\infty} \leq 1$ for all $t\geq 0$.
\end{proof}

Using the same arguments as in the proof of Corollary \ref{CorII}, we have the following result.

\begin{corollary}
\label{CorIII}
The operator $-B_a$ generates a semigroup of operators $(\mathcal{T}^B_a(t))_{t \geq 0}$ on $L^2_a(\Omega)$. If additionally $a \in W^{1,\infty}(\Omega)$ and $y^0 \in L^2_a(\Omega)$, then $y(\cdot,t)=\mathcal{T}^B_a(t)y^0$ is a mild solution of the PDE
\begin{eqnarray}
&y_t = \Delta y - \nabla \cdot(\frac {\nabla f(\mathbf{x})}{f(\mathbf{x})} ~ y) &~~ in  ~~ \Omega \times [0,T] \nonumber \\ 
&y(\cdot,0) = y^0 &~~ in ~~ \Omega \nonumber \\ 
&\mathbf{n} \cdot ( \nabla y - \frac {\nabla f(\mathbf{x})}{f(\mathbf{x})}y ) =0 &~~ in ~~ \partial \Omega \times [0,T],   
\label{eq:cllp2}
\end{eqnarray}
with $f = 1/a \in W^{1,\infty}(\Omega)$.  Moreover, the semigroup $(\mathcal{T}^B_a(t))_{t \geq 0}$ is positive.
\end{corollary}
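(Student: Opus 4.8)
The plan is to follow the three-part structure of the proof of Corollary \ref{CorII} verbatim, replacing the form $\omega_a$ by $\sigma_a$ and the operator $A_a$ by $B_a$, and to insert one genuinely new ingredient: an integration-by-parts computation identifying the abstract operator $-B_a$ with the differential operator in \eqref{eq:cllp2}. For the generation statement I would invoke Lemma \ref{opprop1}, which already supplies that $B_a$ is closed, densely defined, and self-adjoint with discrete spectrum. Since $a \geq c > 0$, the integrand $\frac{1}{a}|\nabla(au)|^2$ is pointwise non-negative, so $\sigma_a(u,u) \geq 0$ for every $u \in \mathcal{D}(\sigma_a)$; hence $-B_a$ is dissipative, and by self-adjointness $-B_a^* = -B_a$ is dissipative as well. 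The Lumer--Phillips corollary \cite{engel2000one}[Corollary II.3.17] then produces a strongly continuous contraction semigroup $(\mathcal{T}^B_a(t))_{t\ge0}$ on $L^2_a(\Omega)$ generated by $-B_a$, solving $\dot y = -B_a y$ in the mild sense. This step uses only $a \in L^{\infty}(\Omega)$ with $a \geq c$, not the stronger hypothesis $a \in W^{1,\infty}(\Omega)$.

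Assuming now $a \in W^{1,\infty}(\Omega)$, I would identify $-B_a$ with the operator in \eqref{eq:cllp2}. Writing $f = 1/a \in W^{1,\infty}(\Omega)$ and using $\frac{1}{a} = f$, a direct computation gives $f\,\nabla(u/f) = \nabla u - \frac{\nabla f}{f}\,u$, so that formally integrating by parts in the defining identity $\sigma_a(u,\phi) = \langle B_a u, \phi\rangle_a$ yields $-B_a u = \nabla\cdot\big(f\,\nabla(u/f)\big) = \Delta u - \nabla\cdot\big(\tfrac{\nabla f}{f}\,u\big)$, together with the co-normal condition $\mathbf{n}\cdot\big(\nabla u - \tfrac{\nabla f}{f}u\big) = 0$ on $\partial\Omega$. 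The hypothesis $a \in W^{1,\infty}(\Omega)$ is exactly what makes $\frac{\nabla f}{f} = -\frac{\nabla a}{a}$ an element of $L^{\infty}(\Omega)$, so that both the drift term and the boundary condition are well defined. For a general extension domain the normal trace need not exist pointwise, so---as already noted after Lemma \ref{opprop1}---the boundary condition is read in the weak sense encoded by $\sigma_a$; the representation $y(\cdot,t) = \mathcal{T}^B_a(t)y^0$ of the mild solution is then immediate from the generation result. I expect this identification to be the main obstacle, since one must not only verify the integration by parts but also argue that the weak co-normal condition built into $\sigma_a$ is the correct interpretation of the stated boundary condition on a domain whose boundary may be too rough for a classical normal trace.

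Finally, for positivity I would reuse the Beurling--Deny argument from Corollary \ref{CorII}. Because $\mathcal{D}(\sigma_a) = H^1_a(\Omega) = \mathcal{D}(\omega_a)$ and $a$ is real and positive, we have $a\,|{\rm Re}(u)| = |{\rm Re}(au)|$; since $u \in \mathcal{D}(\sigma_a)$ means $au \in H^1(\Omega)$, and the absolute value of a real-valued $H^1$ function is again in $H^1(\Omega)$ by \cite{ziemer2012weakly}[Theorem 2.1.11], it follows that $|{\rm Re}(u)| \in H^1_a(\Omega) = \mathcal{D}(\sigma_a)$. The invariance criterion \cite{ouhabaz2009analysis}[Theorem 2.7] then applies exactly as in the proof of Corollary \ref{CorII} to give positivity of $(\mathcal{T}^B_a(t))_{t\ge0}$, completing the proof.
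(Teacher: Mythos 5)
Your proposal is correct and mirrors the paper's intended argument: the paper proves Corollary \ref{CorIII} simply by asserting that ``the same arguments as in the proof of Corollary \ref{CorII}'' apply (dissipativity of the self-adjoint operator from Lemma \ref{opprop1} plus Lumer--Phillips for generation, and the Beurling--Deny invariance criterion for positivity), and handles the identification of $-B_a$ with the differential operator in \eqref{eq:cllp2} via the weak formulation \eqref{eq:wkPDE} and the product rule, exactly as you do. Your integration-by-parts computation $f\,\nabla(u/f)=\nabla u - \tfrac{\nabla f}{f}u$ and your remark that the co-normal boundary condition must be read weakly are precisely the content of the paper's discussion following the corollary.
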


When $f \in C^{\infty}(\bar{\Omega})$,} the representation of the operator $\Delta (\cdot) - \nabla \cdot(\frac {\nabla f(\mathbf{x})}{f(\mathbf{x})} ~ \cdot~)$ in the form $ \nabla \cdot( f \nabla (\frac{1}{f} ~ \cdot ~))$ is a well-known technique in the literature on Fokker-Planck equations for SDEs with drifts generated by potential functions \cite{stroock1993logarithmic}. In Corollary \ref{CorIII}, however, since $a$ is only once weakly differentiable and $\mathcal{D}(\sigma_a) = H^1_a(\Omega)$ (or equivalently, $H^1(\Omega)$), the operation $\Delta y$ is not admissible unless $a$ has additional regularity. Hence, the mild solution $y$ should be interpreted as the weak solution of the PDE \eqref{eq:cllp2} when $ a,f \in W^{1,\infty}(\Omega)$; i.e., it can be shown that $y$ satisfies
\begin{eqnarray}
\label{eq:wkPDE}
&& \hspace{-10mm} \langle y_t , \phi \rangle_{V^*,V} =  \sigma_{a}(u,\phi) \\
%\nonumber \\ 
&& \hspace{-10mm} = \int_\Omega  \nabla y(\mathbf{x}, t) \cdot \nabla \phi (\mathbf{x}) d \mathbf{x} %\nonumber \\ 
%&~&
+  \int_\Omega   \frac{\nabla f(\mathbf{x})}{f(\mathbf{x})} y(\mathbf{x}, t) \cdot \nabla \phi (\mathbf{x}) d \mathbf{x} \nonumber
\end{eqnarray}
for all $\phi \in V$ and a.e. $t \in [0,T]$, where $V = H^1(\Omega)$ and $V^*$ is the dual space of $V$. Here, the second equality follows from the product rule (Theorem \ref{prodrule}) and the fact that $a,f \in W^{1,\infty}(\Omega)$. Note that in the weak formulation \eqref{eq:wkPDE}, the second-order differentiability of $f$ or $y$ is not required. That the mild solution of a linear PDE is also a weak solution follows from standard energy estimates and weak-topology arguments. See also \cite{ball1977strongly}.

Next, we establish that the semigroups $ (\mathcal{T}^A_a(t))_{t \geq 0}$ and $(\mathcal{T}^B_a(t))_{t \geq 0}$ are  {\it analytic} \cite{lunardi2012analytic}. Additionally, we will show some mass-conserving properties and long-term stability properties of these semigroups. 

\begin{lemma}
\label{cvglma1}
The semigroups $(\mathcal{T}^A_a(t))_{t \geq 0}$ and $(\mathcal{T}^B_a(t))_{t \geq 0}$ that are generated by the operators $-A_a$ and $-B_a$, respectively, are analytic. Additionally, these semigroups have the following {\it mass conservation property}: if $y^0 \geq 0$ and $\int_{\Omega}y^0(\mathbf{x})d \mathbf{x} =1$, then $\int_{\Omega} (\mathcal{T}^A_a(t)y^0)(\mathbf{x})d \mathbf{x} = \int_{\Omega} (\mathcal{T}^B_a(t)y^0)(\mathbf{x})d \mathbf{x} = 1 $ for all $t \geq 0$. Moreover, $0$ is a simple eigenvalue of the operators $-A_a$ and $-B_a$. Hence, if $y^0 \geq 0$ and $\int_{\Omega}y^0(\mathbf{x})d \mathbf{x} = \int_{\Omega}f(\mathbf{x})d \mathbf{x} = 1$, then the following estimates hold:
\begin{eqnarray}
\|\mathcal{T}^A_a(t)y^0-f\|_{a} &\leq& M_0 e^{-\lambda t}\|y^0-f\|_a, \label{eq:expcn11} \\ 
\|\mathcal{T}^B_a(t)y^0-f\|_{a} &\leq&  \tilde{M}_0 e^{-\tilde{\lambda} t}\|y^0-f\|_a \label{eq:expcn12}
\end{eqnarray}
for some positive constants $M_0, \tilde{M}_0, \lambda, \tilde{\lambda}$ and all $t \geq 0$.
\end{lemma}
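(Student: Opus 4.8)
The plan is to handle the four assertions in turn, in each case exploiting that $A_a$ and $B_a$ are self-adjoint, nonnegative, and have purely discrete spectrum (Lemma \ref{opprop1}), and that the equilibrium density $f = 1/a$ (as in Corollary \ref{CorIII}, where $\int_\Omega f\,d\mathbf{x} = \int_\Omega 1/a\,d\mathbf{x} = 1$) is precisely the eigenfunction sitting at the top of the spectrum.

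\emph{Analyticity} I would dispatch immediately. Since the forms $\omega_a$ and $\sigma_a$ are symmetric and nonnegative, the associated self-adjoint operators $A_a, B_a$ are bounded below by $0$; a self-adjoint semibounded operator is sectorial, so $-A_a$ and $-B_a$ generate bounded analytic semigroups (see \cite{lunardi2012analytic} and the form-theoretic results of \cite{ouhabaz2009analysis}). In particular the spectra of $-A_a$ and $-B_a$ lie in $(-\infty,0]$, which I will use below.

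Next I would identify the equilibrium and establish \emph{mass conservation}. A direct computation shows $f = 1/a$ lies in both kernels: since $af = 1$ we have $\nabla(af) = 0$, so $\omega_a(f,\phi) = \sigma_a(f,\phi) = 0$ for every $\phi \in H^1_a(\Omega)$, i.e. $A_a f = B_a f = 0$ and hence $\mathcal{T}^A_a(t)f = \mathcal{T}^B_a(t)f = f$. The key observation is that the \emph{unweighted} integral equals the $\langle\cdot,\cdot\rangle_a$-pairing against $1/a$, namely $\int_\Omega u\,d\mathbf{x} = \langle u, 1/a\rangle_a$. Using that $\mathcal{T}^A_a(t)$ is self-adjoint (as $-A_a$ is) and fixes $1/a$, I get $\int_\Omega \mathcal{T}^A_a(t)y^0\,d\mathbf{x} = \langle \mathcal{T}^A_a(t)y^0, 1/a\rangle_a = \langle y^0, \mathcal{T}^A_a(t)(1/a)\rangle_a = \langle y^0, 1/a\rangle_a = \int_\Omega y^0\,d\mathbf{x} = 1$, and identically for $\mathcal{T}^B_a$. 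For \emph{simplicity} of the eigenvalue $0$ I would argue straight from the forms: if $A_a u = 0$ then $\omega_a(u,u) = \langle A_a u, u\rangle_a = 0$, i.e. $\int_\Omega |\nabla(au)|^2\,d\mathbf{x} = 0$; since $\Omega$ is connected, $au$ is constant a.e., forcing $u \in \mathrm{span}\{1/a\}$. The same computation works for $B_a$, where $a \geq c > 0$ turns $\sigma_a(u,u) = 0$ into $\nabla(au) = 0$. Thus $\ker(A_a) = \ker(B_a) = \mathrm{span}\{1/a\}$ is one-dimensional.

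Finally, the exponential estimates \eqref{eq:expcn11}--\eqref{eq:expcn12}, which I expect to be the main obstacle, follow by combining discreteness with a spectral gap. Because the spectrum of $-A_a$ is discrete, real, contained in $(-\infty,0]$, and $0$ is simple, $0$ is isolated: the spectrum meets $[-\lambda,0]$ only at $0$ for some $\lambda > 0$ (and analogously $\tilde\lambda > 0$ for $-B_a$). Since $\int_\Omega(y^0 - f)\,d\mathbf{x} = 0$ under the stated normalization, $\langle y^0 - f, 1/a\rangle_a = 0$, so $y^0 - f$ is $\langle\cdot,\cdot\rangle_a$-orthogonal to $\ker(A_a)$. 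Because $\mathcal{T}^A_a(t)f = f$, linearity gives $\mathcal{T}^A_a(t)y^0 - f = \mathcal{T}^A_a(t)(y^0 - f)$, and the spectral theorem for the self-adjoint generator bounds the semigroup by $e^{-\lambda t}$ on the orthogonal complement of the kernel, giving \eqref{eq:expcn11} with $M_0 = 1$; \eqref{eq:expcn12} follows identically with $\tilde\lambda$. The two delicate points are verifying that $0$ is genuinely isolated so that $\lambda, \tilde\lambda > 0$ (which rests on the discreteness from Lemma \ref{opprop1} together with simplicity) and keeping the weighted inner product straight, so that conservation of unweighted mass and $\langle\cdot,\cdot\rangle_a$-orthogonality to $\ker$ are seen to be the same condition.
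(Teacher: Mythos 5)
Your proof is correct, and its overall skeleton (analyticity from self-adjointness and nonnegativity; mass conservation via the pairing $\int_\Omega u\,d\mathbf{x}=\langle u,1/a\rangle_a$; simplicity of the eigenvalue $0$; decay from the spectral gap) matches the paper's. The execution differs in two places, both legitimately. For simplicity of $0$, you argue directly from the form: $\omega_a(u,u)=0$ forces $\nabla(au)=0$, hence $au$ constant on the connected set $\Omega$, so $\ker A_a=\ker B_a=\mathrm{span}\{1/a\}$; the paper instead invokes Poincar\'e's inequality for the Neumann Laplacian $A_{\mathbf 1}$ and transfers the conclusion through the factorization $A_a=A_{\mathbf 1}\mathcal{M}_a$. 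Your route is more elementary, but note that Poincar\'e's inequality simultaneously delivers a quantitative lower bound on the first nonzero eigenvalue, whereas you must separately appeal to discreteness of the spectrum (Lemma \ref{opprop1}) to know $0$ is isolated --- which you do, and which is valid since a purely discrete spectrum consists of isolated eigenvalues of finite multiplicity. For the estimates \eqref{eq:expcn11}--\eqref{eq:expcn12}, you use the spectral theorem for the self-adjoint generator restricted to the $\langle\cdot,\cdot\rangle_a$-orthogonal complement of the kernel, obtaining the sharper constant $M_0=1$; the paper instead routes the argument through immediate compactness of the analytic semigroup and the spectral decomposition of \cite{engel2000one}[Corollary V.3.3], identifying the residue at $0$ with the projection onto $\mathrm{span}\{f\}$. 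The paper's machinery is heavier here but is the one that survives in the later, non-self-adjoint coupled systems (Proposition \ref{eq:stabsc}, Theorem \ref{solstbpr}), where your spectral-theorem shortcut would not apply. Your mass-conservation argument (self-adjointness of $\mathcal{T}^A_a(t)$ plus invariance of $1/a$) is a clean dual of the paper's computation with the mild-solution formula and $\omega_a(\cdot,1/a)=0$; both are fine since $1/a\in\mathcal{D}(A_a)\cap\mathcal{D}(B_a)$ with $A_a(1/a)=B_a(1/a)=0$.
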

\begin{proof}
The operators $A_a$ and $B_a$ are self-adjoint and positive semi-definite. Hence, their spectra lie in $[0,\infty)$. From this, it follows that the corresponding semigroups generated by $-A_a$ and $-B_a$ are analytic from the definition of an analytic semigroup \cite{lunardi2012analytic}[Chapter II]. Let $\int_\Omega y^0(\mathbf{x}) d \mathbf{x} = 1 $ such that $y^0 \in L^2_a(\Omega)$. Then $\int_{\Omega}(y(\mathbf{x},t)-y^0( \mathbf{x}))d \mathbf{x} =  -\int_\Omega A_a(\int_0^t y(\mathbf{x},s) ds)d \mathbf{x}= -\omega_a(\int_0^t y(\mathbf{x},s) ds,1/a)=0$ for all $t \geq 0$. Hence, the integral preserving property of the semigroups holds. 
For the exponential stability estimates \eqref{eq:expcn11} and \eqref{eq:expcn12}, we note that since the domain $\Omega$ is a connected bounded extension domain, it follows from Poincar{\'e}'s inequality \cite{leoni2009first}[Theorem 12.23] that there exists a constant $C>0$ such that for all $u \in H^1(\Omega)$, $ \int_{\Omega}|u(\mathbf{x})-u_\Omega|^2 d\mathbf{x} ~\leq~ C \int_{\Omega} |\nabla u (\mathbf{x})|^2d\mathbf{x}$, 
where $u_\Omega = \frac{1}{\mu(\Omega)}\int_{\Omega} u(\mathbf{x})d\mathbf{x}$.
This implies that $0$ is a simple eigenvalue of the Neumann Laplacian operator $A_\mathbf{1}$. 
Since the operator $A_a$ can be written as a composition of operators $A_\mathbf{1}\mathcal{M}_a$, where $\mathcal{M}_a$ is the multiplication map $u \mapsto au$ from $H^1_a(\Omega)$ to $H^1(\Omega)$, it follows that $0$ is also a simple eigenvalue of $A_a$ with the corresponding eigenvector $f = 1/a$. Additionally, for a given $u \in H^1_a(\Omega)$, $\omega_a (u,u) = 0$ iff $\sigma_a (u,u) = 0$ due to the assumed positive lower bound on $a$. Hence, it also holds that $0$ is a simple eigenvalue of the operator $B_a$. Then we can derive the estimates \eqref{eq:expcn11} and \eqref{eq:expcn12} by applying \cite{engel2000one}[Corollary V.3.3] as follows. We note that $\mathcal{D}(A_a) \subset H^1_a(\Omega)$ and $\mathcal{D}(B_a) \subset H^1_a(\Omega)$ are compactly embedded in $L^2_a(\Omega)$, since $H^1_a(\Omega)$ is compactly embedded in $L^2_a(\Omega)$. Thus, since the semigroups $\mathcal{T}^A_a(t)$ and $\mathcal{T}^B_a(t)$ are analytic, which implies that $\mathcal{T}^A_a(t) \in \mathcal{D}(A_a)$ and $\mathcal{T}^B_a(t) \in \mathcal{D}(B_a)$ for all $y^0 \in L^2_a(\Omega)$ and all $t>0$, these semigroups are immediately compact. Moreover, the residue of the first eigenvalue $0$ in \cite{engel2000one}[Corollary V.3.3] is the projection operator onto the subspace spanned by the element $f \in L^2_a(\Omega)$, since $0$ is a simple eigenvalue with the corresponding eigenvector $f$.
\end{proof}

The above result implies that if $\mathbf{v}(\cdot, t) = \nabla f/f$, then the solution of system \eqref{eq:Mainsys1} exponentially converges to $f$ in the $L^2_a$-norm if $f$ is in $W^{1,\infty}(\Omega)$ and is bounded from below by a positive constant. Hence, this choice of $\mathbf{v}(\cdot, t)$ is a possible control law for achieving exponential stabilization of desired probability densities. In the next few results, we verify some regularizing properties of the semigroups considered above, which will be critical to our controllability analysis.

\begin{lemma}
\label{slblma1}
Let $a \in L^{\infty}(\Omega)$ be real-valued and uniformly bounded from below by a positive constant $c_1$. Moreover, let $y^0 \in L^2_a(\Omega)$ such that  $y^0 \geq c_2$ for some positive constant $c_2$. If $(\mathcal{T}^F_a(t))_{t \geq 0}$ is the semigroup generated by the operator $-A_a ~ {\rm or} ~ -B_a$, then $\mathcal{T}^F_a(t)y^0 \geq \frac{c_1 c_2 }{\|a\|_{\infty}}$ for all $t \geq 0$.
\end{lemma}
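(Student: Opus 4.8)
The plan is to exploit the positivity of the semigroups (established in Corollaries \ref{CorII} and \ref{CorIII}) together with the fact, recorded in Lemma \ref{cvglma1}, that the function $f = 1/a$ is a steady state of both $-A_a$ and $-B_a$, since it spans the kernel corresponding to the simple eigenvalue $0$. Because $1/a$ is fixed by the flow, $\mathcal{T}^F_a(t)(1/a) = 1/a$ for all $t \geq 0$, and hence any nonnegative scalar multiple of it is preserved as well. The idea is therefore to bound $y^0$ below by a suitable scalar multiple of $1/a$ and then let positivity and linearity propagate that bound forward in time.

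First I would fix the scaling constant. Since $a \geq c_1$ a.e.\ on $\Omega$, we have $c_1 c_2 (1/a) \leq c_1 c_2 (1/c_1) = c_2 \leq y^0$ a.e., so that $w^0 := y^0 - c_1 c_2 (1/a) \geq 0$. Note that $1/a \in L^{\infty}(\Omega) \subset L^2_a(\Omega)$, because $1/a \leq 1/c_1$ and $\Omega$ is bounded; thus $w^0$ is a genuine element of $L^2_a(\Omega)$ and the decomposition $y^0 = w^0 + c_1 c_2(1/a)$ is valid in the ambient Hilbert space.

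Next I would apply the semigroup and combine linearity with invariance of $1/a$:
\begin{equation*}
\mathcal{T}^F_a(t) y^0 = \mathcal{T}^F_a(t) w^0 + c_1 c_2\, \mathcal{T}^F_a(t)(1/a) = \mathcal{T}^F_a(t) w^0 + c_1 c_2 (1/a).
\end{equation*}
Since $w^0 \geq 0$, positivity of $(\mathcal{T}^F_a(t))_{t \geq 0}$ gives $\mathcal{T}^F_a(t) w^0 \geq 0$, whence $\mathcal{T}^F_a(t) y^0 \geq c_1 c_2 (1/a)$. Finally, from $a \leq \|a\|_{\infty}$ a.e.\ we get $1/a \geq 1/\|a\|_{\infty}$, and therefore $\mathcal{T}^F_a(t) y^0 \geq c_1 c_2 / \|a\|_{\infty}$ for all $t \geq 0$, which is the claimed bound. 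The argument is verbatim the same for $F = A$ and $F = B$, since both rely only on positivity and on $1/a$ lying in the kernel.

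The main point, rather than a genuine obstacle, is recognizing the correct comparison function. The constant $c_2$ that bounds $y^0$ below is \emph{not} itself preserved by the flow, since constants are steady states of $-A_a$ and $-B_a$ only when $a$ is constant; one must instead compare against the true invariant $1/a$, scaled by $c_1 c_2$ so that it sits pointwise below $y^0$. Once this substitution is made, the result is an immediate consequence of linearity and positivity, and the lower bound $c_1$ on $a$ is precisely what ensures both that $1/a$ belongs to $L^2_a(\Omega)$ and that $c_1 c_2 (1/a) \leq y^0$.
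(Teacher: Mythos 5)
Your proof is correct and follows essentially the same route as the paper: the decomposition $y^0 = c_1c_2(1/a) + \bigl(y^0 - c_1c_2(1/a)\bigr)$, the invariance of $1/a$ under the semigroup, and the positivity-preserving property are exactly the ingredients used there. Your write-up is in fact slightly more careful than the paper's, since you verify explicitly that $1/a \in L^2_a(\Omega)$ and spell out why the comparison function must be $1/a$ rather than the constant $c_2$.
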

\begin{proof}
Let $k = c_1 c_2$. Then we know that $a \cdot y^0 \geq k$. Hence, we can decompose the initial condition as $y^0 = kf +(y^0 -kf)$, where $f = 1/a$. Note that $y^0 - kf$ is positive and $\mathcal{T}^B_a(t) f = f$ for all $t \geq 0$. Therefore, it follows from the positivity preserving property of the semigroup (Corollary \ref{CorIII}) that $\mathcal{T}^B_a(t) y^0 \geq k / \|a\|_{\infty}$ for all $t \geq 0$.
\end{proof}

	 In order to prove the next result (Proposition \ref{samedom}), we will use the following well-known product rule for Sobolev functions.
\begin{theorem}
	\label{prodrule}
	\textbf(Product Rule) Let $\Omega \subset \mathbb{R}^n$ be an open bounded set. Suppose that $u \in H^1(\Omega)$ and $v \in W^{1,\infty}(\Omega)$. Then $u \cdot v \in H^1(\Omega)$, and the weak derivatives of the product $u \cdot v$ are given by $(uv)_{x_i} = u_{x_i}v +v_{x_i}u$ for each $i \in \lbrace 1,...,n\rbrace$.
\end{theorem}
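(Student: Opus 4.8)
The plan is to verify the claimed weak-derivative formula directly from the definition of weak derivative, reducing to the classical product rule by a mollification argument in the smooth factor. I begin with the integrability assertions, which are immediate from the essential boundedness of $v$ and its derivatives: since $\|uv\|_2 \le \|v\|_\infty\|u\|_2$, we have $uv \in L^2(\Omega)$, and since $\|u_{x_i}v\|_2 \le \|v\|_\infty\|u_{x_i}\|_2$ together with $\|v_{x_i}u\|_2 \le \|v_{x_i}\|_\infty\|u\|_2$, the candidate derivative $u_{x_i}v + v_{x_i}u$ also lies in $L^2(\Omega)$. It then remains to show that for every $\phi \in C_c^\infty(\Omega)$ one has $\int_\Omega uv\,\phi_{x_i}\,d\mathbf{x} = -\int_\Omega (u_{x_i}v + v_{x_i}u)\phi\,d\mathbf{x}$.

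Next I localize and smooth the $H^1$-factor. Fixing $\phi$ with compact support $K$, I choose an open set $\Omega'$ with $K \subset \Omega' \subset\subset \Omega$ and mollify $u$ on $\Omega'$, setting $u_\epsilon = \eta_\epsilon * u$ for a standard mollifier $\eta_\epsilon$. For $\epsilon$ small, $u_\epsilon \in C^\infty(\Omega')$, and by the standard properties of mollification together with the identity $(u_\epsilon)_{x_i} = \eta_\epsilon * u_{x_i}$, one has $u_\epsilon \to u$ and $(u_\epsilon)_{x_i} \to u_{x_i}$ in $L^2(\Omega')$ as $\epsilon \to 0$.

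The crucial step is establishing the product rule for the pair $(u_\epsilon, v)$, where $u_\epsilon$ is smooth but $v$ is only $W^{1,\infty}(\Omega)$, so that no classical derivative of $v$ is available. I handle this with a test-function trick: since $u_\epsilon \in C^\infty(\Omega')$, the function $u_\epsilon\phi$ lies in $C_c^\infty(\Omega')$ and is therefore admissible in the definition of the weak derivative of $v$, giving $\int_\Omega v\,(u_\epsilon\phi)_{x_i}\,d\mathbf{x} = -\int_\Omega v_{x_i}\,u_\epsilon\phi\,d\mathbf{x}$. Expanding $(u_\epsilon\phi)_{x_i} = (u_\epsilon)_{x_i}\phi + u_\epsilon\phi_{x_i}$ by the ordinary product rule for the two smooth factors and rearranging yields $\int_\Omega u_\epsilon v\,\phi_{x_i}\,d\mathbf{x} = -\int_\Omega \big((u_\epsilon)_{x_i}v + u_\epsilon v_{x_i}\big)\phi\,d\mathbf{x}$. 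Thus the desired formula holds exactly whenever the $H^1$-factor is smooth, with no differentiability demanded of $v$ beyond the mere existence of its weak derivative.

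Finally I pass to the limit $\epsilon \to 0$. Because $v\phi_{x_i}$, $v\phi$, and $v_{x_i}\phi$ all belong to $L^2(\Omega')$, being products of the bounded functions $v, v_{x_i}$ with the smooth, compactly supported $\phi$ and $\phi_{x_i}$, and because $u_\epsilon \to u$ and $(u_\epsilon)_{x_i} \to u_{x_i}$ in $L^2(\Omega')$, each of the three integrals above converges to its counterpart with $u_\epsilon$ replaced by $u$. This produces $\int_\Omega uv\,\phi_{x_i}\,d\mathbf{x} = -\int_\Omega (u_{x_i}v + v_{x_i}u)\phi\,d\mathbf{x}$ for the fixed $\phi$; as $\phi \in C_c^\infty(\Omega)$ was arbitrary, the weak-derivative formula and the membership $uv \in H^1(\Omega)$ follow. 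I expect the only genuine obstacle to be the middle step, where the low regularity of $v$ blocks a direct appeal to calculus; absorbing $u_\epsilon$ into the test function circumvents this, and the uniform $L^\infty$ control on $v$ and $v_{x_i}$ furnished by $v \in W^{1,\infty}(\Omega)$ is precisely what makes the concluding passage to the limit routine.
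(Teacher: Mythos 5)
Your proof is correct. Note that the paper itself offers no proof of this statement: it is invoked as a ``well-known product rule for Sobolev functions'' and used as a black box in Proposition \ref{samedom} and elsewhere, so there is no in-paper argument to compare against. Your argument is a clean, self-contained derivation, and the two points where such proofs usually go wrong are both handled properly: (i) you use only \emph{interior} mollification on $\Omega' \subset\subset \Omega$ containing the support of the test function, so no boundary regularity of $\Omega$ is needed (consistent with the hypothesis that $\Omega$ is merely open and bounded); and (ii) you avoid mollifying $v$, which would be problematic since $\eta_\epsilon * v_{x_i} \to v_{x_i}$ holds only in $L^p_{\rm loc}$ for $p < \infty$ and not in $L^\infty$, by instead absorbing the smooth $u_\epsilon$ into the test function and appealing directly to the definition of the weak derivative of $v$. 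The passage to the limit is justified exactly as you say, since $v\phi_{x_i}$ and $v_{x_i}\phi$ are fixed $L^2$ functions supported in $\Omega'$ and $u_\epsilon, (u_\epsilon)_{x_i}$ converge in $L^2(\Omega')$. The only pedantic point worth making explicit is that $\epsilon$ must be taken smaller than ${\rm dist}(\Omega', \partial\Omega)$ so that $u_\epsilon$ is defined and smooth on a neighborhood of $\overline{\Omega'}$ and $(u_\epsilon)_{x_i} = \eta_\epsilon * u_{x_i}$ there; you implicitly assume this and it is standard.
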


\begin{proposition}
\label{samedom}
Let $a \in W^{1,\infty}(\Omega)$ and uniformly bounded from below by a positive constant $c$. Then $\mathcal{D}(A_a) =\mathcal{D}(B_a)$. 
\end{proposition}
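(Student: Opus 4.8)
The plan is to prove both inclusions $\mathcal{D}(A_a)\subseteq\mathcal{D}(B_a)$ and $\mathcal{D}(B_a)\subseteq\mathcal{D}(A_a)$ by a direct computation with the forms, exploiting the fact that both $a$ and $1/a$ lie in $W^{1,\infty}(\Omega)$ (the latter because $a$ is bounded below by $c>0$), so that no second-order regularity of the functions involved is ever needed. First I would reparametrize the test functions. Since the multiplication map $\mathcal{M}_a:H^1_a(\Omega)\to H^1(\Omega)$ is an isomorphism, as $\phi$ ranges over $\mathcal{D}(\omega_a)=\mathcal{D}(\sigma_a)=H^1_a(\Omega)$ the function $\psi=a\phi$ ranges over all of $H^1(\Omega)$. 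Writing $w=ag\in H^1(\Omega)$ and using that $a$ is real-valued, so that $a\bar\phi=\bar\psi$ and $\langle h,\phi\rangle_a=\int_\Omega h\bar\psi\,d\mathbf{x}$, the defining conditions take the equivalent forms: $g\in\mathcal{D}(A_a)$ iff there exists $f\in L^2_a(\Omega)$ with $\int_\Omega\nabla w\cdot\nabla\bar\psi\,d\mathbf{x}=\int_\Omega f\bar\psi\,d\mathbf{x}$ for all $\psi\in H^1(\Omega)$, and $g\in\mathcal{D}(B_a)$ iff there exists $\tilde f\in L^2_a(\Omega)$ with $\int_\Omega\frac1a\nabla w\cdot\nabla\bar\psi\,d\mathbf{x}=\int_\Omega\tilde f\bar\psi\,d\mathbf{x}$ for all $\psi\in H^1(\Omega)$.

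For the inclusion $\mathcal{D}(A_a)\subseteq\mathcal{D}(B_a)$, I would convert the weight $1/a$ appearing in $\sigma_a$ into a modification of the test function. By the product rule (Theorem \ref{prodrule}), since $1/a\in W^{1,\infty}(\Omega)$ and $\psi\in H^1(\Omega)$, we have $\bar\psi/a\in H^1(\Omega)$ and $\frac1a\nabla\bar\psi=\nabla(\bar\psi/a)+\frac{\bar\psi}{a^2}\nabla a$. Substituting this identity and using the admissible test function $\bar\psi/a$ in the $A_a$-identity yields
\[
\int_\Omega\frac1a\nabla w\cdot\nabla\bar\psi\,d\mathbf{x}=\int_\Omega\Big[\frac{f}{a}+\frac{\nabla w\cdot\nabla a}{a^2}\Big]\bar\psi\,d\mathbf{x}.
\]
Setting $\tilde f=f/a+(\nabla w\cdot\nabla a)/a^2$, I would then verify $\tilde f\in L^2_a(\Omega)$: the first term is controlled because $a\geq c$, and the second because $\nabla w\in L^2(\Omega)$, $\nabla a\in L^\infty(\Omega)$, and $a^2\geq c^2$. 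This gives $g\in\mathcal{D}(B_a)$ with $B_a g=\tilde f$.

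The reverse inclusion $\mathcal{D}(B_a)\subseteq\mathcal{D}(A_a)$ is entirely symmetric. Starting from the $B_a$-identity and using $\nabla\bar\psi=\frac1a\nabla(a\bar\psi)-\frac{\bar\psi}{a}\nabla a$ together with the admissible test function $a\bar\psi\in H^1(\Omega)$ (again justified by the product rule, since $a\in W^{1,\infty}(\Omega)$), I would obtain $f=a\tilde f-(\nabla w\cdot\nabla a)/a$, which lies in $L^2_a(\Omega)$ by the same bounds. Hence $g\in\mathcal{D}(A_a)$, and the two inclusions give $\mathcal{D}(A_a)=\mathcal{D}(B_a)$.

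The only point requiring care is to keep every manipulation within the $W^{1,\infty}$/$H^1$ regularity class rather than appealing to $w\in H^2(\Omega)$, which need not hold on a general extension domain (as the paper itself notes for $\mathcal{D}(A_{\mathbf 1})$). This is precisely why the argument is carried out entirely at the level of the sesquilinear forms, moving the scalar factors $a$ and $1/a$ onto the test function via the product rule and never differentiating $\nabla w$ a second time; the lower bound $a\geq c$ is what guarantees that the extra first-order term produced by the product rule remains square-integrable, so it is the integrability of this cross term $(\nabla w\cdot\nabla a)/a^2$ that I expect to be the one substantive thing to check.
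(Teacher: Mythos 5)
Your proposal is correct and follows essentially the same route as the paper's proof: both arguments transfer the scalar factor $a$ (resp.\ $1/a$) onto the test function via the product rule (Theorem \ref{prodrule}), producing the same first-order cross term $\frac{1}{a}\nabla a\cdot\nabla(au)$ (your $(\nabla w\cdot\nabla a)/a$), whose square-integrability follows from $\nabla a\in L^\infty(\Omega)$ and $a\geq c$. Your reparametrization $\psi=a\phi$ and the explicit check that $\tilde f\in L^2_a(\Omega)$ are just a slightly more detailed write-up of the identical computation.
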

\begin{proof}
Let $u \in \mathcal{D}(B_a)$. Then using the product rule (Theorem \ref{prodrule}), we have that
$
\omega_a(u , \frac{\phi}{a}) = \big < B_au,\phi \big>_a - \big < \frac{1}{a^2}\nabla a \cdot \nabla (au), \phi  \big >_a
$
for all $\phi \in H^1_a(\Omega)$. Since $a$ is in $W^{1,\infty}(\Omega)$ and is bounded from below by a positive constant, $H^1(\Omega) = H^1_a(\Omega)$. Hence, $\phi  \in H^1_a(\Omega)$ implies that $a \cdot u, ~\frac{\phi}{a} \in H^1(\Omega)$ due to the product rule (Theorem \ref{prodrule}). Therefore, we can conclude that 
$
\omega_a(u , \phi) = \big < a \cdot B_au,\phi \big>_a - \big < \frac{1}{a}\nabla a \cdot \nabla (au), \phi  \big >_a
$ 
for all $\phi \in H^1_a(\Omega)$. Hence, $u \in \mathcal{D}(B_a)$ implies $u \in \mathcal{D}(A_a)$. To establish that $u \in \mathcal{D}(A_a)$ implies $u \in \mathcal{D}(B_a)$, we can use a similar argument: if $u \in \mathcal{D}(A_a)$, then
$
\sigma_a(u,a \phi) = \big < A_au,\phi \big >_a + \big < \frac{1}{a}\nabla a \cdot \nabla (a u), \phi \big >_a
$
for all $\phi \in H^1_a(\Omega)$.  
\end{proof} 

% \sout{	In the following lemma, we will consider the space $(\mathcal{D}(A^m_a), \|\cdot \|_{a|m})$, where $\| \cdot \|_{a|m}$ is the norm given by 
% 	$
% 	\| z \|_{a|m} = \|(\mathbb{I} + A_a)^mz\|_a 
% 	$
% 	for each $z \in \mathcal{D}(A^m_a)$ and $\mathbb{I}$ is the identity map $u \mapsto u$.} 
    
    This lemma will play an important role in the theorem on controllability, Theorem \ref{maintheo}. It will be used to conclude that solutions of the parabolic systems \eqref{eq:cllp1} and \eqref{eq:cllp2} have bounded gradients for each $t>0$, provided that the boundary of the domain $\Omega$ is regular enough. This will enable us to prove later on that the control inputs constructed to prove controllability are bounded.

\begin{lemma}
\label{egdom}
Let $a \in W^{1,\infty}(\Omega)$. Let $\Omega$ be a domain that is either $C^{1,1}$ or convex. Then
there exists $m \in \mathbb{Z}_+$ large enough such that, for some $C_m>0$, $\|\nabla(a(\mathbf{x})u)\|_{\infty} \leq C_m (\|(\mathbb{I} +A_a)^{m}u\|_{a})$ for all $u \in \mathcal{D}(A^m_a)$.
Similarly, there exists $m' \in \mathbb{Z}_+$ large enough such that, for some $C_{m'}>0$,  $\|\nabla(a(\mathbf{x})u)\|_{\infty} \leq C_{m'} (\|(\mathbb{I} +B_a)^{m'}u\|_{a})$ for all $u \in \mathcal{D}(B^{m'}_a)$.
\end{lemma}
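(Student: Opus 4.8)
The plan is to prove the estimate for $A_a$ first and then indicate the analogous argument for $B_a$. Since $A_a$ is self-adjoint and non-negative (Lemma \ref{opprop1}), the spectral theorem gives $\|A_a^j u\|_a \le \|(\mathbb{I}+A_a)^m u\|_a$ for every $0 \le j \le m$, so it suffices to bound $\|\nabla(au)\|_\infty$ by $\max_{0 \le j \le m}\|A_a^j u\|_a$. I would track the functions $w_j := \mathcal{M}_a(A_a^j u) = a\,A_a^j u$ for $j=0,\dots,m$. For $j\le m-1$ each $A_a^j u$ lies in $\mathcal{D}(\omega_a)=H^1_a(\Omega)$, so $w_j \in H^1(\Omega)$ via the isomorphism $\mathcal{M}_a:H^1_a(\Omega)\to H^1(\Omega)$, while $w_m\in L^2(\Omega)$; moreover $\|w_j\|_2 \le \|a\|_\infty^{1/2}\|A_a^j u\|_a$. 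Using $A_a = A_{\mathbf 1}\mathcal{M}_a$ together with $A_a(A_a^j u)=A_a^{j+1}u$ and unwinding the definition of the form $\omega_a$, these functions satisfy the chain of Neumann problems $-\Delta w_j = f\,w_{j+1}$ in $\Omega$ with $\mathbf{n}\cdot\nabla w_j = 0$ on $\partial\Omega$, interpreted weakly, for $0 \le j \le m-1$, where $f=1/a$.

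The heart of the argument is an $L^p$ bootstrap run downward along this chain. Because $\Omega$ is $C^{1,1}$ (or convex), the Neumann Laplacian enjoys Calder\'on--Zygmund / Agmon--Douglis--Nirenberg regularity: $-\Delta w = h$ with the Neumann condition and $h\in L^p(\Omega)$ forces $w\in W^{2,p}(\Omega)$ with $\|w\|_{W^{2,p}} \le C_p(\|h\|_p+\|w\|_p)$ for every $1<p<\infty$. Starting from $w_m\in L^2(\Omega)$ and using $f\in L^\infty(\Omega)$, so that $f w_{j+1}\in L^q$ whenever $w_{j+1}\in L^q$, I would apply this estimate successively to $w_{m-1},w_{m-2},\dots$, gaining integrability at each step through the Sobolev embedding $W^{2,q}(\Omega)\hookrightarrow L^{q^*}(\Omega)$ with $\tfrac1{q^*}=\tfrac1q-\tfrac2n$. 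The exponent then improves by a fixed amount at every stage, so after about $n/4$ stages one reaches $L^\infty$, in particular $L^p$ for some $p>n$. Choosing $m$ large enough (of order $n/4+2$) guarantees $w_1\in L^p$ with $p>n$; one final application of the regularity estimate gives $w_0 = au\in W^{2,p}(\Omega)$, and Morrey's embedding $W^{2,p}(\Omega)\hookrightarrow C^1(\bar\Omega)$ for $p>n$ yields $\|\nabla(au)\|_\infty \le C\|au\|_{W^{2,p}} \le C_m\|(\mathbb{I}+A_a)^m u\|_a$. Tracking the constants through the finitely many steps shows they depend only on $\Omega$, $n$, $\|a\|_{W^{1,\infty}}$, and the lower bound on $a$.

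For $B_a$ the argument is identical in structure. From the form $\sigma_a$ one computes that $B_a$ acts as the divergence-form operator $B_a u = -\nabla\!\cdot\!\big(f\,\nabla(au)\big)$ with a conormal Neumann boundary condition, so the analogue of the chain reads $-\nabla\!\cdot\!\big(f\,\nabla w_j\big)=f\,w_{j+1}$. Since $f=1/a\in W^{1,\infty}(\Omega)$ is Lipschitz and bounded below by a positive constant, the divergence-form $W^{2,p}$ estimate on $C^{1,1}$ (or convex) domains applies verbatim, and the same integrability bootstrap and Morrey embedding deliver the bound with the appropriate $m'$.

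I expect the main obstacle to be precisely the low regularity of the coefficient: because $a$ is only $W^{1,\infty}$, one cannot climb the $L^2$-based Sobolev tower to $\mathcal{D}(A_a^m)\subset H^{2m}(\Omega)$ --- multiplying an $H^2$ function by a Lipschitz function destroys second derivatives, and in any case a merely $C^{1,1}$ boundary supplies only $H^2$ (not $H^{2m}$) regularity for the Neumann Laplacian. The idea that circumvents this is to bootstrap in the $W^{2,p}$ scale, where Lipschitz coefficients and a $C^{1,1}$ boundary are exactly enough for the elliptic estimate at every $p$, and where it is integrability rather than differentiability that is being improved. The one point requiring care is the convex case, in which global $W^{2,p}$ regularity for large $p$ is more delicate than on $C^{1,1}$ domains and must be invoked from the corresponding literature; on $C^{1,1}$ domains the estimate is entirely standard.
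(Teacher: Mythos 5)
For the $C^{1,1}$ case your argument is essentially the paper's: both iterate the Neumann $W^{2,p}$ estimate of Grisvard along the resolvent/power chain, climb the integrability ladder via the Sobolev embedding until the right-hand side lands in $L^p$ with $p>n$, and finish with Morrey. The only cosmetic difference is that the paper works with $(\mathbb{I}+A_a)^{-1}=\big((A_{\mathbf 1}+\mathcal{M}_a^{-1})\mathcal{M}_a\big)^{-1}$, so each elliptic problem carries a zeroth-order term $a_0\geq\beta>0$ and the clean estimate $\|u\|_{W^{2,p}}\leq C_p\|f\|_{L^p}$ applies directly, whereas your pure Neumann chain $-\Delta w_j=f\,w_{j+1}$ forces you to carry the extra $\|w_j\|_p$ term and to check the compatibility condition $\int_\Omega f\,w_{j+1}\,d\mathbf{x}=0$ at each stage (it does hold, by the mass-conservation identity $\omega_a(v,1/a)=0$, but you should say so). That part of your proposal is sound.

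The genuine gap is the convex case. Your bootstrap and your final Morrey step both rest on global $W^{2,p}(\Omega)$ regularity for the Neumann problem with $p$ large, and this is precisely what is \emph{not} available on a general convex domain; the paper explicitly declines to assume it. Saying the estimate ``applies verbatim'' and that the convex case ``must be invoked from the corresponding literature'' does not close this: the statement you would need to invoke is, as far as is known, false or at least unavailable for large $p$. The paper's workaround decouples the two roles that $W^{2,p}$ plays in your argument. First, the integrability gain $(A_a+\mathbb{I})^{-m}:L^2(\Omega)\to L^q(\Omega)$ for arbitrarily large $q$ is obtained from $L^p$-smoothing of the resolvent (via \cite{bakry2013analysis} and the embedding theorems), which needs only the extension property of $\Omega$, not convexity and not $W^{2,p}$ regularity. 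Second, the terminal gradient bound $\|\nabla u\|_\infty\leq C\|f\|_2$ is taken from Maz'ya's theorem on boundedness of the gradient for the Neumann Laplacian on convex domains \cite{maz2009boundedness} (and, for the divergence-form operator underlying $B_a$, from the $W^{1,p}$ estimates of \cite{geng2018homogenization}), replacing the $W^{2,p}\hookrightarrow C^1(\bar\Omega)$ chain entirely. Without some substitute of this kind, your proof covers only the $C^{1,1}$ half of the lemma.
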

\begin{proof}
First, we consider the case where $\Omega$ is a $C^{1,1}$ domain. Let $W^{2,p}(\Omega)$ be the set of elements in $L^p(\Omega)$ with second-order weak derivatives in $L^{p}(\Omega)$. Then we know that for the Neumann problem with $a = \mathbf{1}$, 
\begin{equation}
\label{eq:Lpreg}
-\Delta u + a_0u = f ~~ \text{in} ~ \Omega, ~~~
\mathbf{n} \cdot \nabla u =0  ~~ \text{in} ~ \partial \Omega
\end{equation}
has solutions $u \in W^{2,p}(\Omega)$ if $f \in L^p(\Omega)$ whenever $1< p < \infty$, $a_0 \in L^{\infty}(\Omega)$ \cite{grisvard2011elliptic}[Theorem 2.4.2.7]) and $a_0 \geq \beta$ for some $\beta >0$. These solutions have bounds
\begin{equation}
\label{eq:Lpregest}
\|u\|_{W^{2,p}} \leq C_{p}\|f\|_{L^p}
 \end{equation}
for some constant $C_p>0$. Note that $u \in L^p(\Omega)$ implies that $\mathcal{M}_au \in L^p(\Omega) $ for each $2 \leq p \leq \infty$. Suppose that $n > 2$, where we recall that $n$ is the dimension of the Euclidean space $\mathbb{R}^n$ of which $\Omega$ is a subset. Note that $\Omega$ is an extension domain  (Definition \ref{extn}). Then from the {\it $W^{2,p}$ regularity estimate} \eqref{eq:Lpregest} of equation \eqref{eq:Lpreg} and from the embedding theorem \cite{leoni2009first}[Corollary 11.9], it follows that $f \in L^2(\Omega)$ implies $(A_a+\mathbb{I})^{-m}f = (A_{\mathbf{1}}\mathcal{M}_a +\mathbb{I})^{-m}f = \big ( (A_{\mathbf{1}}+\mathcal{M}_a^{-1})\mathcal{M}_a\big )^{-m}f  \in L^q(\Omega)$ for some $q \geq n$ for $m \in \mathbb{Z}_+$ large enough. For the general case $n \geq 1$, it follows from the embedding theorem \cite{leoni2009first}[Theorem 11.23] that $f \in L^2(\Omega)$ implies $(A_a+\mathbb{I})^{-m}f  \in L^q(\Omega)$ for any desired $  n \leq q < \infty$, provided $m \in \mathbb{Z}_+$ for $m$ large enough. Since $a \in W^{1,\infty}(\Omega)$, it follows from the $W^{2,p}$ regularity estimate \eqref{eq:Lpregest} of equation \eqref{eq:Lpreg} and {\it Morrey's inequality} \cite{leoni2009first}[Theorem 11.34] that if $f \in L^2(\Omega)$, $(A_a+\mathbb{I})^{-m}f = u \in L^q(\Omega)$, and $m \in \mathbb{Z}_{+}$ is large enough, then $\|\nabla u\|_{\infty} \leq C_{\infty} \|f\|_2$, where $C_{\infty}>0$ is independent of $f$. 

A similar argument can be used when $\Omega$ is convex. However, it is not clear if the $W^{2,p}$ regularity estimate \eqref{eq:Lpregest} holds for general convex domains. On the other hand, it can be established that the {\it $L^p$ regularity estimate} of the PDE \eqref{eq:Lpreg} holds for such domains. In particular, it follows from \cite{bakry2013analysis}[Corollary 6.3.3] and the embedding theorems \cite{leoni2009first}[Corollary 11.9, Theorem 11.23] that for any $1 < p \leq \infty$, there exists $m \in \mathbb{Z}_{+}$ large enough such that $(A_a+\mathbb{I})^{-m}$ is a bounded operator from $L^2(\Omega)$ to $L^p(\Omega)$. This last statement uses only the extension property of the domain $\Omega$, which is not required to be convex for the statement to hold true; the convexity of $\Omega$ is required mainly to derive the bounds on the gradient of the solution $u$. For this derivation, we use a result from \cite{maz2009boundedness}. Since $a \in W^{1,\infty}(\Omega)$, it follows from the theorem  \cite{maz2009boundedness}[Theorem] that there exists a constant $C'_{\infty}>0$ such that if $f \in L^2(\Omega)$, $(A_a+\mathbb{I})^{-m}f = u$, and $m \in \mathbb{Z}_{+}$ is large enough,  then $\|\nabla u\|_{\infty} \leq C'_{\infty} \|f\|_2$, where $C'_{\infty}$ is independent of $f$. In this last statement, the theorem \cite{maz2009boundedness}[Theorem] can be applied to derive the gradient bounds due to the fact that $(A_a+\mathbb{I})^{-{m+1}}f \in L^q(\Omega)$ for some $q>n$ for $m \in \mathbb{Z}_{+}$ large enough.

%For the operator $B_a$, the inequalities 
The inequality for the operator $B_a$
can be derived using exactly the same approach as for $A_a$. Hence, we only point out the key results needed. Particularly, for a $C^{1,1}$ domain, the $W^{2,p}$ regularity estimate \eqref{eq:Lpregest} also holds for the equation $-\nabla \cdot (\frac{1}{a(\mathbf{x})} \nabla u) + a_0u =f$ from \cite{grisvard2011elliptic}[Theorem 2.4.2.7]. For the case where $\Omega$ is convex, the $W^{1,p}$ regularity estimate has been proved in \cite{geng2018homogenization}[Theorem 1.3] for solutions of elliptic operators in divergence form on convex domains. Since $a \in W^{1,\infty}(\Omega)$, using the product rule (Theorem \ref{prodrule}), the gradient bounds of the Neumann Laplacian \cite{maz2009boundedness}[Theorem] also give the desired gradient bounds of the operator $\nabla \cdot (\frac{1}{a(\mathbf{x})} \nabla \cdot)$.
\end{proof}

\begin{lemma}
\label{cvglma2}
Let $\Omega$ be a domain that is either $C^{1,1}$ or convex.
Let $y^0 \in \mathcal{D}(A^m_a)$ for some $m \in \mathbb{Z}_+$. Then the mild solution, $y \in C([0,\infty);L^2_a(\Omega))$, of the PDE \eqref{eq:cllp1} satisfies $y(\cdot,t) \in \mathcal{D}(A^m_a)$ for each  $t \in [0,\infty)$. Moreover, the following estimates hold for some  positive constants $M_m$ and $\lambda$:
\begin{eqnarray}
\|(\mathbb{I}+A_a)^m(y^0-f)\|_{a} ~\leq~ M_m e^{-\lambda t}. 
 \label{eq:expcn2}
\end{eqnarray}
\end{lemma}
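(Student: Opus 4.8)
The plan is to reduce the whole statement to a single application of the spectral theorem for the self-adjoint generator $-A_a$. By Corollary \ref{CorII} the mild solution of \eqref{eq:cllp1} is $y(\cdot,t) = \mathcal{T}^A_a(t)y^0$, and since $A_a f = 0$ we have $\mathcal{T}^A_a(t) f = f$, so that $y(\cdot,t) - f = \mathcal{T}^A_a(t)(y^0-f)$. First I would record the \emph{domain invariance}. Because $A_a$ is self-adjoint and nonnegative (Lemma \ref{opprop1}, Lemma \ref{cvglma1}), the functional calculus identifies $\mathcal{T}^A_a(t) = e^{-tA_a}$ and gives $\mathcal{D}(A_a^m) = \mathcal{D}((\mathbb{I}+A_a)^m)$; since $e^{-t\lambda}$ is bounded and $(1+\lambda)^m e^{-t\lambda}$ is square-integrable against any spectral measure for which $(1+\lambda)^m$ already is, the semigroup maps $\mathcal{D}(A_a^m)$ into itself and commutes with $(\mathbb{I}+A_a)^m$ there,
\[
(\mathbb{I}+A_a)^{m}\,\mathcal{T}^A_a(t)(y^0-f) \;=\; \mathcal{T}^A_a(t)\,(\mathbb{I}+A_a)^{m}(y^0-f).
\]
In particular $y(\cdot,t)\in\mathcal{D}(A_a^m)$ for all $t\ge 0$, which is the first assertion.

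Next I would set up the spectral decomposition around the equilibrium. By Lemma \ref{opprop1} the operator $A_a$ has purely discrete spectrum, and by Lemma \ref{cvglma1} the value $0$ is a \emph{simple} eigenvalue with eigenvector $f=1/a$; discreteness then forces a strictly positive gap $\lambda_1 := \min\big({\rm spec}(A_a)\setminus\{0\}\big) > 0$. Let $P$ be the $\langle\cdot,\cdot\rangle_a$-orthogonal projection onto ${\rm span}(f)$ and $Q=\mathbb{I}-P$. Using that $f=1/a$ is real, one computes $\langle u,f\rangle_a = \int_\Omega u\,d\mathbf{x}$ and $\langle f,f\rangle_a = \int_\Omega f\,d\mathbf{x}$, so that $Pu = \big(\int_\Omega u\,d\mathbf{x}\big) f$ under the standing normalization $\int_\Omega f\,d\mathbf{x}=1$. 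With the accompanying hypothesis $\int_\Omega y^0\,d\mathbf{x}=1$ inherited from Lemma \ref{cvglma1}, this yields $P(y^0-f)=0$; hence $y^0-f$ lies in ${\rm Range}(Q)$, the spectral subspace on which $A_a \ge \lambda_1$.

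The estimate then follows directly. Since $(\mathbb{I}+A_a)^m$, $Q$, and $e^{-tA_a}$ are all Borel functions of the single self-adjoint operator $A_a$, they commute and $(\mathbb{I}+A_a)^m$ preserves ${\rm Range}(Q)$, so $w := (\mathbb{I}+A_a)^m(y^0-f)\in{\rm Range}(Q)$. On that subspace the spectral theorem gives $\|e^{-tA_a}w\|_a \le e^{-\lambda_1 t}\|w\|_a$, whence
\[
\big\|(\mathbb{I}+A_a)^m\big(y(\cdot,t)-f\big)\big\|_a = \big\|e^{-tA_a}w\big\|_a \le e^{-\lambda_1 t}\,\big\|(\mathbb{I}+A_a)^m(y^0-f)\big\|_a,
\]
which is \eqref{eq:expcn2} with $\lambda=\lambda_1$ and $M_m = \|(\mathbb{I}+A_a)^m(y^0-f)\|_a$. (The left side of \eqref{eq:expcn2} is read at the running solution $y(\cdot,t)$, since otherwise it would be $t$-independent.)

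I expect no serious obstacle: the only genuinely load-bearing fact is the strict positivity of the gap $\lambda_1$, and that rests entirely on the discreteness of ${\rm spec}(A_a)$ together with the simplicity of the eigenvalue $0$, both already supplied by Lemmas \ref{opprop1} and \ref{cvglma1}; everything else is routine self-adjoint functional calculus. I would also note that the boundary-regularity hypothesis (that $\Omega$ be $C^{1,1}$ or convex) is not actually used in this estimate — the argument needs only that $\Omega$ is an extension domain — but is carried along because Lemma \ref{cvglma2} is applied in tandem with Lemma \ref{egdom}, where that hypothesis is essential. The same computation, with Corollary \ref{CorIII} and the self-adjointness of $B_a$, gives the corresponding statement for $B_a$ verbatim.
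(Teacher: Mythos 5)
Your proposal is correct and follows essentially the same route as the paper: the paper's proof is exactly the commutation identity $\|(\mathbb{I}+A_a)^m\mathcal{T}^A_a(t)(y^0-f)\|_a=\|\mathcal{T}^A_a(t)(\mathbb{I}+A_a)^m(y^0-f)\|_a$ followed by the exponential decay from Lemma \ref{cvglma1}. You merely make explicit what the paper leaves implicit (the spectral gap from discreteness plus simplicity of the eigenvalue $0$, the orthogonality of $(\mathbb{I}+A_a)^m(y^0-f)$ to $f$ via the mass condition, and the typo in the left-hand side of \eqref{eq:expcn2}), all of which are accurate observations.
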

\begin{proof}
We are given that $y^0 \in \mathcal{D}(A_a^m)$. Since the semigroup $(\mathcal{T}^A_a(t))_{t\geq 0}$ and its generator $-A_a$ commute, we know that $ \|(\mathbb{I}+A_a)^m(\mathcal{T}^A_a(t))(y^0-f)\|_a = \| \mathcal{T}^A_a(t)(\mathbb{I}+A_a)^m (y^0-f)\|_a \leq M_0 e^{-\lambda t}\|(\mathbb{I}+A_a)^m(y^0-f)\|_{a}$ for some positive constants $M_0$ and $\lambda$. 
\end{proof}

%The following lemma is essentially just a restatement of the results used to prove it \spr{This phrasing is a bit awkward}\kar{Can remove it}. 

Since controllability will first be proved in Lemma \ref{prlma} under the assumption that the initial condition is bounded from below by a positive constant, the following lemma will be used to relax this assumption in Theorem \ref{maintheo}. %\spr{Where will it be used to relax the assumption?}

\begin{lemma}
\label{Dave}
Let $\Omega$ be a domain that is either $C^{1,1}$  or convex and that satisfies the chain condition (see Definition \ref{chain}). Let $y^0 \in L^2(\Omega)$ be such that $y^0 \geq 0$. Let $y \in C([0,T];L^2(\Omega))$ be the unique mild solution of the PDE \eqref{eq:cllp1}. Then for all $t \in (0,\infty)$, there exists a positive constant, $c_t>0$, such that $y(\cdot,t) \geq c_t$.
\end{lemma}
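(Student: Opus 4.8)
The plan is to first reduce \eqref{eq:cllp1} to a standard uniformly parabolic Neumann problem, and then to establish a uniform-in-space lower bound through a parabolic Harnack inequality, using the chain condition to propagate positivity across all of $\Omega$. Set $w = \mathcal{M}_a y = a y$. Since $a$ is time-independent with $c_1 \le a \le \|a\|_{\infty}$, the mild solution $y$ of \eqref{eq:cllp1} corresponds to a weak solution $w$ of the uniformly parabolic equation $\tfrac{1}{a(\mathbf{x})} w_t = \Delta w$ in divergence form on $\Omega \times (0,\infty)$, with the homogeneous Neumann condition $\mathbf{n}\cdot\nabla w = 0$ on $\partial\Omega$. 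Because the weight $1/a$ is bounded above and below by positive constants, it suffices to bound $w(\cdot,t)$ below by a positive constant, since then $y = w/a \ge (\inf w(\cdot,t))/\|a\|_{\infty}$. By Corollary~\ref{CorII} the semigroup is positive, so $w \ge 0$ for all $t$; and by the mass-conservation identity established in the proof of Lemma~\ref{cvglma1}, $\int_\Omega y(\cdot,s)\,d\mathbf{x} = \int_\Omega y^0\,d\mathbf{x} =: m_0$ for every $s \ge 0$, where $m_0 > 0$ since $y^0 \ge 0$ is nontrivial. Hence $\int_\Omega w(\cdot,s)\,d\mathbf{x} \ge c_1 m_0 > 0$ for all $s$.

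Fix $t > 0$. The first step seeds a quantitative lower bound at an intermediate time: from $\int_\Omega w(\cdot,s_0)\,d\mathbf{x} \ge c_1 m_0$ with $w \ge 0$, there is a ball $B \subset \Omega$ and a time $s_0 \in (0,t/2)$ on which the average of $w(\cdot,s_0)$ is bounded below by an explicit $\delta = \delta(m_0,\Omega) > 0$, and the $L^1$-to-infimum form of Moser's parabolic Harnack inequality yields $\inf_{B'} w(\cdot,s_1) \ge \delta' > 0$ on a smaller ball at a slightly later time $s_1$. The second step propagates this bound to an arbitrary target $\bar{\mathbf{x}} \in \Omega$. Invoking the chain condition (Definition~\ref{chain}) with $n = N$, connect the center $\mathbf{x}_*$ of $B'$ to $\bar{\mathbf{x}}$ by points $\mathbf{x}_0 = \mathbf{x}_*,\ldots,\mathbf{x}_N = \bar{\mathbf{x}}$ with $|\mathbf{x}_i - \mathbf{x}_{i+1}| \le \tfrac{C}{N}\,\mathrm{diam}(\Omega)$, and choose equally spaced times $s_1 < \cdots < s_N = t$ in $[t/2,t]$ so that each spatial step $\rho_N := \tfrac{C}{N}\mathrm{diam}(\Omega)$ is comparable to $(s_{i+1}-s_i)^{1/2}$; this forces $N \sim \mathrm{diam}(\Omega)^2/t$, a number depending only on $\Omega$ and $t$ and crucially independent of $\bar{\mathbf{x}}$. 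Applying the Harnack inequality once per step gives $w(\bar{\mathbf{x}},t) \ge C_H^{-N}\delta' =: \tilde c_t > 0$, and since $N$ and the Harnack constant $C_H$ are uniform in $\bar{\mathbf{x}}$, we conclude $w(\cdot,t) \ge \tilde c_t$ and hence $y(\cdot,t) \ge \tilde c_t/\|a\|_{\infty} =: c_t > 0$.

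The main obstacle is securing a parabolic Harnack inequality that holds \emph{uniformly up to the boundary}, since the target point $\bar{\mathbf{x}}$ and the intermediate chain points may lie near $\partial\Omega$, where the Neumann condition must be respected; this is exactly where the hypothesis that $\Omega$ is $C^{1,1}$ or convex enters, guaranteeing (via reflection and boundary-Harnack arguments, or via two-sided Aronson-type Gaussian bounds for the Neumann heat kernel) that $C_H$ and the admissible local geometry are uniform over $\Omega$. An equivalent and more direct route, which I would take if a suitable reference is at hand, is to invoke a Gaussian lower bound for the kernel $p_a(t,\mathbf{x},\mathbf{y})$ of $\mathcal{T}^A_a(t)$ on domains satisfying the chain condition: since $\Omega$ is bounded and connected, such a bound yields $\inf_{\mathbf{x},\mathbf{y}\in\Omega} p_a(t,\mathbf{x},\mathbf{y}) > 0$, whence $y(\mathbf{x},t) = \int_\Omega p_a(t,\mathbf{x},\mathbf{y})\, y^0(\mathbf{y})\,d\mathbf{y} \ge \big(\inf p_a(t,\cdot,\cdot)\big)\, m_0 =: c_t > 0$. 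In either formulation the chain condition is precisely the geometric ingredient ensuring the lower Gaussian/Harnack estimate is controlled by the \emph{Euclidean} distance and is therefore uniform on the bounded domain.
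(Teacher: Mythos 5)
Your proposal is correct, and its ``more direct route'' is exactly the paper's argument: the paper represents the solution as $y(\mathbf{x},t)=\int_\Omega K(t,\mathbf{x},\mathbf{z})\,y^0(\mathbf{z})\,d\mathbf{z}$ with $K$ the Neumann heat kernel and quotes a pointwise Gaussian lower bound $K(t,\mathbf{x},\mathbf{z})\geq \frac{C}{(4\pi t)^{1/2}}\exp\bigl(-|\mathbf{x}-\mathbf{z}|^2/(4t)\bigr)$, citing \cite{choulli2015gaussian} for $C^{1,1}$ domains and \cite{li1986parabolic} for convex domains, and then concludes immediately from boundedness of $\Omega$ and $\int_\Omega y^0>0$. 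Your primary route---seeding a lower bound on one ball via the weak ($L^1$-to-infimum) parabolic Harnack inequality and propagating it along a Harnack chain of length $N\sim \mathrm{diam}(\Omega)^2/t$ furnished by the chain condition---is precisely the mechanism underlying those kernel bounds, so it is the same estimate unpacked; what it costs is having to justify yourself, rather than cite, the uniformity of the Neumann-problem Harnack constant up to $\partial\Omega$, which you correctly identify as the role of the $C^{1,1}$/convexity hypothesis. Two points of comparison: first, you treat the general weighted equation \eqref{eq:cllp1} via $w=\mathcal{M}_a y$, whereas the paper's proof as written only handles the unweighted heat equation ($a=\mathbf{1}$, the case actually invoked in Theorem \ref{maintheo}); your version is closer to the stated hypothesis, though for general bounded measurable $a$ one still needs an Aronson-type lower bound for the weighted kernel, which neither you nor the paper explicitly sources. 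Second, both arguments require $y^0\not\equiv 0$; you make this explicit through the mass-conservation identity, while the paper leaves it implicit.
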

\begin{proof}
Consider the heat equation with Neumann boundary condition, that is, the PDE \eqref{eq:Mainsys1} with $\mathbf{v} \equiv \mathbf{0}$.
The solution $y$ of this PDE can be represented using the {\it Neumann heat kernel} $K$. That is, there exists a measurable map $K:(0,\infty) \times \Omega^2   \rightarrow [0,\infty)$ such that the mild solution $y$ can be constructed using the relation $
y(\mathbf{x},t) =  \int_\Omega K(t,\mathbf{x},\mathbf{z})y^0(\mathbf{z})d\mathbf{z}
$
for each $t \in (0,\infty)$ and almost every $\mathbf{x} \in \Omega$. From \cite{choulli2015gaussian}[Theorem 3.1] (for $C^{1,1}$ domains) and
 \cite{li1986parabolic}[Corollary 2.1] (for convex domains),  for some $C>0$, we know that $
K(t,\mathbf{x},\mathbf{z}) \geq \frac{C}{(4\pi t)^{1/2}}\exp(\frac{-|\mathbf{x}-\mathbf{z}|^2}{4t})
$
for each $t>0$ and almost every $\mathbf{x},\mathbf{z} \in \Omega$. From this, the lower bound on $y(\cdot,t)$ follows. 
\end{proof}

In the following lemma, we establish a controllability result under the assumption that the initial condition is sufficiently regular and the final time $t_f$ is equal to the infinite summation $\sum_{k=1}^{\infty}\frac{1}{k^2}$. 

\begin{lemma}
\label{prlma}
Let $y^0 \in \mathcal{D}(A^m_a)$ for some $m \in \mathbb{Z}_+$  such that $y^0 \geq c_1 $ for some positive constant $c_1$. Suppose that $f \in W^{1, \infty}(\Omega)$ such that $f\geq c_2$  for some positive constant $c_2$ and $\int_{\Omega}f(\mathbf{x})d\mathbf{x} = \int_{\Omega}y^0(\mathbf{x})d\mathbf{x}$. Assume that the final time is specified as  $t_f = \sum_{k=1}^{\infty} \frac{1}{k^2}$. Define the vector field $\mathbf{v}$ in the PDE \eqref{eq:Mainsys1} by
\begin{equation}
\label{eq:fbctrl}
\mathbf{v}(\cdot,t) =   \frac{\nabla y}{y}- \alpha  j \frac{\nabla (a y)}{y}
\end{equation}
for some $\alpha > 0$, $j \in \mathbb{Z}_+$, where $a = 1/f$ whenever $t \in [\sum_{k=1}^{j-1} \frac{1}{k^2},\sum_{k=1}^{j} \frac{1}{k^2})$. Here, we define $\sum_{k=1}^{j} \frac{1}{k^2} = 0$ if $j=0$.

If $\Omega$ is a domain that is $C^{1,1}$ or convex, then there exist $m' \in \mathbb{Z}_+ $ and $\alpha >0$ large enough and independent of $m$ such that if $m \geq m'$, then $\mathbf{v} \in  L^\infty([0,t_f];L^\infty(\Omega)^n)$ and $y(\cdot,t_f)=f$.
\end{lemma}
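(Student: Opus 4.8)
The plan is to exploit the fact that the feedback \eqref{eq:fbctrl} is engineered so that the closed loop \eqref{eq:Mainsys1} reduces, on each interval $I_j := [\sum_{k=1}^{j-1}\tfrac{1}{k^2},\sum_{k=1}^{j}\tfrac{1}{k^2})$, to a time-rescaled copy of \eqref{eq:cllp1}. Substituting $\mathbf v\, y = \nabla y - \alpha j\,\nabla(ay)$ into \eqref{eq:Mainsys1} cancels the $\Delta y$ terms and leaves $y_t = \alpha j\,\Delta(ay)$ on $I_j$, with boundary condition $\mathbf n\cdot\nabla(ay)=0$. Accordingly I would define $y$ directly as the semigroup solution of Corollary \ref{CorII}, $y(\cdot,t)=\mathcal T^A_a(s(t))y^0$, where the accumulated \emph{effective time} is $s(t)$; on $I_j$ the rate is $\alpha j$ and the length is $1/j^2$, so each interval contributes $\alpha j\cdot \tfrac{1}{j^2}=\alpha/j$ and $s(t)$ runs up to $\alpha\sum_{k=1}^{j}\tfrac1k$ at the right endpoint of $I_j$, by the semigroup property. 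One then verifies that this $y$ is a well-defined continuous mild solution on $[0,t_f)$ and that the $\mathbf v$ recovered from \eqref{eq:fbctrl} is consistent with it.

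The terminal identity $y(\cdot,t_f)=f$ is then essentially an arithmetic observation: the real-time budget $\sum_k 1/k^2 = t_f$ is finite, whereas the effective-time budget $s(t)\to \alpha\sum_k 1/k = \infty$ as $t\to t_f^-$ because the harmonic series diverges. Since $\int_\Omega y^0=\int_\Omega f$ and $0$ is a simple eigenvalue of $-A_a$ with eigenvector $f=1/a$, the estimate \eqref{eq:expcn11} of Lemma \ref{cvglma1} gives $\|y(\cdot,t)-f\|_a=\|\mathcal T^A_a(s(t))y^0-f\|_a\le M_0 e^{-\lambda s(t)}\|y^0-f\|_a\to 0$; extending $y$ continuously to $t_f$ forces $y(\cdot,t_f)=f$ and shows $y\in C([0,t_f];L^2_a(\Omega))$.

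The real work — and the main obstacle — is the bound $\mathbf v\in L^\infty([0,t_f];L^\infty(\Omega)^n)$, because the coefficient $\alpha j$ in \eqref{eq:fbctrl} grows without bound across the intervals and must be defeated by decay of $\nabla(ay)$. First I would secure a uniform positive lower bound $y(\cdot,t)\ge\kappa>0$ from Lemma \ref{slblma1} (using $y^0\ge c_1$ and the positive lower bound on $a=1/f$), together with the matching upper bound $\|ay\|_\infty\le\|ay^0\|_\infty=:L$ from the last statement of Corollary \ref{CorII}. The decisive ingredient is exponential decay of the gradient: applying Lemma \ref{egdom} to $y(\cdot,t)-f\in\mathcal D(A_a^m)$ (valid for $m\ge m'$, with $y(\cdot,t)$ staying in $\mathcal D(A_a^m)$ by Lemma \ref{cvglma2}) and using $a(y-f)=ay-\mathbf 1$, so that $\nabla(a(y-f))=\nabla(ay)$, together with \eqref{eq:expcn2}, yields $\|\nabla(ay)\|_\infty\le C e^{-\lambda s}$ for a constant $C$ depending on $y^0$ but not on $j$. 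The first term of \eqref{eq:fbctrl} is then bounded: by the product rule (Theorem \ref{prodrule}), $\nabla y=(\nabla f)(ay)+f\,\nabla(ay)$, which is controlled by $\|\nabla f\|_\infty L+\|f\|_\infty C$, and dividing by $y\ge\kappa$ gives a uniform bound. For the second term, on $I_j$ the effective time obeys $s\ge \alpha\sum_{k=1}^{j-1}\tfrac1k\ge\alpha\ln j$, so
\[
\Big\|\alpha j\,\frac{\nabla(ay)}{y}\Big\|_\infty \le \frac{\alpha j}{\kappa}\,C e^{-\lambda s}\le \frac{\alpha C}{\kappa}\,j^{\,1-\lambda\alpha}.
\]
Choosing $\alpha$ large enough that $\lambda\alpha\ge 1$ — where $\lambda$ is the spectral gap of $A_a$, hence independent of $m$ — makes the right-hand side bounded uniformly in $j$ (and in fact vanishing as $j\to\infty$), which yields the claimed $L^\infty$ bound on $\mathbf v$. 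The crux is exactly this power-law race: high-order regularity $m\ge m'$ is what even furnishes a gradient bound, and the largeness of $\alpha$ is what lets the decay rate $j^{-\lambda\alpha}$ overcome the control amplitude $\alpha j$.
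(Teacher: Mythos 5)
Your proposal is correct and follows essentially the same route as the paper's proof: reduction of the closed loop on each interval to the rescaled semigroup $\mathcal{T}^A_a(\alpha j\,\cdot)$, the effective-time accounting in which the divergent harmonic series drives $\|y(\cdot,t)-f\|_a\to 0$ via \eqref{eq:expcn11}, the lower bound on $y$ from Lemma \ref{slblma1}, the exponential decay of $\|\nabla(ay)\|_\infty$ from Lemmas \ref{cvglma2} and \ref{egdom}, and the choice $\alpha\geq 1/\lambda$ to win the race between $\alpha j$ and $j^{-\lambda\alpha}$. Your explicit use of the invariant set from Corollary \ref{CorII} to bound $\|ay\|_\infty$ in the product-rule step is a slightly more careful bookkeeping of a point the paper leaves implicit, but it is not a different argument.
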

\begin{proof}
Substituting $\mathbf{v}(\cdot,t) = \frac{\nabla y}{y}- \alpha j \frac{\nabla (a y)}{y}$ whenever $t \in [\sum_{k=1}^{j-1} \frac{1}{k^2},\sum_{k=1}^{j} \frac{1}{k^2})$ in the PDE \eqref{eq:Mainsys1}, it can be seen that if the solution of this PDE exists, then it can be constructed from mild solutions of the {\it closed-loop} PDE
\begin{eqnarray}
&\tilde{y}_t = \alpha j \Delta (a(\mathbf{x})\tilde{y}) &~~ in  ~~ \Omega \times [0,\frac{1}{j^2}) \nonumber \\ 
&\tilde{y}(\cdot,0) = \tilde{y}^0=y(\cdot,\sum_{k=1}^{j-1} \frac{1}{k^2}) &~~ in ~~ \Omega \nonumber \\ 
&\mathbf{n} \cdot \nabla (a\tilde{y}) = 0    &~~ in ~~ [0,\frac{1}{j^2}). \label{eq:pssol}
\end{eqnarray}
In particular, if $t \in [\sum_{k=1}^{j-1} \frac{1}{k^2},\sum_{k=1}^{j} \frac{1}{k^2})$, then $t = i+\sum_{k=1}^{j-1} \frac{1}{k^2}$ for some $i \in [0,\frac{1}{j^2})$. Hence, we obtain the relation $y(\cdot, t)=y(\cdot, \sum_{k=1}^{j-1} \frac{1}{k^2} + i) = \tilde{y}(\cdot,i)$. This implies that $y(\cdot, \sum_{k=1}^{j-1} \frac{1}{k^2} + i) = \tilde{y}(\cdot,i)$ for each $i \in [0,\frac{1}{j^2})$ and each $m \in \mathbb{Z}_+$. Since $y^0$ is uniformly bounded from below by a positive constant, $\tilde{y}$ is also uniformly bounded from below according to Lemma \ref{Dave}. Moreover, since $a \in W^{1,\infty}(\Omega)$, it follows that $\mathcal{D}(A_a) \subset H^1_a(\Omega)$. Due to the regularity assumption on the initial condition, we know that $y(\cdot,t) \in \mathcal{D}(A_a)$ for all 
$t \in [0,t_f)$. Hence, the velocity field $\mathbf{v}$ is well-defined for all $t \in [0,t_f)$, i.e., it is at least square-integrable at almost every time $t$.
 
Noting that $-\beta A_a$ generates the rescaled semigroup $(\mathcal{T}^A_a(\beta t))_{t \geq 0}$, and applying Lemma \ref{cvglma1}, it follows that
$ \|y(\cdot,\sum_{k=1}^{j}\frac{1}{k^2})-f\|_{a} = \|\mathcal{T}^A_a(\alpha k/k^2)...\mathcal{T}^A_a(\alpha)y^0 -f\|_{a} = \|\mathcal{T}^A_a(\alpha \sum_{k=1}^{j}\frac{1}{k})y_0 -f\|_{a}  \leq  
 M_0 e^{-\alpha \lambda \sum_{k=1}^{j}\frac{1}{k}}
$ for each $j \in \mathbb{Z}_+$, for some positive constants $M_0$ and $ \lambda$ independent of $j$. Since the summation $\sum_{k=1}^{j}\frac{1}{k}$ is diverging, we have that $y(\cdot,t_f) = f$ if the solution is defined over the interval $[0,t_f]$. Since $y$ is continuous on $[0,t_f)$ and uniformly bounded, it follows that $y$ is in $C([0,t_f);L^2_a(\Omega))$ and can be extended to a unique mild solution $y \in C([0,t_f];L^2_a(\Omega))$ defined over the time interval $[0,t_f]$. 
 
It is additionally required to prove the existence of $m' \in \mathbb{Z}_+$ and $\alpha >0$ such that, if $m\geq m'$, then $\mathbf{v} \in  L^\infty([0,t_f];L^{\infty}(\Omega)^n)$. First, we derive bounds on the term 
$1/y(\cdot,t)$. Due to the lower bound on the initial condition $y^0$, and noting that $y(\cdot,t) = \mathcal{T}^A_a(\tilde{t})y^0$ for some $\tilde{t} \in [0,\infty)$ depending on $t \in [0,t_f)$, it follows from Lemma \ref{slblma1} that there exists a positive constant $d$ such that
\begin{equation}
\label{eq:bd1}
y(\cdot,t) \geq d
\end{equation}
 for all $t\in [0,t_f)$. This gives us the uniform upper bound $1/d$  on the term $1/y(\cdot,t)$.
Next, we consider the term  $\alpha \nabla (a(\mathbf{x})y(\cdot, t))$. We note that $y_{0} \in \mathcal{D}(A^j_a)$. Hence, we can apply the estimate in Lemma \ref{cvglma2} to obtain $
\|(\mathbb{I}+A_a)^m(y(\cdot,\sum_{k=1}^{m}\frac{1}{k^2})-f)\|_{a} ~\leq~ \tilde{M}e^{-\alpha \lambda \sum_{k=1}^{j}\frac{1}{k}} 
$ for some positive constants $\tilde{M}$ and $\lambda$. From Lemma \ref{egdom}, it follows that when $\Omega$ is a domain that is $C^{1,1}$  or convex, there exist $C>0$ and $m'\in \mathbb{Z}_+$ depending only on $a$ such that, if $m \geq m'$, then
\begin{equation}
\label{eq:bd2}
 \|\alpha j \nabla (a(\mathbf{x})y)(\cdot,\sum_{k=1}^{j}\frac{1}{k^2})\|_{\infty} ~\leq~ C\alpha j\tilde{M}e^{- \alpha \lambda \sum_{k=1}^{j}\frac{1}{k}}
\end{equation}
for some positive constants  $\tilde{M}$ and $\lambda$.
The right-hand side of the estimate \eqref{eq:bd2} is not uniformly bounded for arbitrary $\alpha>0$ due to its dependence on $j$. However, we note that $\lim_{j \rightarrow \infty} -\mbox{ln} ~j  + \sum_{k=1}^{j}\frac{1}{k} = \gamma$, where $\gamma>0$ is the {\it Euler-Mascheroni} constant \cite{finch2003mathematical}[Section 1.5]. Therefore, by setting $\alpha  \geq 1/ \lambda$, the right-hand side becomes uniformly bounded for all $j \in \mathbb{Z}_+$. Since $a \in W^{1,\infty}(\Omega)$, it follows from the product rule  and the estimate \eqref{eq:bd2} that 
\begin{equation}
\label{eq:bd3}
 \| \nabla y(\cdot,t)\|_{\infty} ~\leq~ C_2
\end{equation}
for some positive constant $C_2$ and for all $t \in [0,t_f)$.

From the estimates \eqref{eq:bd1}-\eqref{eq:bd3}, it follows that if $\alpha>0$ is large enough, then $\mathbf{v} \in L^\infty([0,t_f];L^\infty(\Omega)^n)$ and $y(\cdot,t_f)=f$. This concludes the proof for the case when the domain $\Omega$ is  $C^{1,1}$  or convex. 
 \end{proof}

\vspace{1mm}

Note that any control law of the form $\mathbf{v}(\cdot,t) =   \frac{\nabla y}{y}- \alpha m^\beta \frac{\nabla (a y)}{y}$ for numerous other values of $\beta$ and $\alpha$ will also achieve the desired controllability objective, due to the fact that an exponential function of a variable grows faster than a polynomial function as the variable tends to infinity. Additionally, we could replace the parameter $m$ with a continuous function $m(t)$ such that $\int_0^T m(\tau)d\tau = \infty$.

The following corollary follows from  Lemma \ref{prlma} using a straightforward scaling argument.

\begin{corollary}
\label{prlma2}
Let $y^0 \in \mathcal{D}(A^m_a)$ be such that $y^0 \geq c_1 $ for some positive constant $c_1$ and $m \in \mathbb{Z}_{+}$. Let $\Omega$ be a domain that is either $C^{1,1}$ or convex. Suppose that $f \in W^{1, \infty}(\Omega)$ such that $f\geq c_2$ for some positive constant $c_2$, $\int_\Omega f(\mathbf{x})d\mathbf{x} = \int_\Omega y^0(\mathbf{x})d\mathbf{x}$, and $a = 1/f$. Let $t_f >0$ be the final time. Then if $m$ is large enough there exists $\mathbf{v} \in L^{\infty}([0,t_f];L^{\infty}(\Omega)^n)$ such that  the mild solution $y$ of the PDE  \eqref{eq:Mainsys1} satisfies $y(\cdot,t_f) = f$.  
\end{corollary}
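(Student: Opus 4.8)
The plan is to reduce the claim to Lemma \ref{prlma} by rescaling the time intervals used in its construction, exploiting the fact that the control enters the closed-loop PDE as a multiplicative \emph{rate} on the $A_a$-semigroup, so that the same semigroup evolution can be realized over a time window of any prescribed length. Write $T_0 = \sum_{k=1}^\infty \frac{1}{k^2}$ for the specific terminal time of Lemma \ref{prlma}. First I would partition $[0,t_f)$ into the rescaled intervals $I_j = [\tau_{j-1},\tau_j)$ with $\tau_j = \frac{t_f}{T_0}\sum_{k=1}^j \frac{1}{k^2}$, so that $|I_j| = \frac{t_f}{T_0 j^2}$ and $\sum_{j} |I_j| = t_f$. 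On $I_j$ I would impose the feedback $\mathbf{v}(\cdot,t) = \frac{\nabla y}{y} - \beta_j \frac{\nabla(ay)}{y}$ with $a = 1/f$ and rate $\beta_j = \frac{\alpha T_0}{t_f}\, j$, for a constant $\alpha>0$ to be fixed.

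Next I would verify, exactly as in Lemma \ref{prlma}, that substituting this feedback into \eqref{eq:Mainsys1} produces on each $I_j$ the closed-loop equation $y_t = \beta_j \Delta(a y)$, i.e. the semigroup $\mathcal{T}^A_a$ run at rate $\beta_j$. The crucial bookkeeping is that the accumulated semigroup time up to $\tau_j$ equals $\sum_{k=1}^j \beta_k |I_k| = \alpha \sum_{k=1}^j \frac{1}{k}$, which is \emph{independent of} $t_f$ and identical to the quantity appearing in Lemma \ref{prlma}. Since $\sum_k \frac{1}{k}$ diverges and $\mathcal{T}^A_a(s)y^0 \to f$ as $s\to\infty$ (Lemma \ref{cvglma1}), we obtain $y(\cdot,\tau_j)\to f$; after extending the solution continuously to the closed interval and invoking the lower bound $y(\cdot,t)\ge d > 0$ from Lemmas \ref{slblma1} and \ref{Dave}, this yields the terminal condition $y(\cdot,t_f) = f$.

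The heart of the argument is the boundedness of $\mathbf{v}$. Using Lemma \ref{cvglma2} together with the gradient estimate of Lemma \ref{egdom} (valid for $m$ large enough on $C^{1,1}$ or convex $\Omega$), and recalling that $\nabla(af) = 0$ since $af \equiv 1$, I would bound $\|\beta_j \nabla(a y)(\cdot,t)\|_\infty$ on $I_j$ by $C \alpha' j\, e^{-\lambda \alpha \sum_{k=1}^{j-1}\frac{1}{k}}$, where $\alpha' = \frac{\alpha T_0}{t_f}$. Because $\sum_{k=1}^{j}\frac{1}{k} = \ln j + \gamma + o(1)$, this behaves like $\alpha' j^{1-\lambda\alpha}$ up to a constant, which is uniformly bounded in $j$ precisely when $\alpha \ge 1/\lambda$. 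Combined with the uniform upper bound $1/y \le 1/d$ and the product-rule bound on $\nabla y$, this gives $\mathbf{v}\in L^\infty([0,t_f];L^\infty(\Omega)^n)$.

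The main obstacle — and the reason the scaling is nontrivial rather than a literal change of variables — is that one cannot simply reparametrize time in the solution of Lemma \ref{prlma}: a dilation $t\mapsto (T_0/t_f)t$ would rescale the Laplacian and violate the normalized diffusion coefficient of \eqref{eq:Mainsys1}. Instead the compression of the time axis must be absorbed entirely into the control rate $\beta_j$, which for $t_f < T_0$ grows like $\alpha' j$ with $\alpha' > \alpha$. The point to check carefully is that this larger prefactor does not destroy boundedness: the decay of $\|\nabla(ay)\|_\infty$ is governed by the \emph{accumulated} semigroup time, whose exponential rate $\lambda\alpha$ is unaffected by the scaling, so the scaling only inflates the constant factor $\alpha'$ (finite for each fixed $t_f>0$) and not the exponent. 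Hence fixing any $\alpha \ge 1/\lambda$ that is also large enough to meet the regularity threshold $m\ge m'$ of Lemma \ref{prlma} completes the argument for every $t_f>0$.
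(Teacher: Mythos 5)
Your proposal is correct and is precisely the ``straightforward scaling argument'' that the paper invokes without writing out: you compress the intervals of Lemma \ref{prlma} to total length $t_f$ and inflate the control rate to $\beta_j = (\alpha T_0/t_f)\,j$ so that the accumulated semigroup time $\alpha\sum_{k=1}^j \frac{1}{k}$ is unchanged, which preserves both the convergence to $f$ and the uniform bound on $\|\beta_j\nabla(ay)\|_\infty$ since the scaling only alters the prefactor and not the exponential decay rate. The bookkeeping and the boundedness check are exactly what is needed, so this matches the paper's intended proof.
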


Now, we are ready to state and prove our main theorem, where we relax the assumptions on the initial condition $y^0$ made in Corollary \ref{prlma2}. However, we will need to impose the additional constraint that $\Omega$ should satisfy the chain condition.

\begin{theorem}
\label{maintheo}
Let $\Omega$ be a domain that is either  $C^{1,1}$ or convex and that satisfies the chain condition. Let $y^0 \in L^2(\Omega)$ be such that $y^0 \geq 0 $ and $\int_{\Omega} y^0(\mathbf{x})d\mathbf{x}=1$. Suppose that $f \in W^{1,\infty}(\Omega)$ such that $f\geq c$ for some positive constant $c$, $\int_{\Omega} f(\mathbf{x})d\mathbf{x}=1$. Let $t_f>0$ be the final time. Then there exists $\mathbf{v} \in L^\infty([0,t_f];L^\infty(\Omega)^n)$ such that the unique mild solution $y$ of the PDE \eqref{eq:Mainsys1} satisfies $y(\cdot,t_f)=f$. 
\end{theorem}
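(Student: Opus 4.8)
The plan is to reduce Theorem \ref{maintheo} to Corollary \ref{prlma2} by first steering the rough, merely nonnegative datum $y^0$ to an intermediate density that is simultaneously strictly positive and sufficiently regular, using controls that are manifestly bounded, and then concatenating with the control furnished by Corollary \ref{prlma2}. I would fix times $0 < \tau_1 < \tau_2 < t_f$ and set $\mathbf{v} \equiv \mathbf{0}$ on $[0,\tau_1]$, so that there the PDE \eqref{eq:Mainsys1} is the Neumann heat equation, i.e. \eqref{eq:cllp1} with $a=\mathbf{1}$. This control lies trivially in $L^\infty([0,\tau_1];L^\infty(\Omega)^n)$ and accomplishes three things at once: mass conservation (Lemma \ref{cvglma1}) gives $\int_\Omega y(\cdot,\tau_1)\,d\mathbf{x}=1$; the Gaussian lower bound on the Neumann heat kernel gives, through Lemma \ref{Dave}, a constant $c_{\tau_1}>0$ with $y(\cdot,\tau_1)\geq c_{\tau_1}$; and analyticity of the heat semigroup (Lemma \ref{cvglma1}) smooths $y(\cdot,\tau_1)$ into $\mathcal{D}(A_\mathbf{1}^k)$ for every $k$, hence into high-order Sobolev spaces by elliptic regularity. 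This is exactly the step relaxing the strict-positivity hypothesis of Lemma \ref{prlma}, as anticipated in the remark preceding Lemma \ref{Dave}.

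The delicate point --- and what I expect to be the main obstacle --- is that Corollary \ref{prlma2} requires the second-stage datum to lie in $\mathcal{D}(A_a^m)$ for large $m$ with $a=1/f$, whereas the heat flow only deposits $y(\cdot,\tau_1)$ in the domains of powers of the \emph{Neumann Laplacian} $A_\mathbf{1}$. These domains carry different boundary conditions: a function $g$ with $\mathbf{n}\cdot\nabla g=0$ satisfies $\mathbf{n}\cdot\nabla(ag)=g\,(\mathbf{n}\cdot\nabla a)$, which is generally nonzero, so membership in $\mathcal{D}(A_\mathbf{1}^k)$ does not entail membership in $\mathcal{D}(A_a^m)$. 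I would bridge this gap on $[\tau_1,\tau_2]$ by using the feedback $\mathbf{v}=\nabla y/y-\alpha\,\nabla(ay)/y$, which turns \eqref{eq:Mainsys1} into the $a$-weighted diffusion \eqref{eq:cllp1} governed by the analytic semigroup $(\mathcal{T}^A_a(t))_{t\geq 0}$. Since $y(\cdot,\tau_1)$ is already smooth and bounded below, this feedback is well-defined and, I claim, bounded on $[\tau_1,\tau_2]$, while analyticity of $\mathcal{T}^A_a$ places $y(\cdot,\tau_2)$ in $\mathcal{D}(A_a^m)$ for every $m$, now with the correct weighted boundary conditions; the lower bound $y(\cdot,\tau_2)\geq d>0$ persists by Lemma \ref{slblma1}, and mass is again conserved.

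Finally, on $[\tau_2,t_f]$ I would apply Corollary \ref{prlma2} with initial datum $y(\cdot,\tau_2)\in\mathcal{D}(A_a^m)$, target $f$, and weight $a=1/f\in W^{1,\infty}(\Omega)$ (which is bounded below since $f\leq\|f\|_\infty$), obtaining a control in $L^\infty([\tau_2,t_f];L^\infty(\Omega)^n)$ that drives the solution to $f$ at time $t_f$. Concatenating the three controls gives a single $\mathbf{v}\in L^\infty([0,t_f];L^\infty(\Omega)^n)$, and by uniqueness and the concatenation property of mild solutions the trajectory satisfies $y(\cdot,t_f)=f$. The one genuinely technical verification left is the uniform boundedness of the middle feedback as $t\downarrow\tau_1$: there one cannot invoke the gradient estimate of Lemma \ref{egdom} directly, since its right-hand side blows up when $y(\cdot,\tau_1)\notin\mathcal{D}(A_a^m)$, and one must instead control $\nabla(a\,y(\cdot,t))$ uniformly on the closed subinterval using the smoothness of $y(\cdot,\tau_1)$ together with the product rule (Theorem \ref{prodrule}) and the identification $\mathcal{D}(A_a)=\mathcal{D}(B_a)$ from Proposition \ref{samedom}.
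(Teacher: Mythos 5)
Your overall architecture (heat flow to gain positivity and smoothness, then reduce to Corollary \ref{prlma2}) matches the paper's, and you have correctly located the hard point: membership of $y(\cdot,\tau_1)$ in $\mathcal{D}(A_{\mathbf{1}}^k)$ does not give membership in $\mathcal{D}(A_a^m)$ because the two domains carry different (weak) flux boundary conditions. The gap is that your proposed bridge does not close this. On $[\tau_1,\tau_2]$ you switch immediately to the state-dependent feedback $\mathbf{v}=\nabla y/y-\alpha\,\nabla(ay)/y$ and claim it is bounded, but the only gradient estimate available in the paper is Lemma \ref{egdom}, whose right-hand side is $\|(\mathbb{I}+A_a)^m y(\cdot,t)\|_a$; since $y(\cdot,\tau_1)\notin\mathcal{D}(A_a)$ (generically $\mathbf{n}\cdot\nabla(ay(\cdot,\tau_1))=y(\cdot,\tau_1)\,\mathbf{n}\cdot\nabla a\neq 0$ on $\partial\Omega$), the analytic-semigroup smoothing gives only $\|(\mathbb{I}+A_a)^m\mathcal{T}^A_a(s)y(\cdot,\tau_1)\|_a = O(s^{-m})$ as $s\downarrow 0$, so the resulting control bound blows up at the left endpoint and $\mathbf{v}$ need not lie in $L^\infty$ in time. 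The tools you invoke to repair this cannot do the job: the product rule (Theorem \ref{prodrule}) converts a bound on $\nabla y$ into one on $\nabla(ay)$ and back, but says nothing about propagating either bound forward under $\mathcal{T}^A_a$ from a datum with the wrong boundary condition; and Proposition \ref{samedom} identifies $\mathcal{D}(A_a)$ with $\mathcal{D}(B_a)$, both of which carry the \emph{weighted} condition $\mathbf{n}\cdot\nabla(a\,\cdot)=0$, so it is irrelevant to the mismatch with $\mathcal{D}(A_{\mathbf{1}})$.

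The paper resolves exactly this point by inserting an additional \emph{open-loop} stage between the heat flow and the state-dependent feedback: on a middle subinterval it sets $\mathbf{v}=\nabla f/f$, which is independent of the state and hence bounded for free (by $\|\nabla f\|_\infty/c$, using $f\in W^{1,\infty}(\Omega)$ and $f\geq c$). The resulting closed-loop dynamics is the analytic semigroup $(\mathcal{T}^B_a(t))_{t\geq 0}$ of Corollary \ref{CorIII}, whose smoothing deposits the state into $\mathcal{D}(B_a^j)$ for all $j$ --- a space whose elements already satisfy the weighted flux boundary condition --- with $\|A_a y\|_\infty$ finite via Lemma \ref{egdom} and Proposition \ref{samedom}. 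Only after that does the paper turn on the feedback $\nabla y/y-\nabla(ay)/y$, for which the uniform gradient bound now follows from the invariance statement in Corollary \ref{CorII} and commutation of $A_a$ with its semigroup. If you add this open-loop intermediate stage (or otherwise prove a uniform-in-time gradient bound for $\mathcal{T}^A_a(s)u_0$ near $s=0$ for smooth $u_0$ outside $\mathcal{D}(A_a)$, which is not available from the paper's lemmas), your argument goes through; as written, the second stage is not justified.
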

\begin{proof}
Set $\mathbf{v}(\cdot,t) = \mathbf{0}$ in the PDE \eqref{eq:Mainsys1} for each $t \in [0, \epsilon/3]$, where $\epsilon \in (0,t_f)$ is small enough. Then this PDE is the heat equation with Neumann boundary condition. From Lemma \ref{Dave}, it follows that the solution $y$ satisfies $y(\cdot, \epsilon/2)\geq c$ for some positive constant $c$. For each $t \in (\epsilon/3,2\epsilon/3]$, let $\mathbf{v}(\cdot,t)= \frac{\nabla f}{f}$. Then the mild solution of the PDE is given by the semigroup $(\mathcal{T}^B_a(t))_{t \geq 0}$, where $a = 1/f$. From Lemma \ref{cvglma1}, the semigroup $(\mathcal{T}^B_a(t))_{t \geq 0}$ is analytic. Hence, from regularizing properties of analytic semigroups \cite{lunardi2012analytic}[Theorem 2.1.1], it follows that $y(\cdot,\epsilon) \in \mathcal{D}(B^j_a)$ for each $j \in \mathbb{Z}_+$. From Lemma \ref{egdom}, this implies that $\|B_ay(\cdot,2\epsilon/3)\|_{\infty} < \infty$. Due to the product rule (Theorem \ref{prodrule}), Proposition \ref{samedom}, and Lemma \ref{egdom}, this inequality implies that $\|A_ay(\cdot,2\epsilon/3)\|_{\infty} <c$ for some $c>0$.  Let $\mathbf{v}(\cdot,t)=\frac{\nabla y}{y}-\frac{\nabla (a y)}{y}$ for $t \in [2\epsilon/3, \epsilon]$. Then from the last statement of Corollary \ref{CorII} and the fact that the operator $-A_a$ commutes with the semigroup it generates, it follows that $\|\mathcal{M}_aA_ay(\cdot,t)\|_{\infty}=\|\mathcal{M}_a\mathcal{T}_a^A(t-2\epsilon/3)_aA_ay(\cdot,2\epsilon/3)\|_{\infty} <c'$ for some $c'>0$ and for all $t \in (2\epsilon/3, \epsilon]$. Since $a \in W^{1,\infty}(\Omega)$, we can apply the result in Proposition \ref{samedom} and the gradient estimates of the Neumann Laplacian in \cite{grisvard2011elliptic}[Theorem 2.4.2.7] and \cite{maz2009boundedness}[Theorem]. Taken together, all of these observations imply that $\|\nabla y(\cdot,t)\|_{\infty} <k$ for some $k>0$ and all $t \in (2\epsilon/3, \epsilon]$. From Lemma \ref{slblma1}, it also follows that $y$ is uniformly bounded from below, and hence $\mathbf{v}(\cdot,t)$ is essentially bounded for all  $t \in[0,\epsilon]$. Lastly, due to the analyticity of the semigroup $(\mathcal{T}^A_a(t))_{t \geq 0}$, $y(\cdot,\epsilon)  \in \mathcal{D}(A^j_a)$ for each $j \in \mathbb{Z}_+$. Then the result follows from Corollary \ref{prlma2}.
\end{proof}

In the following theorem, we note that system \eqref{eq:Mainsys1} has stronger controllability properties than Theorem \ref{maintheo} describes: this system  is {\it path controllable} if the path is confined to a subset of $L^2(\Omega)$ that is regular enough. This should not be very surprising due to the large dimensionality of its control inputs as compared to the choice of controls in classical PDE control problems, where control inputs are typically localized on a small subset of the interior or boundary of the domain. We restrict the path to the space $W^{2,\infty}(\Omega)$ for simplicity.

\begin{theorem}
Let $\Omega$ be a domain that is either $C^{1,1}$ or convex, and $t_f$ be given. Suppose that $\gamma \in C^1([0,1];W^{2,\infty}(\Omega))$ such that $\gamma(t) \geq c$ for some positive constant  $c$ and for all $t \in [0,t_f]$.  Additionally, suppose that $\int_{\Omega} \gamma(\mathbf{x},t)d\mathbf{x} = 1$ for all $t \in [0,t_f]$. Then there exists $\mathbf{v} \in L^{\infty}([0,1];L^{\infty}(\Omega)^n)$ such that a solution of the PDE \eqref{eq:Mainsys1} satisfies $y(\cdot,t) = \gamma(t)$  for all $t \in [0,t_f]$, provided that $\gamma(0)=y^0$.
\end{theorem}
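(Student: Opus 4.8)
The plan is to exploit the very large control authority here by inverting the dynamics directly, in the spirit of a flatness/exact-path construction rather than evolving an initial datum forward. Since we are allowed a full velocity field $\mathbf{v}(\cdot,t)$, I would simply substitute the prescribed path $y(\cdot,t)=\gamma(\cdot,t)$ into \eqref{eq:Mainsys1} and read off the equation it imposes on the flux $\mathbf{v}\gamma$. Writing $y=\gamma$ in $y_t=\Delta y - \nabla\cdot(\mathbf{v}y)$ yields the pointwise-in-time requirement
\[
\nabla\cdot(\mathbf{v}(\cdot,t)\gamma(\cdot,t)) = \Delta\gamma(\cdot,t) - \gamma_t(\cdot,t)\quad\text{in }\Omega,
\]
together with $\mathbf{n}\cdot(\nabla\gamma-\mathbf{v}\gamma)=0$ on $\partial\Omega$. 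Because $\gamma(\cdot,t)\geq c>0$, dividing by $\gamma$ to recover $\mathbf{v}$ from the flux is harmless, so the whole problem reduces to producing, for each frozen $t$, a field $\mathbf{q}(\cdot,t)$ with $\nabla\cdot\mathbf{q}=\gamma_t$ in $\Omega$ and $\mathbf{n}\cdot\mathbf{q}=0$ on $\partial\Omega$, and then declaring $\mathbf{v}=(\nabla\gamma-\mathbf{q})/\gamma$.

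First I would take $\mathbf{q}=\nabla\psi$, where $\psi(\cdot,t)$ solves the auxiliary Neumann problem $\Delta\psi(\cdot,t)=\gamma_t(\cdot,t)$ in $\Omega$ with $\mathbf{n}\cdot\nabla\psi(\cdot,t)=0$ on $\partial\Omega$. The compatibility condition for solvability is $\int_\Omega \gamma_t(\cdot,t)\,d\mathbf{x}=0$, and this is precisely where the hypothesis $\int_\Omega \gamma(\mathbf{x},t)\,d\mathbf{x}=1$ for all $t$ is used: since $\gamma\in C^1$ in time with values in $W^{2,\infty}(\Omega)\subset L^1(\Omega)$, differentiating the constant total mass under the integral gives $\int_\Omega \gamma_t=0$. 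With this choice the boundary condition collapses to $\mathbf{n}\cdot\nabla\psi=0$, which holds by construction; and a short integration-by-parts check against the weak form of \eqref{eq:Mainsys1} (the no-flux term cancels $\nabla\gamma$, leaving $-\int_\Omega\nabla\psi\cdot\nabla\phi$, which equals $\int_\Omega\gamma_t\phi$ by the defining Neumann problem for $\psi$) shows that $y=\gamma$ is a weak solution. Since $\gamma(0)=y^0$ by assumption and weak solutions are unique, $\gamma$ is the solution associated with this $\mathbf{v}$.

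The main obstacle — indeed the only substantive estimate — is verifying that $\mathbf{v}\in L^\infty([0,t_f];L^\infty(\Omega)^n)$, which amounts to an $L^\infty$ bound on $\nabla\psi$ for the Neumann--Poisson problem. For this I would reuse the elliptic-regularity machinery already assembled for Lemma \ref{egdom}: on a $C^{1,1}$ domain, the $W^{2,p}$ Neumann estimate \eqref{eq:Lpregest} from \cite{grisvard2011elliptic}[Theorem 2.4.2.7], applied to $\gamma_t(\cdot,t)\in L^\infty(\Omega)\subset L^p(\Omega)$ for every $p<\infty$, followed by Morrey's inequality \cite{leoni2009first}[Theorem 11.34] with $p>n$, gives $\nabla\psi(\cdot,t)\in L^\infty(\Omega)$ with a bound controlled by $\|\gamma_t(\cdot,t)\|_2$; for a convex domain I would instead invoke the gradient bound of \cite{maz2009boundedness}[Theorem] exactly as in Lemma \ref{egdom}. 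Because $\gamma\in C^1([0,t_f];W^{2,\infty}(\Omega))$, the suprema $\sup_t\|\gamma_t(\cdot,t)\|_{W^{2,\infty}}$ and $\sup_t\|\nabla\gamma(\cdot,t)\|_\infty$ are finite and the Neumann solution operator is bounded and linear, so the bound on $\nabla\psi$ is uniform in $t$; combined with the uniform lower bound $\gamma\geq c$, this delivers the essential boundedness of $\mathbf{v}=(\nabla\gamma-\nabla\psi)/\gamma$ in space and time. I would close by noting that, in contrast to Theorem \ref{maintheo}, no chain condition is required, since here the regularity needed to bound the control is assumed on the path $\gamma$ rather than manufactured by the smoothing of the semigroup.
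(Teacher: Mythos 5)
Your proposal is correct and follows essentially the same route as the paper: both reduce the problem to a Neumann--Poisson equation for an auxiliary potential whose gradient corrects the flux $\nabla\gamma - \mathbf{v}\gamma$, use $\int_\Omega\gamma(\cdot,t)\,d\mathbf{x}=1$ to obtain the compatibility condition $\int_\Omega\gamma_t=0$, and invoke the same elliptic regularity results ($W^{2,p}$ estimates plus Morrey for $C^{1,1}$ domains, the gradient bound of \cite{maz2009boundedness} for convex domains) to get a $t$-uniform $L^\infty$ bound on the control. The only difference is a sign convention in the definition of the potential, which is immaterial.
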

\begin{proof}
Fix $t \in [0,t_f]$. Consider the solution $\phi(t) \in L^2(\Omega)$ of the Poisson equation in weak form, 
\begin{equation}
\omega_{\mathbf{1}}(\phi(t),\mu) = \big < \frac{\partial \gamma}{\partial t}(t),\mu \big > ~~ \text{for all} ~ \mu \in H^1(\Omega),
\end{equation}
where $\mathbf{1}$ is the function taking value $1$ everywhere on $\Omega$.
Note that since $\int_{\Omega}\gamma(\mathbf{x},t)d\mathbf{x}=1$ for all $t \in [0,t_f]$, we have that $\int_{\Omega} \frac{\partial \gamma}{\partial t} (\mathbf{x},t)d\mathbf{x}  = 0$ for each $t \in [0,t_f]$, and therefore the Poisson equation has a unique solution for each $t \in [0,t_f]$.
Then it follows from \cite{grisvard2011elliptic}[Theorem 2.4.2.7] and Morrey's inequality \cite{leoni2009first}[Theorem 11.34] (when $\Omega$ is a $C^{1,1}$ domain) and  \cite{maz2009boundedness}[Theorem] (when $\Omega$ is convex) that there exists a constant $C$ such that $\|\nabla \phi(t)\|_{\infty} \leq C \|\nabla (\partial \gamma(t)/ \partial t)\|_{2} \leq C \|\nabla(\partial \gamma (t)/ \partial t)\|_{\infty}$ for each $t \in [0,t_f]$. Then setting $\mathbf{v}(\cdot,t) = \frac{\nabla \gamma(t)}{\gamma(t)} - \frac{\nabla \phi(t)}{\gamma(t)}$ for each $t \in [0,t_f]$ gives us the desired controllability result.
\end{proof}

Note that this approach to prove path controllability using the Poisson equation coincides with the approach used by Otto in \cite{otto2001geometry} to define a formal Riemannian metric on the space of probability densities, when the domain $\Omega $ is $\mathbb{R}^n$.

\subsection{Controllability of ODE System \eqref{eq:ctrsys}}

In this section, we investigate the controllability properties of the system \eqref{eq:ctrsys}.

\begin{proposition}
	\label{MK1}
	If the graph $\mathcal{G}=(\mathcal{V},\mathcal{E})$ is not strongly
	connected, then the system \eqref{eq:ctrsys} is not locally controllable.
\end{proposition}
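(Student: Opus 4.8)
The plan is to exploit the sign constraint on the controls. Since every admissible rate satisfies $u_e \ge 0$ and every trajectory of \eqref{eq:ctrsys} remains in the simplex $\mathcal{P}(\mathcal{V})$ (so each $\mu_v \ge 0$), every elementary vector field $u_e \mathbf{Q}_e \boldsymbol{\mu}$ can only transfer mass from $S(e)$ to $T(e)$, never the reverse. I would use this to construct a linear functional that is monotone along \emph{every} admissible trajectory, and such a functional obstructs the reachability of a full neighborhood required by Definition \ref{def1}.

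First I would extract an absorbing set from the failure of strong connectivity. Since $\mathcal{G}$ is not strongly connected, there exist vertices $p,q \in \mathcal{V}$ with no directed path from $p$ to $q$. Let $W = \{p\} \cup \{v \in \mathcal{V} : \text{there is a directed path from } p \text{ to } v\}$ be the set of descendants of $p$. Then $W$ is a nonempty proper subset of $\mathcal{V}$ (indeed $p \in W$ while $q \notin W$), and no edge leaves $W$: if $e \in \mathcal{E}$ with $S(e) \in W$, then concatenating a path $p \to S(e)$ with $e$ shows $T(e) \in W$. Hence $W$ is invariant in the sense that mass cannot flow out of it.

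The central step is then to set $P(\boldsymbol{\mu}) = \sum_{v \in W}\mu_v$ and compute $\dot P$ along \eqref{eq:ctrsys}. Using the explicit entries of $\mathbf{Q}_e$, edges internal to $W$ and edges internal to $\mathcal{V}\setminus W$ contribute nothing, edges leaving $W$ do not exist, and the only surviving contributions come from edges with $S(e) \notin W$, $T(e) \in W$, each of which contributes $u_e \mu_{S(e)} \ge 0$. Therefore $\dot P \ge 0$, so $P$ is non-decreasing along every admissible trajectory, and the reachable set from any $\boldsymbol{\mu}^0$ is contained in $\{\boldsymbol{\mu} \in \mathcal{P}(\mathcal{V}) : P(\boldsymbol{\mu}) \ge P(\boldsymbol{\mu}^0)\}$ for every $t_f > 0$.

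Finally, to contradict STLC I would take an equilibrium $\boldsymbol{\mu}^{eq}$ in the interior of $\mathcal{P}(\mathcal{V})$ (all components positive, which is an equilibrium, e.g. under $\mathbf{u} \equiv 0$). Since $W$ is a nonempty proper subset, $0 < P(\boldsymbol{\mu}^{eq}) < 1$, and transferring an amount $\epsilon > 0$ of mass from a vertex of $W$ to a vertex of $\mathcal{V}\setminus W$ yields points of $\mathcal{P}(\mathcal{V})$ arbitrarily close to $\boldsymbol{\mu}^{eq}$ with $P$ strictly smaller than $P(\boldsymbol{\mu}^{eq})$; by monotonicity these points are unreachable, so no relative neighborhood of $\boldsymbol{\mu}^{eq}$ lies in the reachable set, and the system is not STLC. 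I expect the monotonicity computation to be the main obstacle — carefully accounting for all four edge types and invoking both $u_e \ge 0$ and $\mu_{S(e)} \ge 0$ — together with the boundary edge case where $P$ cannot be lowered; the latter is sidestepped by working at an interior equilibrium (or, symmetrically, by using a source set and the associated non-increasing functional on its complement).
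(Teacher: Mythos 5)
Your proof is correct and follows essentially the same route as the paper: both arguments extract an invariant vertex set from the failure of strong connectivity and exhibit a linear functional that is monotone along every admissible trajectory, which obstructs reachability of a relative neighborhood. The only cosmetic difference is that you use a single forward-invariant set $W$ and the functional $\sum_{v\in W}\mu_v$, whereas the paper combines a forward-invariant set and a backward-invariant set into the difference $\sum_{v\in \mathcal{V}_2}\mu_v-\sum_{v\in \mathcal{V}_1}\mu_v$; your closing step identifying explicit unreachable points near an interior equilibrium is a welcome extra detail.
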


\begin{proof}
	Suppose that $\mathcal{G}=(\mathcal{V},\mathcal{E})$ is not strongly
	connected. Then there exist vertices $v_1,\,v_2\in \mathcal{V}$ such
	that there does not exist a path in $\mathcal{E}$ from $v_2$ to $v_1$.
	Let $\mathcal{V}_1$ be the (non-empty) subset of vertices $v\in \mathcal{V}$ such
	that $v=v_1$ or there exists a path in $\mathcal{E}$
	from $v$ to $v_1$. Analogously,
	let $\mathcal{V}_2$ be the (non-empty) subset of vertices $v\in \mathcal{V}$ such
	that $v=v_2$ or there exists a path in $\mathcal{E}$
	from $v_2$ to $v$. 
	Since there does not exist a path in $\mathcal{E}$ from $v_2$ to $v_1$,
	it is clear that $\mathcal{V}_1$ and $\mathcal{V}_2$ are disjoint.
	%and both are nonempty
    By construction, the set $\mathcal{V}_1$ is the largest backward invariant subset of vertices containing $v_1$, and $\mathcal{V}_2$ is the largest forward invariant subset of vertices containing $v_2$.  %$\mathcal{V}$
Since no edge enters $\mathcal{V}_1$ 
from  $\mathcal{V} -  \mathcal{V}_1$, and no edge leaves $\mathcal{V}_2$, the total
mass in $\mathcal{V}_1$ can only decrease, and the total mass in  $ \mathcal{V}_2$ can only increase.
	Then the output function $\varphi\colon \mathcal{P}(\mathcal{V})\mapsto \mathbb{R}$
	defined by
	\begin{equation}
	\varphi (\boldsymbol{\mu})= \sum_{v\in \mathcal{V}_2}\mu_v- \sum_{v\in \mathcal{V}_1}\mu_v
	\end{equation}
	is nondecreasing along every solution curve of the
	system~\eqref{eq:ctrsys}, which therefore is not locally controllable.
\end{proof}
The following proposition will be used to address Problem \ref{stlcprb}.
\begin{proposition}
	\label{MK2}
	If the graph $\mathcal{G}=(\mathcal{V},\mathcal{E})$ is strongly
	connected, then the system \eqref{eq:ctrsys} is STLC from every point in ${\rm int}(\mathcal{P}(\mathcal{V}))$. 
	%\spr{(Should we cite the CDC 2017 paper for the original statement of any part of this discussion/proof?)}\kar{I don't think so. We mentioned the CDC paper already in the problem formulation.}
\end{proposition}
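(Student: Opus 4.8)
The plan is to reduce small-time local controllability to an instantaneous \emph{positive spanning} property of the control vector fields, which strong connectivity supplies despite the sign constraint $u_e \geq 0$. Writing the control vector fields of system \eqref{eq:ctrsys} as $g_e(\boldsymbol{\mu}) = \mathbf{Q}_e\boldsymbol{\mu}$, the definition of $\mathbf{Q}_e$ gives $g_e(\boldsymbol{\mu}) = \mu_{S(e)}\,(\mathbf{e}_{T(e)} - \mathbf{e}_{S(e)})$, where $\mathbf{e}_k$ denotes the $k$-th standard basis vector of $\mathbb{R}^N$. The tangent space to $\mathcal{P}(\mathcal{V})$ at any interior point is the hyperplane $H = \lbrace \mathbf{w} \in \mathbb{R}^N : \sum_v w_v = 0 \rbrace$. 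Since the chosen point $\boldsymbol{\mu}^{eq} \in {\rm int}(\mathcal{P}(\mathcal{V}))$ has all coordinates strictly positive, the scalar factors $\mu_{S(e)}$ are positive, so the first task is purely combinatorial: to show that the convex cone generated by the edge vectors $\mathbf{v}_e := \mathbf{e}_{T(e)} - \mathbf{e}_{S(e)}$, $e \in \mathcal{E}$, is all of $H$.

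First I would prove this combinatorial lemma. Fix any edge $e' = (p,q) \in \mathcal{E}$. By strong connectivity there is a directed path from $q$ back to $p$, say $q = v_0, v_1, \ldots, v_k = p$ with edges $e_i = (v_{i-1}, v_i) \in \mathcal{E}$. Summing the corresponding edge vectors telescopes, $\sum_{i=1}^{k} \mathbf{v}_{e_i} = \mathbf{e}_p - \mathbf{e}_q = -\mathbf{v}_{e'}$, which exhibits $-\mathbf{v}_{e'}$ as a non-negative combination of generators. Hence for every edge both $\mathbf{v}_{e'}$ and $-\mathbf{v}_{e'}$ lie in the cone, so the cone is in fact a linear subspace. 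Because strong connectivity makes the underlying undirected graph connected, the vectors $\lbrace \mathbf{v}_e \rbrace$ span $H$; therefore the cone equals $H$. Consequently $\lbrace g_e(\boldsymbol{\mu}^{eq}) \rbrace_{e \in \mathcal{E}}$ positively span $H$, and by continuity (the factors $\mu_{S(e)}$ stay positive) the same holds at every $\boldsymbol{\mu}$ in some neighborhood $\mathcal{O}$ of $\boldsymbol{\mu}^{eq}$ inside ${\rm int}(\mathcal{P}(\mathcal{V}))$.

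Next I would turn the positive-spanning property into controllability by an exact path-following argument; note that system \eqref{eq:ctrsys} is driftless, so the absence of a drift term removes the usual obstructions that positivity constraints on the controls would otherwise create. Given any target $\boldsymbol{\nu} \in \mathcal{O}$, choose a $C^1$ curve $\gamma:[0,t_f] \to \mathcal{O}$ with $\gamma(0) = \boldsymbol{\mu}^{eq}$ and $\gamma(t_f) = \boldsymbol{\nu}$; since $\mathcal{O}$ lies in the affine hyperplane carrying the simplex, $\dot\gamma(t) \in H$ for all $t$. At each time the linear feasibility problem $\sum_e u_e\, g_e(\gamma(t)) = \dot\gamma(t)$, $u_e \geq 0$, is solvable by the positive-spanning property, and a bounded measurable selection $t \mapsto \mathbf{u}(t)$ exists because the spanning is uniform on $\mathcal{O}$ and $\dot\gamma$ is bounded. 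With this control $\gamma$ is itself an (absolutely continuous) Carath\'eodory solution of \eqref{eq:ctrsys}, so by uniqueness the controlled trajectory started at $\boldsymbol{\mu}^{eq}$ is exactly $\gamma$ and reaches $\boldsymbol{\nu}$ at time $t_f$. Since $\boldsymbol{\nu} \in \mathcal{O}$ was arbitrary, $R^U_{t_f}(\boldsymbol{\mu}^{eq})$ contains the relative neighborhood $\mathcal{O}$, which is precisely the STLC claim of Definition \ref{def1}. Shrinking $\mathcal{O}$ (equivalently slowing $\gamma$) keeps $\|\mathbf{u}\|_\infty$ bounded, so the controls remain admissible, and keeps $\gamma$ in the interior so that the state never reaches the boundary of $\mathcal{P}(\mathcal{V})$.

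I expect the main obstacle to be the sign constraint $u_e \geq 0$, which is exactly what rules out a direct appeal to Chow's theorem or the classical driftless controllability results in \cite{bloch2015nonholonomic}. The combinatorial lemma is the heart of the matter: it shows that strong connectivity precisely compensates for the loss of symmetry, restoring a full positive span of the tangent space from the cone of admissible instantaneous velocities; this is also the sharp dividing line with Proposition \ref{MK1}, where failure of strong connectivity produces a monotone functional obstructing controllability. A secondary technical point is the bounded measurable selection of $\mathbf{u}(t)$ along $\gamma$ and the verification that trajectories remain in the interior of $\mathcal{O}$, both of which are routine given the uniform positive-spanning estimate.
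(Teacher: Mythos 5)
Your proof is correct, but it takes a genuinely different route from the paper. The paper's argument is fully constructive: it fixes a closed directed path $\gamma=(e_1,\dots,e_s)$ visiting every vertex, circulates an explicit mass $\rho$ around that path using a piecewise constant control with exactly one edge active on each subinterval, and verifies by an induction over cases that the explicit formula \eqref{coron} produces the endpoint $\boldsymbol{\mu}^0+\Delta\boldsymbol{\mu}$. You instead prove an instantaneous statement --- that the convex cone generated by the admissible velocities $\lbrace \mu_{S(e)}(\mathbf{e}_{T(e)}-\mathbf{e}_{S(e)})\rbrace_{e\in\mathcal{E}}$ is the entire tangent hyperplane $H$, because any edge vector can be reversed by telescoping along a directed cycle and the incidence vectors of a connected graph span $H$ --- and then exploit the driftless structure to follow any $C^1$ curve in the interior exactly, solving a nonnegative linear feasibility problem pointwise in time. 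The two proofs rest on the same underlying insight (a directed cycle lets you undo any single transition, which is why the sign constraint $u_e\geq 0$ is harmless), but they realize it at different levels: the paper at the level of finite-time flows, you at the level of the velocity cone. What the paper's version buys is explicit, implementable piecewise constant controls with a single active edge at a time, which are reused verbatim in the concatenation argument of Theorem \ref{MK0}; what your version buys is brevity, a neighborhood that is independent of $t_f$, and an immediate path-controllability statement. The only points you should tighten are routine and you already flag them: the measurable (in fact continuous) selection of $\mathbf{u}(t)$ can be made explicit by fixing a linear right-inverse of the incidence matrix on $H$ and replacing each negative coefficient by the corresponding cycle decomposition, and the uniform bound on $\|\mathbf{u}\|_\infty$ follows from compactness of $\overline{\mathcal{O}}$ together with the uniform positive lower bound on the coordinates $\mu_{S(e)}$ there.
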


Before proving the proposition, it is helpful to take a closer
look at the relations in the Lie algebra of the control vector fields,
and the corresponding product on the semigroup generated
by their exponentials.

Suppose that $e=(i,j),e'=(k,\ell) \in \mathcal{E}$ are two edges.
If $\{i,j\},\{k,\ell\}\subseteq \mathcal{V}$ are disjoint,
then the control matrices $\mathbf{Q}_{(i,j)}$ and $\mathbf{Q}_{(k,\ell)}$
commute, and hence so do their exponentials.
If $k=j$ and $\ell\neq j$, then $\mathbf{Q}_{(i,j)} \mathbf{Q}_{(j,\ell)}=\mathbf{0}$,
and the commutator evaluates to
\begin{equation}
[\mathbf{Q}_{(j,\ell)},\mathbf{Q}_{(i,j)}]=\mathbf{Q}_{(j,\ell)} \mathbf{Q}_{(i,j)}=\mathbf{Q}_{(i,j)}-\mathbf{Q}_{(i,\ell)}.
\end{equation}
%From here it is not hard to see
From this, we can conclude that if the graph
is strongly connected, then the Lie algebra spanned
by the control vector fields $f_e\colon \boldsymbol{\mu}\mapsto \mathbf{Q}_e \boldsymbol{\mu}$
 spans the tangent space $T_{\boldsymbol{\mu}} (\mathcal{P}(\mathcal{V}))$ at every point $\boldsymbol{\mu}\in {\rm int}(\mathcal{P}(\mathcal{V}))$. However, since in our case $\mathbf{0}$ is not an interior point
of the convex hull of admissible control values $\mathbf{u}\in [0,\infty)^M$,
classical results on STLC do not apply directly.

For any edge $e=(i,j)\in \mathcal{E}$, the exponential
of the control matrix $\mathbf{Q}_e$ is a stochastic matrix
with entries given by
\begin{equation}
(\exp t\mathbf{Q}_e)_{k\ell} = \left\{ \begin{array}{cl}
1&\mbox{ if } k=\ell\neq S(e) \\
e^{-t}&\mbox{ if } k=\ell=S(e) \\
1-e^{-t}&\mbox{ if } k=T(e) \mbox{ and } \ell=S(e) \\
0&\mbox{ otherwise.}
\end{array} \right.
\end{equation}

Rather than writing out a general formula for the corresponding
product on the group for arbitrary edges $(i,j),(j,\ell)\in \mathcal{E}$,
we only state the product for the special case of
$\mathcal{V}=\{1,2,3\}$ and edges $e=(1,2)$ and $e'=(2,3)$ as an illustration: 
\begin{equation}
e^{t\mathbf{Q}_{e'}}e^{s\mathbf{Q}_e}\!=\!
\left( \begin{array}{ccc}
e^{-s} & 0 & 0\! \\
e^{-t}(1-e^{-s}) & e^{-t} & 0\! \\
(1-e^{-t})(1-e^{-s}) & 1-e^{-t} & 1\!
\end{array} \right).
\end{equation}

\begin{proof} (of Proposition \ref{MK2}).
	Suppose that the graph $\mathcal{G}=(\mathcal{V},\mathcal{E})$
	is strongly connected.
	Fix an arbitrary point $\boldsymbol{\mu}^0\in {\rm int}(\mathcal{P}(\mathcal{V}))$.
	Then there exists $\rho >0$ such that each coordinate
	$\mu^0_i>2\rho$.
	Let $\Delta \boldsymbol{\mu}\in [-\rho/M ,\rho/M ]^M$ be an arbitrary but fixed vector such
	that $\sum_{v\in \mathcal{V}} \Delta \mu_v =0$.
	Let the final time $T>0$ be arbitrary but fixed.
	We explicitly construct a piecewise constant control
	$\mathbf{u} \colon [0,T] \mapsto [0,\infty)^{N_\mathcal{E}}$ that
	steers the system \eqref{eq:ctrsys} from $\boldsymbol{\mu}^0$ at time $0$
	to $\boldsymbol{\mu}^0+\Delta \boldsymbol{\mu}$ at time $T$.
	
	Let $v_0 \in \mathcal{V}$ be arbitrary but fixed.
	As a consequence of Proposition \ref{MK1}, there exists a path $\gamma=(e_1,\ldots, e_s)$ of edges in $\mathcal{E}$
	that connects $v_0=S(e_1)$ back to $T(e_s)=v_0$ and which
	{\em visits} every vertex $v\in \mathcal{V}$ at least once.
	For $1\leq i\leq s$, let $v_i=T(e_i)$.
	Let $\Delta t = T/s$.
	Define the finite sequence $\{\delta_i\}_{i=1}^s\in \{0,1\}^{s}$
	by $\delta_i=1$ if for all $i<j<s$, $v_j\neq v_i$, i.e.
	the edge $e_i\in \gamma$ is the last edge whose source is
	$S(e_i)=v_i$.
	This sequence ensures that a control variation in the direction
	of $\mu_v$ is only taken along the last edge that starts at $v$; in other words, that 
	mass is removed from the vertex $v$ only during the last  time interval when the path visits this vertex.
	Finally, define a finite sequence 
	%\spr{Is it an issue that you used $\sigma$ earlier to denote the sesquilinear form?}\kar{I was wondering about this. I think it shouldn't be because we don't use it in the same place. Maybe we can leave a small note to avoid confusion?} 
	$\{\sigma_i\}_{i=0}^s\in [-\rho,\rho]^{s}$
	that keeps track of the accumulated control variations, where
	%% $\sigma_1=\delta_0\Delta x_{v_0}$ and
	%by $\sigma_0=0$ and
	\begin{equation}
	\label{sigmadef}
	\sigma_0=0, ~~~~\sigma_i = \sum_{j=1}^{i} \delta_j\Delta \mu_{v_{j-1}}, ~~~~ 1<i\leq s.
	\end{equation}
	Note that if the path $\gamma$ is a Hamiltonian cycle, then
	$s=M$, $\delta_i = 1$ for all $i=1,...,s$, % is a constant equal to $1$,
	and $\sigma _i= \sum_{j=1}^{i} \Delta \mu_{v_{j-1}}$,
	which simplifies the formula \eqref{coron} below.
	
	%\spr{Is this sentence in the right place?:} 
	To distinguish between the two cases where $S(e_i)=v_i=v_0$ and $S(e_i)=v_i\neq v_0$, we introduce the vector $\mathbf{y}^0 \in \mathbb{R}^M$ by setting
	$y^0_{v_0}=\mu_{v_0}-\rho$ and $y^0_{v_i}=\mu_{v_i}$ if $v_i\neq v_0$.
	%$y^0_{v_0}=x_v(0)-\rho$ and $y^0_{v_i}=x_{v_i}$ if $v_i\neq v_0$.
	Consider the piecewise constant control
	$\mathbf{u} \colon [0,T] \mapsto [0,\infty)^{N_\mathcal{E}}$ that is defined
	on each interval $t\in [(i-1) \Delta t,i \Delta t)$, $1\leq i <s$, as 
	\begin{equation}
	\label{coron}
	u_{e_i}(t)=
	-\frac{1}{\Delta t}\ln \left( 1-\frac{\rho-\sigma_i}{y^0_{v_i}+\rho-\sigma_{i-1}}
	\right) %\label{eq:ueicontrol}
	\end{equation}
	and $u_e(t) \equiv 0$ for all $e\neq e_i$. % \spr{Is this condition correct?}. \kar{I think this is :$u_e (t) \equiv 0$ for all $e\neq e_i$ }
	%% on the first interval for $t\in [0,\Delta t)$ by
	%% \begin{equation}
	%% u_{e_1}(t)=
	%% -\frac{1}{\Delta t}\log \left(1-\frac{\rho-\sigma_1}{x^0_{v_0}} \right)
	%% \end{equation}
	%% and $u_e\equiv 0$ for all $e\neq e_1$.
	
	The key idea in this construction is that the much simpler
	control obtained by setting $\Delta \boldsymbol{\mu}=\mathbf{0}$ in definition \eqref{sigmadef} successfully moves
	a mass $\rho>0$ from $v_0$ along the path $\gamma$ and
	back to $v_0$.
	It is critical that the %$e_i$
	component $u_{e_i}$ of the control be strictly positive on the $i^{th}$ interval of time,
	which enables the application of classical signed control variations to this component on the interval.
	The explicit introduction of the nonzero $\Delta \boldsymbol{\mu}$ then
	allows the following endpoint map to be calculated explicitly:  %\spr{Why is there an arrow over the product symbol?}
	% directed product
	\begin{equation}
	\stackrel{\longleftarrow}{\prod_{1\leq i\leq s}}
	\exp \left(\Delta t u_{e_i} \mathbf{Q}_{e_i} \right) \cdot \boldsymbol{\mu}^0= \boldsymbol{\mu}^0+\Delta \boldsymbol{\mu},
	\end{equation}
	resulting in equation \eqref{coron} for the control.
	
	In the first time interval $[0,\Delta t]$, note that $\sigma_0=0$
	and $y_{v_0}=\mu_{v_0}-\rho$, and hence the 
	only nonzero and non-unity entries of the matrix exponential
	take the simpler forms 
	\begin{align*}
	\left(\exp( \Delta tu_{e_1}\mathbf{Q}_{e_1}) \right)^{v_0v_0}=& ~ e^{-\Delta tu_{e_1}}
	=1-\frac{\rho-\sigma_1}{\mu^0_{v_0}-\rho+\rho-\sigma_0} \nonumber \\
	=& ~ 1-\frac{\rho-\sigma_1}{\mu^0_{v_0}}
	\;,\;\mbox{ and } \nonumber \\
	\left(\exp(\Delta t u_{e_1}\mathbf{Q}_{e_1}) \right)^{v_1v_0} =& ~ 1-e^{-\Delta tu_{e_1}}
	=\frac{\rho-\sigma_1}{\mu^0_{v_0}-\rho+\rho-\sigma_0} \nonumber \\
	=& ~ \frac{\rho-\sigma_1}{\mu^0_{v_0}}.
	\end{align*}
	From this, it readily follows that
	\begin{align*}
	\mu_{v_0}(\Delta t) =& ~ \mu^0_{v_0}-\rho+\sigma_1 = ~ \mu^0_{v_0}-\rho+\delta_1\Delta \mu_{v_0}
	\;,\;\mbox{ and }\;\;\\ \nonumber
	\mu_{v_1}(\Delta t) =& ~ \mu^0_{v_1}+\rho-\sigma_1. \nonumber
	\end{align*}
	
	Now suppose that $1\leq i<s$ is arbitrary but fixed. We consider the following three cases.
	
	\vspace{2mm} 
	
	{\bf Case 1.} 
	$S(e_i)=v_{i}\neq v_0\neq v_{i+1}=T(e_i)$, and hence $y^0_{v_i}=\mu^0_{v_i}$ and
	$y^0_{v_{i+1}}=\mu^0_{v_{i+1}}$, and in particular $i\neq s-1$.
	
	The induction hypothesis is that the following equalities hold:
	%is that the following hold:
	\begin{equation}
	\mu_{v_i}(i\Delta t)=\mu^0_{v_i}+\rho-\sigma_{i}
	\;\;\mbox{ and }\;\;
	\mu_{v_{i+1}}(i\Delta t)=\mu^0_{v_{i+1}}. 
	\end{equation}
	The second equality is true because $\delta_i=0$, since we assumed that $i+1\neq s$, which implies that there 
	is an edge $e_j$ with $j=i+1>i$ such that $S(e_{j})=v_{j}$.
	
	For the $(i+1)^{\rm st}$ time interval $[i\Delta t,(i+1)\Delta t)$, note that
	the only nonzero and non-unity entries of the matrix exponential
	are (given that $y^0_{v_i}=\mu^0_{v_i}$): 
	\begin{align*}
	\left(\exp( \Delta tu_{e_{i+1}}\mathbf{Q}_{e_{i+1}}) \right)^{v_iv_i} =& ~ e^{-\Delta tu_{e_{i+1}}} \nonumber \\ 
	=& ~ 1-\frac{\rho-\sigma_{i+1}}{\mu^0_{v_i}+\rho-\sigma_i} 
	\;,\;\mbox{ and } \nonumber  \\ 
	\left(\exp(\Delta t u_{e_{i+1}}\mathbf{Q}_{e_{i+1}}) \right)^{v_{i+1}v_i} =& ~ 1-e^{-\Delta tu_{e_{i+1}}}  \nonumber  \\
	=& ~ \frac{\rho-\sigma_{i+1}}{\mu^0_{v_i}+\rho-\sigma_i}
	\end{align*}
	From this, it readily follows that
	\begin{align*}
	\mu_{v_i}(({i+1})\Delta t) =& ~\mu^0_{v_i}(i\Delta t)-\rho+\sigma_{i+1}
	= ~ \mu^0_{v_i}-\sigma_i +\sigma_{i+1} \nonumber \\
	=& ~ \mu^0_{v_i}+\delta_{i+1}\Delta \mu_{v_{i}}
	\;,\;\mbox{ and }\;\;\\ \nonumber
	\mu_{v_{i+1}}(({i+1})\Delta t) =& ~ \mu^0_{v_{i+1}}(i\Delta t)+\rho-\sigma_{i+1} \nonumber \\
	=& ~ \mu^0_{v_{i+1}}+\rho-\sigma_{i+1}~,
	\end{align*}
	where we have used the recursive definition of $\sigma_j$ in terms of $\delta_j$. 
%	\spr{Should the 2nd expression also be given as a function of $\delta_j$?}\kar{Both would be the same. But he has written it this way for emphasis since this shows that if $\delta_{i+1}=1$, then $\mu_{v_i}(t)$ attains its desired value $\mu^0_{v_i}+\Delta \mu_{v_{i}}$ from that time onwards and remains constant for the rest of the time. }
	\vspace{2mm} 
	
	{\bf Case 2.}
	$S(e_i)\neq v_0= v_{i+1}=T(e_i)$ and $i<s-1$, and hence $y^0_{v_i}=\mu^0_{v_i}$ and
	$y^0_{v_{i+1}}=\mu^0_{v_{i+1}}-\rho$.
	
	%\spr{From the induction hypothesis, we have that:} %is that the following hold:
	The induction hypothesis is that the following equalities hold:
	\begin{equation}
	\mu_{v_i}(i\Delta t)=\mu^0_{v_i}+\rho-\sigma_{i}
	\;\;\mbox{ and }\;\;
	\mu_{v_{i+1}}(i\Delta t)=\mu^0_{v_{i+1}}-\rho. \nonumber
	\end{equation}
	The second equality is true because $\delta_i=0$, since we assumed that $i<s-1$,
	which implies that there is an edge $e_j$ with $j={i+1}>i$ such that 
	$S(e_{j+1})=v_{j+1}$.
	
	For the $(i+1)^{\rm st}$ time interval $[i\Delta t,(i+1)\Delta t)$, note that
	the only nonzero and non-unity entries of the matrix exponential
	are (given that $y^0_{v_i}=\mu^0_{v_i}$):
	\begin{align*}
	\left(\exp( \Delta tu_{e_{i+1}}\mathbf{Q}_{e_{i+1}}) \right)^{v_iv_i} =& ~ e^{-\Delta tu_{e_{i+1}}}  \nonumber \\
	=& ~ 1-\frac{\rho-\sigma_{i+1}}{\mu^0_{v_i}+\rho-\sigma_i}
	\;,\;\mbox{ and } \nonumber \\
	\left(\exp(\Delta t u_{e_{i+1}}\mathbf{Q}_{e_{i+1}}) \right)^{v_{0}v_i} =& ~ 1-e^{-\Delta tu_{e_{i+1}}} \nonumber \\
	= & ~ \frac{\rho-\sigma_{i+1}}{\mu^0_{v_i}+\rho-\sigma_i} \nonumber 
	\end{align*}
	From this, it readily follows that
	\begin{align*}
	\mu_{v_i}(({i+1})\Delta t) =& ~ \mu^0_{v_i}(i\Delta t)-\rho+\sigma_{i+1} \nonumber \\
	=& ~ \mu^0_{v_i}+\sigma_i +\sigma_{i+1}=   \mu^0_{v_i}+\delta_{i+1}\Delta \mu_{v_{i}} 
	\;,\;\mbox{ and } \nonumber \\
	\mu_{v_{0}}(({i+1})\Delta t) =& ~ \mu^0_{v_{0}}(i\Delta t)-\rho+\rho-\sigma_{i+1} \nonumber \\
	=& ~ \mu^0_{v_{0}}+\sigma_{i+1}~, \nonumber 
	\end{align*}
	where we have used the recursive definition of $\sigma_j$ in terms of $\delta_j$. %\spr{Should the 2nd expression also be given as a function of $\delta_j$?}
	
	\vspace{2mm} 
	
	{\bf Case 3.}
	$S(e_i)=v_i= v_0\neq T(e_i)$ and $i<s-1$, and hence $y^0_{v_i}=\mu^0_{v_i}-\rho$ and
	$y^0_{v_{i+1}}=\mu^0_{v_{i+1}}$.
	
	The induction hypothesis is that the following equalities hold:
	\begin{equation}
	\mu_{v_0}(i\Delta t)=\mu^0_{v_0}-\sigma_{i}
	\;\;\mbox{ and }\;\;
	\mu_{v_{i+1}}(i\Delta t)=\mu^0_{v_{i+1}}.
	\end{equation}
	The second equality is true because $\delta_i=0$, since we assumed that $i<s-1$,
	which implies that there is an edge $e_j=e_{i+1}$ such that $S(e_{j+1})=v_{j+1}$.
	
	For the $(i+1)^{\rm st}$ time interval $[i\Delta t,(i+1)\Delta t)$, note that
	the only nonzero and non-unity entries of the matrix exponential
	are (given that $y^0_{v_i}=\mu^0_{v_i}$): 
	\begin{align*}
	\left(\exp( \Delta tu_{e_{i+1}}\mathbf{Q}_{e_{i+1}}) \right)^{v_0v_0} = & ~ e^{-\Delta tu_{e_{i+1}}}  \nonumber \\
	=& ~ 1-\frac{\rho-\sigma_1}{\mu^0_{v_0}-\rho+\rho-\sigma_0} \nonumber \\
	=& ~ 1-\frac{\rho-\sigma_1}{\mu^0_{v_0}}
	\;,\;\mbox{ and } \nonumber \\ 
	\left(\exp(\Delta t u_{e_{i+1}}\mathbf{Q}_{e_{i+1}}) \right)^{v_{i+1}v_0} =& ~ 1-e^{-\Delta tu_{e_{i+1}}} \nonumber \\
	=& ~ \frac{\rho-\sigma_1}{\mu^0_{v_0}-\rho+\rho-\sigma_0} \nonumber \\
	=& ~\frac{\rho-\sigma_1}{\mu^0_{v_0}} \nonumber
	\end{align*}
	From this, it readily follows that
	\begin{align*}
	\mu_{v_0}(({i+1})\Delta t) = &~\mu^0_{v_0}(i\Delta t)-\rho+\sigma_{i+1} \nonumber \\
	= &~\mu^0_{v_0}-\rho +\sigma_i +\sigma_{i+1} \nonumber \\
	= &~\mu^0_{v_0}-\rho+\delta_{i+1}\Delta \mu_{v_{i+1}}
	\;,\;\mbox{ and }\;\;  \nonumber \\
	\mu_{v_{i+1}}(({i+1})\Delta t) =&~\mu^0_{v_{i+1}}(i\Delta t)+\rho-\sigma_{i+1}  \nonumber \\
	= &~ \mu^0_{v_{i+1}}+\rho-\sigma_{i+1}~,  \nonumber 
	\end{align*}
	where we have used the recursive definition of $\sigma_j$ in terms of $\delta_j$.
	By the definition of $\delta_j$, it is clear that after
	%\spr{What is meant by ``after''?}\kar{Subsequent time-instants. Since this is the last time-instant the node is visited in the path $(e_i)_{i=1}^s$, the mass remains constant in this node subsequently.} 
	the last edge starting
	at $v_i$, $\mu_{v_j}(t)$ stays constant with value
	$\mu_{v_j}(t)=\mu^0_j+\Delta \mu_i$ for all $(j+1)\Delta t \leq t \leq T$.
	
	In the final step on the time interval $[T-\Delta t, T]$ corresponding to 
	the edge $e_s=(v_{s-1},v_s)=(v_{s-1},v_0)$, the initial conditions are
	\begin{align*}
	\mu_{v_{s-1}}((s-1)\Delta t) =& ~ \mu^0_{v_{s-1}}+\rho-\sigma_{s-1}
	\;,\;\mbox{ and }\;\; \nonumber \\
	\mu_{v_{0}}((s-1)\Delta t) =& ~ \mu^0_{v_0}-\rho-\delta_s\Delta \mu_{v_0}  \nonumber \\               
	=& ~ \mu^0_{v_0}-\rho - \Delta \mu_{v_0}. \nonumber              
	\end{align*}
	In the second equation, $\delta_s=1$ because there is no  subsequent edge in the path $\gamma$ that starts from $v_0$.
	%\kar{The path starts and ends in $v_0$. Hence, once we return to $v_0$ there are no more nodes to visit.}
	%, and hence $\delta_s=1$. 
	
	For this $s^{\rm th}$ time interval $[T-\Delta t,T]$, note that
	the only nonzero and non-unity entries of the matrix exponential
	are (given that $y^0_{v_{s-1}}=\mu^0_{v_{s-1}}$): 
	\begin{align*}
	\left(\exp( \Delta tu_{e_s}\mathbf{Q}_{e_s}) \right)^{v_{s-1}v_{s-1}} =& ~ e^{-\Delta tu_{e_s}} \nonumber \\
	=& ~1-\frac{\rho-\sigma_s}{\mu^0_{v_{s-1}}+\rho-\sigma_{s-1}}
	\;,\;\mbox{ and } \nonumber \\
	\left(\exp(\Delta t u_{e_s}\mathbf{Q}_{e_s}) \right)^{v_{0}v_{s-1}} =& ~ 1-e^{-\Delta tu_{e_s}} \nonumber \\
	=& ~ \frac{\rho-\sigma_s}{\mu^0_{v_i}+\rho-\sigma_{s-1}} \nonumber 
	\end{align*}
	From this, it readily follows that
	\begin{align*}
	\mu_{v_{s-1}}(T) =& ~ \mu^0_{v_{s-1}}((s-1)\Delta t)-\rho-\sigma_s \nonumber \\
	=& ~ \mu^0_{v_{s-1}}+\sigma_{s-1} -\sigma_{s} \nonumber \\
	=& ~ \mu^0_{v_{s-1}}+\Delta \mu_{v_{s-1}}
	\;,\;\mbox{ and }\;\; \nonumber \\
	\mu_{v_{0}}(T) =& ~ \mu^0_{v_{0}}(({s-1})\Delta t)-\rho+\rho-\sigma_s \nonumber \\
	=& ~ \mu^0_{v_{0}}+\sigma_s 
	= ~ \mu^0_{v_{0}}-\sum_{v\in \mathcal{V}\setminus v_0} \Delta \mu_{v} 
	= ~ \mu^0_{v_{0}}+ \Delta \mu_{v_0}. \nonumber 
	\end{align*}
	Here, we have used the recursive definition of $\sigma_j$ in terms of $\delta_j$
	and the fact that $\sum_{v\in \mathcal{V}} \Delta \mu_{v}=0$.
\end{proof}

% \begin{example}
% \label{MK3}
% This is an example of a strongly connected directed graph in which paths
% that visit every vertex must revisit many vertices many times.
% Let $m,n\in \mathbb{Z}^+$ be fixed.
% Consider a graph $(\mathcal{V},\mathcal{E})$ of $m+n$ distinct vertices
% $\mathcal{V}=\{v_i\colon i\leq m\}\cup\{w_j\colon j\leq n\}$
% and the set of edges defined by
% \begin{eqnarray}
% \mathcal{E}&=&\{(v_i,v_{i+1})\colon 1\leq i <m\}
% \\
% && \nonumber
% \cup
% \{(v_m,w_j)\colon 1\leq j <n\} \cup
% \{(w_j,v_1)\colon 1\leq j <n\}.
% \end{eqnarray}
% Picture this as a regular $(m+1)$-gon in which the first vertex $v_0=1$
% has been replaced by $n$ {\em parallel} vertices $w_j$ each of which is
% (only) connected from $v_m$ and to $v_1$. Clearly the minimal length of
% a path that visits every vertex and which is a closed cycle (ending at
% its starting point) must have at least $n(m+1)$ edges.
% \end{example}

% Since the size of the achievable  $\Delta \boldsymbol{\mu}$ is bounded by half the
% distance $\rho$ of the starting point from the boundary of
% the simplex, and the control set does not contain $\mathbf{u}=\mathbf{0}$ in its interior,
% the sizes of the small-time reachable sets are not immediately apparent. It is not immediately how large the small-time reachable sets are.
% However, the following holds:

\begin{theorem}
	\label{MK0}
	Let $T>0$. If the graph
	$\mathcal{G}=(\mathcal{V},\mathcal{E})$ is strongly
	connected, then the system \eqref{eq:ctrsys} is globally
	controllable within $T$ from every point in the interior of the simplex $\mathcal{P}(\mathcal{V})$.
\end{theorem}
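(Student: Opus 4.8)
The plan is to bootstrap the small-time local controllability established in Proposition \ref{MK2} into a global statement by exploiting the convexity of the simplex, so that no new controllability input beyond Proposition \ref{MK2} is needed. Fix a starting point $\boldsymbol{\mu}^0$ and a target $\boldsymbol{\mu}^1$, both in ${\rm int}(\mathcal{P}(\mathcal{V}))$. Since ${\rm int}(\mathcal{P}(\mathcal{V}))$ is convex, the straight segment $\boldsymbol{\mu}(s) = (1-s)\boldsymbol{\mu}^0 + s\boldsymbol{\mu}^1$, $s \in [0,1]$, lies entirely in the interior; in particular every coordinate of $\boldsymbol{\mu}(s)$ is strictly positive. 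Because this segment is compact and contained in the open set ${\rm int}(\mathcal{P}(\mathcal{V}))$, there is $\delta > 0$ with $\mu_i(s) \ge \delta$ for all $s \in [0,1]$ and all $i$. I would set $\rho = \delta/3$, so that each coordinate of every point on the segment exceeds $2\rho$, giving a single value of $\rho$ valid along the whole path.

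First I would partition the segment into $n$ equal increments and realize each increment by one application of Proposition \ref{MK2}. Put $\boldsymbol{\mu}^{(k)} = \boldsymbol{\mu}(k/n)$ and $\Delta\boldsymbol{\mu}^{(k)} = \boldsymbol{\mu}^{(k)} - \boldsymbol{\mu}^{(k-1)} = \tfrac{1}{n}(\boldsymbol{\mu}^1 - \boldsymbol{\mu}^0)$. Each increment is a zero-sum vector, since $\boldsymbol{\mu}^0$ and $\boldsymbol{\mu}^1$ both sum to one, and $\|\Delta\boldsymbol{\mu}^{(k)}\|_\infty = \tfrac{1}{n}\|\boldsymbol{\mu}^1-\boldsymbol{\mu}^0\|_\infty$. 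Choosing $n$ large enough that $\tfrac{1}{n}\|\boldsymbol{\mu}^1-\boldsymbol{\mu}^0\|_\infty \le \rho/M$ places each $\Delta\boldsymbol{\mu}^{(k)}$ in the admissible box $[-\rho/M,\rho/M]^M$ of Proposition \ref{MK2}. Since each $\boldsymbol{\mu}^{(k-1)}$ lies on the segment and hence has all coordinates larger than $2\rho$, Proposition \ref{MK2} applied at $\boldsymbol{\mu}^{(k-1)}$ (with this one uniform $\rho$ and with prescribed partial time $T/n$) furnishes a piecewise constant admissible control steering the system \emph{exactly} from $\boldsymbol{\mu}^{(k-1)}$ to $\boldsymbol{\mu}^{(k)} = \boldsymbol{\mu}^{(k-1)} + \Delta\boldsymbol{\mu}^{(k)}$.

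Next I would concatenate these $n$ controls over the successive subintervals $[(k-1)T/n,\, kT/n)$. The resulting admissible control is defined on $[0,T]$ and, by construction, drives the trajectory through $\boldsymbol{\mu}^{(0)} = \boldsymbol{\mu}^0,\, \boldsymbol{\mu}^{(1)},\, \dots,\, \boldsymbol{\mu}^{(n)} = \boldsymbol{\mu}^1$, reaching $\boldsymbol{\mu}^1$ at time $T$. As $\boldsymbol{\mu}^0$ and $\boldsymbol{\mu}^1$ are arbitrary interior points and $T>0$ is arbitrary, this would establish controllability within time $T$ between any two points of the interior, which is the asserted global controllability from every interior point.

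The step I expect to require the most care is the uniformity that keeps the chaining finite: the local reachability radius in Proposition \ref{MK2} is $\rho/M$, whereas the admissibility of that proposition requires the current point to have all coordinates above $2\rho$, and both constraints are coupled through the \emph{same} $\rho$. The argument therefore hinges on choosing one $\rho$ that works simultaneously at every point visited along the path, which is exactly what the compactness of the segment and the uniform lower bound $\delta$ provide; strong connectivity itself enters only implicitly, through its use in Proposition \ref{MK2}. As an alternative route, one could instead realize $\boldsymbol{\mu}^1$ as the unique stationary distribution of a constant, strictly positive rate vector---which exists because a strongly connected graph admits a strictly positive circulation, obtained by superposing a directed cycle through each edge---drive the state exponentially close to $\boldsymbol{\mu}^1$ by time-scaling the rates, and then close the residual gap with a single application of Proposition \ref{MK2}; the segment-chaining argument above is, however, more self-contained.
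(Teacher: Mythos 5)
Your proof is correct and follows essentially the same route as the paper's: partition the straight-line segment from $\boldsymbol{\mu}^0$ to the target into finitely many equal increments, use a single uniform $\rho$ valid along the whole segment (the paper takes $\rho=\tfrac12\min\{\mu^0_v,\mu^T_v\}$, exploiting that coordinates are affine on the segment, where you invoke compactness), and chain applications of Proposition \ref{MK2} over equal time subintervals. If anything, your explicit check that the increment lands in the box $[-\rho/M,\rho/M]^M$ is slightly more careful than the paper's step count $N={\rm ceil}(L/\rho)$.
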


\begin{proof}
	Suppose that $\boldsymbol{\mu}^0, \boldsymbol{\mu}^T\in {\rm int}(\mathcal{P}(\mathcal{V}))$.
	Let $\rho=\frac{1}{2} \min \{  \mu^0_v, \mu^T_v \colon v\in \mathcal{V}\}$,
	$L=\| \boldsymbol{\mu}^T- \boldsymbol{\mu}^0\|_1$, and $N={\rm ceil}(L/\rho)$.
	Partition the straight-line segment from $\boldsymbol{\mu}^0$ to $\boldsymbol{\mu}^T$ into $N$ segments,
	e.g. with endpoints $\mathbf{y}^k= \boldsymbol{\mu}^0+\frac{k}{N}(\boldsymbol{\mu}^T- \boldsymbol{\mu}^0)$ for $0\leq k \leq N$.
	Using Proposition \ref{MK2}, there exist controls $\mathbf{u}^k\colon [\frac{kT}{N},\frac{(k+1)T}{N}]\mapsto [0,\infty)^{N_\mathcal{E}}$
	that successively steer the system from $\boldsymbol{\mu}^k$ to $\boldsymbol{\mu}^{k+1}$.
	Thus, the concatenation of these controls steers the system
	from $\boldsymbol{\mu}^0$ to $\boldsymbol{\mu}^T$ in time $T$ using piecewise constant
	controls that take values only on the axes of $\mathbb{R}_{+}^{M}$.
\end{proof}

\subsection{Controllability of the PDE System \eqref{eq:Mainsys3}}

In this section, we will consider the controllability problem described in Problem \ref{ADRctrl2}.
We define some new notation that will be needed in this section and the following one. These definitions will be used to construct solutions of the system of PDEs \eqref{eq:Mainsys2} and hence enable the controllability and stability analysis.
Let $\mathbf{a} = [a_1 ~ a_2 ~ ... ~ a_{N}]^T \in \mathbf{L}^{\infty}(\Omega) = Z_1 \times ... \times Z_N$,  where $a_i \in L^{\infty}(\Omega)$ and $Z_i = L^{\infty}(\Omega)$ for each $i \in \mathcal{V}$. If $c > 0$, then we write $\mathbf{a} \geq  c$ to denote that $a_i \geq c $ for each $i \in \mathcal{V}$. We will  assume throughout that this condition is satisfied by $\mathbf{a}$ for some positive constant $c$. We consider the operator $\boldsymbol{\mathcal{B}}_{\mathbf{a}}: \mathcal{D}(\boldsymbol{\mathcal{B}}_{\mathbf{a}}) \rightarrow \mathbf{L}^2_{\mathbf{a}}(\Omega)$, where $\mathbf{L}^2_{\mathbf{a}}(\Omega) = L^2_{a_1}(\Omega) \times ... \times L^2_{a_{N}}(\Omega)$ is equipped with the norm $\|\cdot\|_{\mathbf{a}}$, defined as $\|\mathbf{u}\|_{\mathbf{a}} = (\sum_{i=1}^{N} \|u_i\|^2_{a})^{1/2}$ for each $\mathbf{u} = [u_1 ~ ...~ u_{N}]^T \in \mathbf{L}^2_{\mathbf{a}}(\Omega)$, and $\mathcal{D}(\boldsymbol{\mathcal{B}}_{\mathbf{a}})=\mathcal{D}(B_{a_1}) \times \mathcal{D}(B_{a_2}) \times...\times \mathcal{D}(B_{a_N})$. The operator $\boldsymbol{\mathcal{B}}_{\mathbf{a}}$ is defined by $\boldsymbol{\mathcal{B}}_\mathbf{a}\mathbf{v} = [B_{a_1}v_1 ~ B_{a_2}v_2 ~ ... ~ B_{a_N}v_N]^T$ for each $\mathbf{v} =  [v_1 ~ ...~ v_N]^T \in \mathcal{D}(\boldsymbol{\mathcal{B}}_{\mathbf{a}})$. Corresponding to each matrix $\mathbf{Q}_e$, we associate a bounded operator $\boldsymbol{\mathcal{Q}}_e$ on the space $\mathbf{L}^2_{\mathbf{a}}(\Omega)$ given by $\boldsymbol{(\mathcal{Q}}_e \mathbf{y})(\mathbf{x}) = \mathbf{Q}_e \mathbf{y}(\mathbf{x}) $ for each $\mathbf{y} = [y_1 ~  ... ~ y_{N}]^T \in \mathbf{L}^2_{\mathbf{a}}(\Omega)$ and a.e. $\mathbf{x} \in \Omega$. Let $b \in \mathbf{L}^{\infty}(\Omega)$. $\boldsymbol{\mathcal{M}}_{\mathbf{b}}$ will denote the multiplication operator defined by $\boldsymbol{\mathcal{M}}_\mathbf{b}\mathbf{v}= [\mathcal{M}_{b_1}v_1 ~ \mathcal{M}_{b_2}v_2 ~ ... ~ \mathcal{M}_{b_N}v_N]^T$ for each $\mathbf{v} \in \mathbf{L}^2(\Omega) =  \mathbf{L}^2_{\mathbf{a}}(\Omega)$. For a function $K_e \in L^{\infty}(\Omega)$, $K_e \boldsymbol{\mathcal{Q}}_e$ will denote the product operator $\boldsymbol{\mathcal{M}}_{\mathbf{b}}\boldsymbol{\mathcal{Q}}_e$, where $\boldsymbol{\mathcal{M}}_{\mathbf{b}}$ is the multiplication operator corresponding to the function $\mathbf{b}\in \mathbf{L}^{\infty}(\Omega)$ defined by setting $b_i = K_e$ for each $i \in \mathcal{V}$.

Using these definitions, we construct some semigroups on the space $\mathbf{L}_{\mathbf{a}}^2(\Omega)$ that will be used to address Problem \ref{ADRctrl2}.

\begin{lemma}
\label{syspos}
Let $\lbrace K_e \rbrace_{e \in \mathcal{E}}$ be a set of non-negative functions in $L^{\infty}(\Omega)$. Suppose $\mathbf{b} \in \mathbf{L}^{\infty}(\Omega)$ such that $b_i =D_i\mathbf{1}$ is a positive constant function for each $i \in \mathcal{V}$. Then the operator $-\boldsymbol{\mathcal{M}}_{\mathbf{b}}\boldsymbol{\mathcal{B}}_{\mathbf{a}} + \sum_{e \in \mathcal{E}}K_e\boldsymbol{\mathcal{Q}}_e$ generates a semigroup of operators $(\boldsymbol{\mathcal{S}}(t))_{t \geq 0}$ on $\mathbf{L}^2_{\mathbf{a}}(\Omega)$. Moreover, the semigroup is positive and {\it mass-conserving}, i.e., if $\mathbf{y}^0 \in \mathbf{L}^2_\mathbf{a}(\Omega)$ is real-valued, then $\sum_{i \in \mathcal{V}}\int_{\Omega} (\boldsymbol{\mathcal{S}}(t)\mathbf{y}^0)_i(\mathbf{x})d\mathbf{x} = \sum_{i \in \mathcal{V}}\int_{\Omega} {y}_i(\mathbf{x})d\mathbf{x}$ for all $t \geq 0$.

Additionally, if $a_i \in W^{1,\infty}(\Omega)$, then $\boldsymbol{\mathcal{S}}(t)\mathbf{y}^0$ is the unique mild solution of the system \eqref{eq:Mainsys2} with $f_i = 1/a_i$,  $\mathbf{v}_i(\cdot,t) = D_i\nabla f_i/f_i$, and $u_e(t)=K_e$ for all $i \in \mathcal{V}$, all $e \in \mathcal{E}$, and all $t \in [0,t_f]$.
\end{lemma}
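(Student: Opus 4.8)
The plan is to regard the generator as a bounded perturbation of the block-diagonal operator $-\boldsymbol{\mathcal{M}}_{\mathbf{b}}\boldsymbol{\mathcal{B}}_{\mathbf{a}}$, establishing generation, positivity, and mass conservation in that order, and then to identify the resulting semigroup with the mild solution of \eqref{eq:Mainsys2} using uniqueness. For generation, I would first note that since each $D_i>0$ is a constant, the $i$-th diagonal block $-D_i B_{a_i}$ generates the rescaled semigroup $(\mathcal{T}^B_{a_i}(D_i t))_{t\geq 0}$ by Corollary \ref{CorIII}; taking the product over $i\in\mathcal{V}$ shows that $-\boldsymbol{\mathcal{M}}_{\mathbf{b}}\boldsymbol{\mathcal{B}}_{\mathbf{a}}$ generates a strongly continuous semigroup $(\boldsymbol{\mathcal{T}}(t))_{t\geq 0}$ on $\mathbf{L}^2_{\mathbf{a}}(\Omega)$ with domain $\mathcal{D}(\boldsymbol{\mathcal{B}}_{\mathbf{a}})$. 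Because each $K_e\in L^{\infty}(\Omega)$ and each $\boldsymbol{\mathcal{Q}}_e$ is bounded, the reaction term $\sum_{e\in\mathcal{E}}K_e\boldsymbol{\mathcal{Q}}_e$ is bounded on $\mathbf{L}^2_{\mathbf{a}}(\Omega)$, so the bounded perturbation theorem \cite{engel2000one}[Theorem III.1.3] yields a semigroup $(\boldsymbol{\mathcal{S}}(t))_{t\geq 0}$ generated by the full operator, with the same domain.

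For positivity, the diagonal semigroup $(\boldsymbol{\mathcal{T}}(t))_{t\geq 0}$ is positive since each $\mathcal{T}^B_{a_i}$ is positive by Corollary \ref{CorIII}. The reaction operator is not itself positive because each matrix $\mathbf{Q}_e$ carries the entry $-1$ at $(S(e),S(e))$. I would absorb this sign by choosing $c>0$ large enough that the pointwise matrix $\sum_e K_e(\mathbf{x})\mathbf{Q}_e+c\,\mathbb{I}$ has nonnegative entries for a.e. $\mathbf{x}$ (its off-diagonal entries $\sum_{e:\,T(e)=i,\,S(e)=j}K_e(\mathbf{x})$ are already nonnegative, and the diagonal entries $c-\sum_{e:\,S(e)=i}K_e(\mathbf{x})$ are nonnegative once $c\geq\|\sum_e K_e\|_{\infty}$). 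Then $\sum_e K_e\boldsymbol{\mathcal{Q}}_e+c\,\mathbb{I}$ is a positive bounded operator, while $-\boldsymbol{\mathcal{M}}_{\mathbf{b}}\boldsymbol{\mathcal{B}}_{\mathbf{a}}-c\,\mathbb{I}$ still generates the positive semigroup $(e^{-ct}\boldsymbol{\mathcal{T}}(t))_{t\geq 0}$. Since a bounded positive perturbation of the generator of a positive semigroup again generates a positive semigroup (its Dyson--Phillips series consists entirely of positive operators), $(\boldsymbol{\mathcal{S}}(t))_{t\geq 0}$ is positive.

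The mass-conservation property rests on two structural facts. First, every column of $\mathbf{Q}_e$ sums to zero, so $\sum_{i\in\mathcal{V}}\big(\sum_e K_e\boldsymbol{\mathcal{Q}}_e\mathbf{y}\big)_i(\mathbf{x})=0$ pointwise for every $\mathbf{y}$. Second, arguing as in Lemma \ref{cvglma1}, for each $z\in\mathcal{D}(B_{a_i})$ one has $\int_{\Omega}(B_{a_i}z)(\mathbf{x})\,d\mathbf{x}=\langle B_{a_i}z,1/a_i\rangle_{a_i}=\sigma_{a_i}(z,1/a_i)=0$, since $\nabla(a_i\cdot(1/a_i))=\nabla\mathbf{1}=\mathbf{0}$. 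Writing the mild-solution identity $\mathbf{y}(t)=\mathbf{y}^0+\mathcal{L}\int_0^t\mathbf{y}(s)\,ds$ for the generator $\mathcal{L}$, where $\int_0^t\mathbf{y}(s)\,ds\in\mathcal{D}(\mathcal{L})$, then integrating each component over $\Omega$ and summing over $i$, both the diagonal contribution and the reaction contribution of $\mathcal{L}$ vanish by these two facts, giving $\sum_i\int_{\Omega}y_i(t)\,d\mathbf{x}=\sum_i\int_{\Omega}y_i^0\,d\mathbf{x}$ for all $t\geq 0$.

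Finally, for the identification, when $a_i\in W^{1,\infty}(\Omega)$ Corollary \ref{CorIII} identifies each diagonal block with the drift-diffusion equation carrying $f_i=1/a_i$ and $\mathbf{v}_i=D_i\nabla f_i/f_i$, so that $-D_i B_{a_i}$ represents $D_i\Delta(\cdot)-\nabla\cdot(\mathbf{v}_i\,\cdot)$ in the weak sense. Setting $u_e(t)=K_e$ makes the reaction term $\mathcal{F}_k=\sum_{e\in\mathcal{E}}\sum_{j\in\mathcal{V}}K_e Q_e^{kj}y_j$ coincide with $\big(\sum_e K_e\boldsymbol{\mathcal{Q}}_e\mathbf{y}\big)_k$, so $\mathcal{L}$ is exactly the operator associated with \eqref{eq:Mainsys2}; since a $C_0$-semigroup is uniquely determined by its generator, $\boldsymbol{\mathcal{S}}(t)\mathbf{y}^0$ is the unique mild solution. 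I expect the positivity step to be the main obstacle, because the reaction operator fails to be positive on its own and its sign structure must be handled carefully through the constant shift before the positive-perturbation argument applies.
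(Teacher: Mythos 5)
Your proof is correct, but it departs from the paper's argument in two places, both in a reasonable way. For positivity, the paper invokes the Lie--Trotter product formula: it writes $\boldsymbol{\mathcal{S}}(t)\mathbf{y}^0=\lim_{n\to\infty}[\boldsymbol{\mathcal{V}}(t/n)\boldsymbol{\mathcal{U}}(t/n)]^n\mathbf{y}^0$, where $(\boldsymbol{\mathcal{V}}(t))$ is the (contractive, positive) diffusion semigroup and $(\boldsymbol{\mathcal{U}}(t))$ is the explicitly computable matrix-exponential semigroup $e^{\sum_e K_e(\mathbf{x})\mathbf{Q}_e t}$, which is positive because each $\mathbf{Q}_e$ is essentially non-negative; positivity of $\boldsymbol{\mathcal{S}}(t)$ then passes to the limit. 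You instead shift by $c\,\mathbb{I}$ with $c\geq\|\sum_e K_e\|_\infty$ to make the reaction term a positive bounded operator and conclude via positivity of each term of the Dyson--Phillips series. Both are standard and both work; your route avoids having to verify the stability hypotheses of the Trotter formula, while the paper's route has the side benefit that the same product formula is reused immediately for mass conservation. There you again diverge: the paper deduces conservation from the fact that each factor in the Trotter product conserves $\sum_i\int_\Omega(\cdot)_i$, whereas you integrate the mild-solution identity directly, using $\sigma_{a_i}(z,1/a_i)=0$ (exactly the computation of Lemma \ref{cvglma1} and Lemma \ref{eqvsol}) together with the zero column sums of $\mathbf{Q}_e$. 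Your version is arguably more self-contained, and you also spell out the identification of $\boldsymbol{\mathcal{S}}(t)\mathbf{y}^0$ with the mild solution of \eqref{eq:Mainsys2}, which the paper's proof leaves implicit via Corollary \ref{CorIII}. No gaps.
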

\begin{proof}
The generation of the semigroup $(\boldsymbol{\mathcal{S}}(t))_{t \geq 0}$  follows from the fact that $-\boldsymbol{\mathcal{M}}_{\mathbf{b}}\boldsymbol{\mathcal{B}}_{\mathbf{a}} + \sum_{e \in \mathcal{E}}K_e\boldsymbol{\mathcal{Q}}_e$ is a bounded perturbation of the operator $-\boldsymbol{\mathcal{M}}_{\mathbf{b}}\boldsymbol{\mathcal{B}}_{\mathbf{a}}$. The positivity preserving property of the semigroup can be demonstrated as follows using the {\it Lie-Trotter product formula} \cite{engel2000one}[Corollary III.5.8]. Let $(\boldsymbol{\mathcal{U}}(t))_{t \geq 0}$ be the semigroup generated by the operator $\sum_{e \in \mathcal{E}}K_e\boldsymbol{\mathcal{Q}}_e$. In fact, the semigroup can be explicitly represented as $\boldsymbol{\mathcal{U}}(t) = e^{\sum_{e \in \mathcal{E}}K_e\boldsymbol{\mathcal{Q}}_e t}$ for each $t \geq 0$. Moreover, $(e^{\sum_{e \in \mathcal{E}}K_e\boldsymbol{\mathcal{Q}}_e t}\mathbf{y}^0)(\mathbf{x}) = e^{\sum_{e \in \mathcal{E}}K_e(\mathbf{x})\mathbf{Q}_e t}\mathbf{y}^0(\mathbf{x})$ for each $\mathbf{y}^0 \in \mathbf{L}^2_\mathbf{a}(\Omega)$ and a.e. $\mathbf{x} \in \Omega$. The semigroup $(\boldsymbol{\mathcal{U}}(t))_{t \geq 0}$ is positivity preserving since each matrix $\mathbf{Q}_e$ has positive off-diagonal entries. From Corollary \ref{CorII}, we also note that the semigroup $(\boldsymbol{\mathcal{V}}(t))_{t \geq 0}$ generated by the operator $-\boldsymbol{\mathcal{M}}_{\mathbf{b}}\boldsymbol{\mathcal{B}}_{\mathbf{a}}$ is positivity preserving. Moreover, since $\sum_{e \in \mathcal{E}}K_e\boldsymbol{\mathcal{Q}_e}$ is a bounded operator, there exists $w \in \mathbb{R}$ such that $\|\boldsymbol{\mathcal{U}}(t)\|_{op} \leq M e^{w t}$ for some positive constant $M$ for all $t \geq 0$.
The semigroup $(\boldsymbol{\mathcal{V}}(t))_{t \geq 0}$ is {\it contractive}, i.e., $\|\boldsymbol{\mathcal{V}}(t)\|_{op} \leq 1$ for all $t \geq 0$. Hence, it follows from the Lie-Trotter product formula that $\boldsymbol{\mathcal{S}}(t)\mathbf{y}^0 = \lim_{n \rightarrow \infty}[\boldsymbol{\mathcal{V}}(t/n)\boldsymbol{\mathcal{U}}(t/n)]^n\mathbf{y}^0 $ for all $t \geq 0$. Therefore, the semigroup $(\boldsymbol{\mathcal{S}}(t))_{t \geq 0}$ is positivity preserving. Through another application of the Lie-Trotter product formula, it follows that $\sum_{i \in \mathcal{V}}\int_{\Omega} (\boldsymbol{\mathcal{S}}(t)\mathbf{y}^0)_i(\mathbf{x})d\mathbf{x} = \sum_{i \in \mathcal{V}}\int_{\Omega} y^0_i(\mathbf{x})d\mathbf{x}$ for all $t \geq 0$.
\end{proof}

In the following lemma, we identify a relation between solutions of the system of PDEs \eqref{eq:Mainsys2} and solutions of the ODE \eqref{eq:ctrsys}. 

\begin{lemma}
\label{eqvsol}
Let $\lbrace q_e \rbrace_{e \in \mathcal{E}}$ be a set of non-negative constants. Suppose $\mathbf{b} \in \mathbf{L^{\infty}}(\Omega)$ such that $b_i =D_i\mathbf{1}$ is a positive constant function for each $i \in \mathcal{V}$. Let $(\boldsymbol{\mathcal{S}}(t))_{t \geq 0}$ be the semigroup generated by the operator $-\boldsymbol{\mathcal{M}}_{\mathbf{b}}\boldsymbol{\mathcal{B}}_{\mathbf{a}} + \sum_{e \in \mathcal{E}}q_e\boldsymbol{\mathcal{Q}}_e$. Additionally, assume that $\mathbf{y}^0 \in \mathcal{D}(-\boldsymbol{\mathcal{M}}_{\mathbf{b}}\boldsymbol{\mathcal{B}}_{\mathbf{a}})$ is real-valued and that $\mu^0_i = \int_{\Omega} y^0_i(\mathbf{x})d\mathbf{x}$ for each $i \in \mathcal{V}$. Then the solution of the system \eqref{eq:ctrsys} satisfies $\mu_i(t) =\int_{\Omega}(\boldsymbol{\mathcal{S}}(t)\mathbf{y}^0)_i(\mathbf{x})d \mathbf{x}$ for each $t \geq 0$ and each $i \in \mathcal{V}$.
\end{lemma}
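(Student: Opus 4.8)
The plan is to prove the identity by a uniqueness argument for the linear ODE \eqref{eq:ctrsys}: I would show that the vector of spatial integrals $\boldsymbol{\nu}(t)$, with components $\nu_i(t) = \int_{\Omega} (\boldsymbol{\mathcal{S}}(t)\mathbf{y}^0)_i(\mathbf{x})\,d\mathbf{x}$, solves exactly the same initial value problem as $\boldsymbol{\mu}(t)$ with $u_e \equiv q_e$, after which uniqueness of solutions of the linear system \eqref{eq:ctrsys} forces $\boldsymbol{\nu} = \boldsymbol{\mu}$.

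First I would observe that, since $\sum_{e \in \mathcal{E}} q_e \boldsymbol{\mathcal{Q}}_e$ is a bounded operator and each $D_i>0$ is a constant, the domain of the full generator $-\boldsymbol{\mathcal{M}}_{\mathbf{b}}\boldsymbol{\mathcal{B}}_{\mathbf{a}} + \sum_{e}q_e\boldsymbol{\mathcal{Q}}_e$ coincides with $\mathcal{D}(\boldsymbol{\mathcal{M}}_{\mathbf{b}}\boldsymbol{\mathcal{B}}_{\mathbf{a}}) = \mathcal{D}(\boldsymbol{\mathcal{B}}_{\mathbf{a}})$. Hence the hypothesis $\mathbf{y}^0 \in \mathcal{D}(\boldsymbol{\mathcal{M}}_{\mathbf{b}}\boldsymbol{\mathcal{B}}_{\mathbf{a}})$ places $\mathbf{y}^0$ in the domain of the generator, so that $\mathbf{z}(t) := \boldsymbol{\mathcal{S}}(t)\mathbf{y}^0$ is a classical solution: it is continuously differentiable in $t$ with values in $\mathbf{L}^2_{\mathbf{a}}(\Omega)$, remains in the domain, and satisfies $\dot{\mathbf{z}}(t) = \big(-\boldsymbol{\mathcal{M}}_{\mathbf{b}}\boldsymbol{\mathcal{B}}_{\mathbf{a}} + \sum_{e}q_e\boldsymbol{\mathcal{Q}}_e\big)\mathbf{z}(t)$. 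Because $\Omega$ is bounded and each $a_i$ is bounded below by a positive constant, the functional $\ell_i:\mathbf{w}\mapsto \int_{\Omega} w_i(\mathbf{x})\,d\mathbf{x}$ is bounded on $\mathbf{L}^2_{\mathbf{a}}(\Omega)$, so $\nu_i = \ell_i\circ\mathbf{z}$ is differentiable and the time derivative passes through the integral, giving $\dot{\nu}_i(t) = \ell_i(\dot{\mathbf{z}}(t))$.

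It then remains to evaluate $\ell_i(\dot{\mathbf{z}}(t))$ term by term. The diffusion contribution vanishes: using the defining weak relation for $B_{a_i}$ with test function $\phi = 1/a_i \in H^1_{a_i}(\Omega)$, one gets $\int_{\Omega} (B_{a_i} z_i)(\mathbf{x})\,d\mathbf{x} = \sigma_{a_i}(z_i, 1/a_i) = \int_{\Omega} \tfrac{1}{a_i}\nabla(a_i z_i)\cdot\nabla(\mathbf{1})\,d\mathbf{x} = 0$, which is exactly the computation already used to establish mass conservation in Lemma \ref{cvglma1}; multiplication by the constant $D_i$ leaves this zero. The reaction contribution is pointwise multiplication by the matrices $\mathbf{Q}_e$, so $\int_{\Omega} \big(\sum_e q_e \boldsymbol{\mathcal{Q}}_e\mathbf{z}\big)_i\,d\mathbf{x} = \sum_{e}q_e\sum_{j\in\mathcal{V}}Q_e^{ij}\int_{\Omega} z_j\,d\mathbf{x} = \sum_e q_e\sum_j Q_e^{ij}\nu_j(t)$. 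Collecting terms yields $\dot{\boldsymbol{\nu}}(t) = \sum_{e\in\mathcal{E}} q_e \mathbf{Q}_e\boldsymbol{\nu}(t)$, which is precisely \eqref{eq:ctrsys} with $u_e\equiv q_e$, while $\nu_i(0) = \int_{\Omega} y^0_i\,d\mathbf{x} = \mu^0_i$. Uniqueness of solutions of the linear ODE then gives $\boldsymbol{\nu}(t) = \boldsymbol{\mu}(t)$ for all $t\geq 0$, as claimed.

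The main obstacle I anticipate is the justification for differentiating the spatial integral in time, rather than any hard estimate: it rests on $\mathbf{y}^0$ lying in the domain of the generator (so the mild solution is strong and $t\mapsto \mathbf{z}(t)$ is genuinely $C^1$ into $\mathbf{L}^2_{\mathbf{a}}(\Omega)$) together with the boundedness of the integration functional $\ell_i$. The vanishing of the diffusion integral is the computational crux, but it is immediate from the weak formulation with test function $1/a_i$ and requires no second-order regularity of $z_i$.
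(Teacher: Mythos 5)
Your proposal is correct and takes essentially the same route as the paper: the paper's proof also differentiates $\int_\Omega y_i(\mathbf{x},t)\,d\mathbf{x}$, kills the diffusion term via $D_i\sigma_{a_i}(y_i,1/a_i)=0$ (the test function $1/a_i$ making $\nabla(a_i\cdot(1/a_i))=\nabla\mathbf{1}=0$), and reads off the ODE \eqref{eq:ctrsys}. Your version merely makes explicit the justification for differentiating under the integral (classical solution from $\mathbf{y}^0$ in the generator's domain, boundedness of the integration functional) and the final appeal to ODE uniqueness, which the paper leaves implicit.
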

\begin{proof}
Let $\mathbf{y}(\cdot,t) = \boldsymbol{\mathcal{S}}(t)\mathbf{y}^0$ for each $t \geq 0$.
Then the result follows by noting that
\begin{align*}
\frac{d}{dt}\int_{\Omega} & y_i(\mathbf{x},t)d\mathbf{x} \\ \nonumber
 =& ~ \int_{\Omega} D_iB_{a_i} y_{i}(\mathbf{x},t)d\mathbf{x} +  \sum_{j = 1}^{N}\sum_{e \in \mathcal{E}} \int_{\Omega} q_e Q_e^{ij}y_i(\mathbf{x},t)d\mathbf{x} \\ \nonumber
=& ~ D_i\sigma_{a_i}(y_i,1/a_i) + \sum_{j = 1}^{N}\sum_{e \in \mathcal{E}} q_eQ_e^{ij} \int_{\Omega} y_i(\mathbf{x},t)d\mathbf{x} \\ \nonumber
=& ~ \sum_{j = 1}^{N}\sum_{e \in \mathcal{E}} q_eQ_e^{ij} \int_{\Omega} y_i(\mathbf{x},t)d\mathbf{x}
\end{align*}
for all $t \geq 0$.
\end{proof}
The lemma above allows us to apply the results of Theorems \ref{maintheo} and  \ref{MK0} to prove the following controllability result, which addresses Problem \ref{ADRctrl2}.
%From the above lemma, the following controllability result addressing Problem \ref{ADRctrl2} can be proved by applying the results of Theorem \ref{maintheo}and Theorem \ref{MK0}.
\begin{theorem}
Let $\Omega$ be a domain that is $C^{1,1}$ or convex and that satisfies the chain condition. Let $t_f>0$. Let $f_i \in W^{1,\infty}(\Omega)$ for each $i \in \mathcal{V}$ such that $f_i \geq c$ for some positive constant $c$. Suppose $\mathbf{y}^0 \in \mathbf{L}^2(\Omega) $ such that $\mathbf{y}^0 \geq 0$ and $\sum_{i}\int_{\Omega}f_i(\mathbf{x})d\mathbf{x} = \sum_{i}\int_{\Omega}y^0_i(\mathbf{x})d\mathbf{x}$. Then there exist control parameters $\lbrace v_i \rbrace_{i \in V}$ in $L^\infty([0,t_f];L^\infty(\Omega)^n)$ and $\lbrace u_e \rbrace_{e \in \mathcal{E}}$ in $L^\infty([0,t_f])$, where each $u_e$ is non-negative, such that the unique mild solution of the system \eqref{eq:Mainsys2} satisfies $y_i(\cdot,t_f) = f_i $ for each $i \in \mathcal{V}$.
\end{theorem}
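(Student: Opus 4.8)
The plan is to \emph{decouple} the problem into two tasks that are carried out on consecutive time intervals: driving the total mass in each discrete state to its target value $m_i := \int_\Omega f_i(\mathbf{x})\,d\mathbf{x}$, and then shaping the spatial profile inside each state. The decoupling exploits the structural fact that the diffusion operators are mass-conserving while the transition operators $\boldsymbol{\mathcal{Q}}_e$ only redistribute mass among states. Since the common total mass is normalized to $1$ (as in Problem \ref{ADRctrl2}) and $f_i \geq c > 0$ forces $m_i > 0$, the target mass vector $\mathbf{m} = (m_1,\dots,m_N)$ lies in the interior of the simplex $\mathcal{P}(\mathcal{V})$.

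\textbf{Mass-control phases.} Fix times $0 < \tau_1 < \tau_2 < t_f$. On $[0,\tau_2]$ I would set all velocities $\mathbf{v}_i \equiv \mathbf{0}$, so each state evolves under its Neumann heat semigroup, which is mass-conserving; consequently the mass vector $\boldsymbol{\mu}(t) = \big(\int_\Omega y_i(\mathbf{x},t)\,d\mathbf{x}\big)_i$ evolves according to the CTMC forward equation \eqref{eq:ctrsys}, exactly as in Lemma \ref{eqvsol} (the computation is identical with the Neumann Laplacian in place of $B_{a_i}$, since $\int_\Omega \Delta y_i \, d\mathbf{x} = 0$, and it extends to general $\mathbf{L}^2$ data by density). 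On $[0,\tau_1]$ I would take every rate $u_e$ equal to a fixed positive constant; by strong connectivity the resulting generator is irreducible, so $\boldsymbol{\mu}(\tau_1)$ has strictly positive entries and lies in $\mathrm{int}(\mathcal{P}(\mathcal{V}))$. On $[\tau_1,\tau_2]$, Theorem \ref{MK0} then supplies a piecewise-constant, nonnegative control $\{u_e\}$ that steers $\boldsymbol{\mu}$ from $\boldsymbol{\mu}(\tau_1)$ exactly to $\mathbf{m}$. Throughout $[0,\tau_2]$, the semigroup of Lemma \ref{syspos} keeps each $y_i$ nonnegative, so $\mathbf{y}(\cdot,\tau_2) \in \mathbf{L}^2(\Omega)$ is nonnegative with $\int_\Omega y_i(\mathbf{x},\tau_2)\,d\mathbf{x} = m_i$ for every $i \in \mathcal{V}$.

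\textbf{Shape-control phase.} On $[\tau_2,t_f]$ I would set all transition rates $u_e \equiv 0$, which decouples \eqref{eq:Mainsys2} into $N$ independent advection-diffusion PDEs of the form \eqref{eq:Mainsys1} (with diffusion constant $D_i$), each conserving its own mass $m_i$. For each $i$ I would apply Theorem \ref{maintheo} to the unit-mass densities $z_i^0 = y_i(\cdot,\tau_2)/m_i$ and $g_i = f_i/m_i$: here $z_i^0$ is a nonnegative $L^2$ datum of unit mass and $g_i \in W^{1,\infty}(\Omega)$ with $g_i \geq c/m_i > 0$, so all hypotheses hold (the passage from $D=1$ to $D_i$ being a routine time rescaling). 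This yields $\mathbf{v}_i \in L^\infty([\tau_2,t_f];L^\infty(\Omega)^n)$ driving $z_i$, and hence $y_i$, to $f_i$ at time $t_f$. Concatenating the controls---$\mathbf{v}_i \equiv \mathbf{0}$ on $[0,\tau_2]$ and the above on $[\tau_2,t_f]$, and $u_e$ constant positive on $[0,\tau_1]$, the Theorem \ref{MK0} control on $[\tau_1,\tau_2]$, and $0$ on $[\tau_2,t_f]$---gives admissible parameters in the required spaces with each $u_e \geq 0$, whose associated mild solution of \eqref{eq:Mainsys2} satisfies $y_i(\cdot,t_f) = f_i$.

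\textbf{Main obstacle.} The principal difficulty is the coordination between the phases: the shape-control step requires that, at time $\tau_2$, each $y_i$ already carry \emph{exactly} the target mass $m_i$ while remaining a valid nonnegative $L^2$ datum. Securing this rests on the combination of the mass-conservation of the diffusion operators, the positivity- and mass-preserving properties of the coupled semigroup (Lemma \ref{syspos}), and the identification of the mass dynamics with the finite-dimensional CTMC (Lemma \ref{eqvsol}) that makes Theorem \ref{MK0} applicable. A secondary subtlety is that Theorem \ref{MK0} requires an interior starting point, whereas $\boldsymbol{\mu}^0$ may lie on the boundary of the simplex; this is precisely why the preliminary interval $[0,\tau_1]$ with strictly positive constant rates, which uses irreducibility to enter the interior, is required.
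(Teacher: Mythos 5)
Your proof follows essentially the same two-phase strategy as the paper's: velocities set to zero together with the CTMC controllability of Theorem \ref{MK0} (made applicable through Lemmas \ref{syspos} and \ref{eqvsol}) to match the per-state masses on the first half of the time interval, then transition rates set to zero together with the construction of Theorem \ref{maintheo} applied state-by-state to match the spatial profiles on the second half. Your preliminary subinterval with constant positive rates to push $\boldsymbol{\mu}$ into ${\rm int}(\mathcal{P}(\mathcal{V}))$ is a worthwhile refinement, since Theorem \ref{MK0} is only stated for interior initial points and the paper's proof does not explicitly handle the case where some $y^0_i$ carries zero mass.
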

\begin{proof}
Let $\mathbf{v}_i(\cdot,t) = \mathbf{0}$ for each $t \in [0,t_f/2]$ and for each $i \in \mathcal{V}$. Then from Theorem \ref{MK0} and Lemmas  \ref{syspos} and \ref{eqvsol}, it follows that there exist piecewise constant parameters $u_e:[0,t_f/2] \rightarrow \mathbb{R}_{+} $ such that the mild solution of the PDE \eqref{eq:Mainsys2} satisfies $\int_{\Omega}y_i(\mathbf{x},t_f/2)d\mathbf{x} = \int_{\Omega}f_i(\mathbf{x})d\mathbf{x}$ for each $i \in \mathcal{V}$. Then the result follows by extending the function $u_e$ to the domain $[0,t_f]$ by defining $u_e(t) = 0$ for $t \in (t_f/2,t_f]$ and by defining $\mathbf{v}_i(\cdot,t)$ 
%\spr{= ~what?}$\kar{not needed. We are invoking Theorem \ref{maintheo}. An explicit expression will only cause a mess. The reader should understand such a $v$ can be defined because of Theorem \ref{maintheo}.} 
for $t \in (t_f/2,t_f]$ as in the proof of Theorem \ref{maintheo}.
\end{proof}

\section{Stabilizing Desired Stationary Distributions}
\label{stabsec}
In this section, we address Problem \ref{stabprb}. We first briefly review the notion of {\it irreducibility} of a {\it positive operator} \cite{meyer2012banach}, which will be used extensively in the theorems in this section. Let  $\mathcal{P}$ be a positive operator on the Hilbert space $X = L^2_{a}(\Omega)$ (or $\mathbf{L}^2_\mathbf{a}(\Omega)$) for some $a \in L^{\infty}(\Omega)$ (or $\mathbf{a} \in \mathbf{L}^{\infty}(\Omega)$), i.e., a linear bounded operator that maps real-valued non-negative elements of $X$ to real-valued non-negative elements of $X$. Let $\tilde{\Omega} \subset \Omega$ (or $\tilde{\Omega} \subset \Omega^N$) be a measurable subset. Consider the set $\mathcal{I}_{\tilde{\Omega}}$ defined by 
$
\mathcal{I}_{\tilde{\Omega}} = \big  \lbrace f \in X : \tilde{\Omega} \subset \lbrace \mathbf{x} \in \Omega:   f(\mathbf{x}) = 0\rbrace \big \rbrace 
$.
$\mathcal{P}$ will be called {\it irreducible} if the only measurable sets $\tilde{\Omega} \subset \Omega$ for which the set $\mathcal{I}_{\tilde{\Omega}}$ is invariant under $\mathcal{P}$ are $\tilde{\Omega} = \Omega $ (or $\Omega^N$) and $\tilde{\Omega} = \varnothing$, the null set. A positive semigroup of operators $(\mathcal{T}(t))_{t \geq 0}$ on $X$ will be called irreducible if $\mathcal{T}(t)$ is an irreducible operator for every $t > 0$. Suppose that $A$ is the generator of the semigroup $(\mathcal{T}(t))_{t \geq 0}$ and $s(A) := {\rm sup} \lbrace {\rm Re}(\lambda) : \lambda \in  {\rm spec}(A) \rbrace$. Then $(\mathcal{T}(t))_{t \geq 0}$ being irreducible is equivalent to $(\lambda \mathbb{I} - A)^{-1}$ mapping real-valued non-negative elements of $X$ to strictly positive elements of $X$ for every $\lambda > s(A)$ \cite{arendt2006one}[Definition C-III.3.1]. 
Note that the definitions in the cited reference are stated in a general framework of {\it Banach lattices}, for which $(\mathcal{T}(t))_{t \geq 0}$ being irreducible is equivalent to $(\lambda \mathbb{I} - A)^{-1}$ mapping positive elements of $X$ to {\it quasi-interior} elements of $X$. However, for the spaces that we consider, quasi-interior elements are the same as functions that are positive almost everywhere on their domain of definition.

\begin{theorem}
\label{spelc1}
Let $\lbrace q_e \rbrace _{e \in \mathcal{E}}$ be a set of non-negative constants. Then ${\rm spec}(\sum_{e \in \mathcal{E}}q_e \mathbf{Q}_e) \subset \bar{\mathbb{C}}_-$.
\end{theorem}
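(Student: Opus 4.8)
The plan is to exploit the fact that $\mathbf{M} := \sum_{e \in \mathcal{E}} q_e \mathbf{Q}_e$ is, up to transposition, the generator of a continuous-time Markov chain, and then to localize its spectrum using Gershgorin's circle theorem. First I would record the entrywise structure of $\mathbf{M}$. From the definition of $\mathbf{Q}_e$, the only nonzero entries of $q_e \mathbf{Q}_e$ are $-q_e$ in the diagonal position $(S(e),S(e))$ and $+q_e$ in the position $(T(e),S(e))$. Summing over edges, the diagonal entry is $M^{jj} = -\sum_{e : S(e)=j} q_e \leq 0$, every off-diagonal entry is $M^{ij} = \sum_{e : S(e)=j,\, T(e)=i} q_e \geq 0$, and, crucially, each column sums to zero, so that $\sum_{i \neq j} M^{ij} = -M^{jj} = |M^{jj}|$ for every $j$.

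Next I would apply the column form of Gershgorin's circle theorem to $\mathbf{M}$ (equivalently, the row form to $\mathbf{M}^T$, which is legitimate since ${\rm spec}(\mathbf{M}) = {\rm spec}(\mathbf{M}^T)$). The $j$-th Gershgorin disc is centered at the real number $M^{jj} \leq 0$ with radius $R_j = \sum_{i \neq j} |M^{ij}| = \sum_{i \neq j} M^{ij} = -M^{jj}$. Hence this disc equals $\{ z \in \mathbb{C} : |z - M^{jj}| \leq -M^{jj} \}$, the closed disc centered at $M^{jj} \leq 0$ with radius $R_j = -M^{jj}$; writing $z = M^{jj} + R_j e^{i\theta}$ gives ${\rm Re}(z) = M^{jj} + R_j \cos\theta = R_j(\cos\theta - 1) \leq 0$. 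Thus every Gershgorin disc is tangent to the imaginary axis at the origin and contained in $\bar{\mathbb{C}}_-$, and since every eigenvalue of $\mathbf{M}$ lies in the union of these discs, we conclude ${\rm spec}(\mathbf{M}) \subset \bar{\mathbb{C}}_-$.

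There is essentially no hard analytic step here; the only thing to get right is the bookkeeping of which index set sums to zero — the columns, because the $\mathbf{Q}_e$ are written in the forward-equation convention of \eqref{eq:ctrsys} — so that the Gershgorin discs come out tangent to the imaginary axis rather than straddling it. As an alternative that avoids Gershgorin entirely, I could invoke the observation already recorded before Proposition \ref{MK2} that each $\exp(t\mathbf{Q}_e)$ is a stochastic matrix, together with mass conservation: the semigroup $(\exp(t\mathbf{M}))_{t \geq 0}$ preserves $\mathbf{1}^T$ and maps the probability simplex into itself, so each $\exp(t\mathbf{M})$ is column-stochastic and therefore has spectral radius at most $1$; for any eigenvalue $\lambda$ of $\mathbf{M}$ this forces $|e^{t\lambda}| \leq 1$ for all $t \geq 0$, i.e. ${\rm Re}(\lambda) \leq 0$. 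Either route yields the claim, but I would favor the Gershgorin argument for being finite-dimensional, self-contained, and independent of the semigroup machinery.
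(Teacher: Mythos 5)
Your proof is correct, and both of your suggested routes work. The paper's own proof is a one-liner of a closely related but distinct flavor: it shifts the matrix to $\mathbf{G}_\lambda = \lambda \mathbb{I} + \sum_{e}q_e\mathbf{Q}_e$, which is entrywise non-negative for $\lambda$ large, and then invokes the classical bound from Minc's \emph{Nonnegative Matrices} (spectral radius squeezed between the minimum and maximum row/column sums) to conclude that $\rho(\mathbf{G}_\lambda)=\lambda$, hence every eigenvalue $\nu$ of the generator satisfies $|\nu+\lambda|\leq\lambda$ and so ${\rm Re}(\nu)\leq 0$. Your Gershgorin argument reaches the same conclusion without the shift and without citing any external result: it exploits exactly the same structural facts (non-negative off-diagonal entries, columns summing to zero) but localizes the spectrum in the union of the small discs $\{|z-M^{jj}|\leq -M^{jj}\}$ rather than the single large disc $\{|z+\lambda|\leq\lambda\}$, which is a marginally sharper and entirely self-contained statement. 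Your bookkeeping of the column structure of $\mathbf{Q}_e$ is accurate, and your correct insistence on the \emph{column} form of Gershgorin (the rows of $\sum_e q_e\mathbf{Q}_e$ do not sum to zero in general) is precisely the detail that matters. The second route via column-stochasticity of $\exp(t\mathbf{M})$ is also sound, though it quietly uses the same shift trick to see that $\exp(t\mathbf{M})$ has non-negative entries, so it is closer in spirit to the paper's argument than you may have intended.
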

\begin{proof}
This follows from \cite{minc1988nonnegative}[Theorem II.1.1] by noting that all the elements of the matrix $\mathbf{G}_\lambda = \lambda \mathbb{I}+\sum_{e \in \mathcal{E}}q_e \mathbf{Q}_e$ are non-negative for $\lambda >0$ large enough.
\end{proof}

\begin{theorem}
\label{spelc2}
Let $\lbrace q_e \rbrace _{e \in \mathcal{E}}$ be a set of non-negative constants. Let $\mathbf{a} \in \mathbf{L}^{\infty}(\Omega)$ such that $\mathbf{a} \geq c$ for some positive constant $c$. Suppose $\mathbf{b} \in \mathbf{L^{\infty}}(\Omega)$ such that $b_i =D_i\mathbf{1}$ is a positive constant function for each $i \in \mathcal{V}$. Then ${\rm spec}(-\boldsymbol{\mathcal{M}}_{\mathbf{b}}\boldsymbol{\mathcal{B}}_{\mathbf{a}}+ \sum_{e \in \mathcal{E}} q_ea_{S(e)} \boldsymbol{\mathcal{Q}}_e) \subset \bar{\mathbb{C}}_-$. 
\end{theorem}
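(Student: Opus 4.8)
The plan is to recognize the operator as the generator of a positive, mass-conserving semigroup and then combine the Perron--Frobenius structure of positive semigroups with the mass-conservation identity to pin the dominant spectral value to $0$. Throughout write $\mathcal{A} := -\boldsymbol{\mathcal{M}}_{\mathbf{b}}\boldsymbol{\mathcal{B}}_{\mathbf{a}} + \sum_{e \in \mathcal{E}} q_e a_{S(e)} \boldsymbol{\mathcal{Q}}_e$ for the operator in question. First I would note that $\mathcal{A}$ is exactly the operator of Lemma \ref{syspos} with the choice $K_e = q_e a_{S(e)}$; since $q_e \geq 0$ and $a_{S(e)} \geq c > 0$, each $K_e$ is a non-negative function in $L^{\infty}(\Omega)$, so Lemma \ref{syspos} applies and $\mathcal{A}$ generates a positive, mass-conserving semigroup $(\boldsymbol{\mathcal{S}}(t))_{t \geq 0}$ on $\mathbf{L}^2_{\mathbf{a}}(\Omega)$. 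Setting $\phi = (1/a_1,\ldots,1/a_N)^T \in \mathbf{L}^2_{\mathbf{a}}(\Omega)$, the mass-conservation property reads $\langle \boldsymbol{\mathcal{S}}(t)\mathbf{y},\phi \rangle_{\mathbf{a}} = \langle \mathbf{y},\phi \rangle_{\mathbf{a}}$ for every real-valued $\mathbf{y}$ and all $t \geq 0$, because $\langle \mathbf{y},\phi \rangle_{\mathbf{a}} = \sum_{i \in \mathcal{V}} \int_{\Omega} y_i(\mathbf{x}) d\mathbf{x}$.

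Next I would establish that the spectrum of $\mathcal{A}$ is discrete. By Lemma \ref{opprop1} each $B_{a_i}$ has purely discrete spectrum, so $\boldsymbol{\mathcal{M}}_{\mathbf{b}}\boldsymbol{\mathcal{B}}_{\mathbf{a}} = {\rm diag}(D_i B_{a_i})$ has compact resolvent; since $\sum_e q_e a_{S(e)}\boldsymbol{\mathcal{Q}}_e$ is a bounded operator, $\mathcal{A}$ also has compact resolvent, and hence ${\rm spec}(\mathcal{A})$ consists only of isolated eigenvalues of finite algebraic multiplicity. The spectrum is nonempty and $s(\mathcal{A})$ is finite: the mass-conservation identity means $\boldsymbol{\mathcal{S}}(t)^{*}\phi = \phi$, so $\phi \in \mathcal{D}(\mathcal{A}^{*})$ with $\mathcal{A}^{*}\phi = 0$, giving $0 \in {\rm spec}(\mathcal{A}^{*}) = \overline{{\rm spec}(\mathcal{A})}$ and thus $0 \in {\rm spec}(\mathcal{A})$ and $s(\mathcal{A}) \geq 0$.

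Then I would invoke the Perron--Frobenius theory for positive semigroups \cite{arendt2006one}: for a positive semigroup with $s(\mathcal{A})$ finite, the spectral bound $s(\mathcal{A})$ belongs to ${\rm spec}(\mathcal{A})$, and --- because it is a pole of the compact resolvent --- it is an eigenvalue admitting a non-negative eigenvector $\mathbf{u} \geq 0$, $\mathbf{u} \neq 0$ (since all coefficients of $\mathcal{A}$ are real, $\mathbf{u}$ may be taken real-valued). Applying $\boldsymbol{\mathcal{S}}(t)$ to this $\mathbf{u}$ and differentiating the identity $\langle \boldsymbol{\mathcal{S}}(t)\mathbf{u},\phi \rangle_{\mathbf{a}} = \langle \mathbf{u},\phi \rangle_{\mathbf{a}}$ at $t = 0$ yields $\langle \mathcal{A}\mathbf{u},\phi \rangle_{\mathbf{a}} = 0$, i.e. $s(\mathcal{A}) \langle \mathbf{u},\phi \rangle_{\mathbf{a}} = 0$. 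Since $\langle \mathbf{u},\phi \rangle_{\mathbf{a}} = \sum_{i \in \mathcal{V}} \int_{\Omega} u_i(\mathbf{x}) d\mathbf{x} > 0$ (as $\mathbf{u} \geq 0$ and $\mathbf{u} \neq 0$), this forces $s(\mathcal{A}) = 0$, whence ${\rm spec}(\mathcal{A}) \subset \lbrace \lambda : {\rm Re}(\lambda) \leq 0 \rbrace = \bar{\mathbb{C}}_-$.

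The main obstacle is the Perron--Frobenius step: one must justify that, for this positive semigroup, the spectral bound is itself a real spectral value carrying a non-negative eigenvector. This relies on the compactness of the resolvent (so that $s(\mathcal{A})$ is a pole rather than a point of continuous spectrum) together with positivity of the semigroup. Equivalently, and perhaps more transparently, I would apply the Krein--Rutman theorem to the compact positive operator $(\mu \mathbb{I} - \mathcal{A})^{-1}$ for a fixed $\mu > s(\mathcal{A})$: its spectral radius, which equals $1/(\mu - s(\mathcal{A}))$ and is positive because ${\rm spec}(\mathcal{A}) \neq \varnothing$, is then an eigenvalue with a non-negative eigenvector, and the corresponding $\mathcal{A}$-eigenvalue is the real number $s(\mathcal{A})$. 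Everything else --- identifying the generator through Lemma \ref{syspos}, the bounded-perturbation bookkeeping for compactness of the resolvent, and the mass-conservation pairing --- is routine.
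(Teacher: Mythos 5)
Your proof is correct, and its skeleton coincides with the paper's: identify the operator as the generator of the positive, mass-conserving semigroup of Lemma \ref{syspos} with $K_e = q_e a_{S(e)}$, observe that the resolvent is compact (bounded perturbation of $-\boldsymbol{\mathcal{M}}_{\mathbf{b}}\boldsymbol{\mathcal{B}}_{\mathbf{a}}$, which has compact resolvent), and apply Krein--Rutman to $(\lambda\mathbb{I}-\mathcal{A})^{-1}$ to produce a real dominant eigenvalue with a non-negative eigenfunction $\mathbf{u}$. Where you diverge is the final step. The paper integrates the eigenvalue equation componentwise over $\Omega$, uses $\int_\Omega B_{a_i}u_i = 0$ to eliminate the diffusion part, and thereby exhibits the vector $\bigl(\int_\Omega u_i\bigr)_i$ as an eigenvector of the finite matrix $\sum_e q_e k_{S(e)}\mathbf{Q}_e$ (with $k_i$ defined by $\int_\Omega a_i u_i = k_i \int_\Omega u_i$), so that the positivity of the eigenvalue contradicts Theorem \ref{spelc1}. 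You instead pair the eigenvalue equation against the conserved functional $\phi=(1/a_1,\dots,1/a_N)^T$, i.e.\ use $\mathcal{A}^*\phi=0$, to get $s(\mathcal{A})\sum_i\int_\Omega u_i = 0$ directly; this bypasses Theorem \ref{spelc1} entirely and is arguably cleaner (it is the infinite-dimensional analogue of ``column sums of $\mathbf{Q}_e$ vanish'' rather than a reduction to the matrix case). Your route also supplies a justification the paper leaves implicit: the paper asserts that $0$ is an eigenvalue of $\boldsymbol{\mathcal{W}}$ (needed so that the spectral radius of the resolvent is nonzero before invoking Krein--Rutman), whereas in this generality $(1/a_i)_i$ is not an eigenvector of the full operator; your adjoint argument $\boldsymbol{\mathcal{S}}(t)^*\phi=\phi$, combined with discreteness of the spectrum, is the correct way to see $0\in{\rm spec}(\mathcal{A})$. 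The one step to state carefully is the identification of the resolvent's spectral radius with $1/(\mu-s(\mathcal{A}))$: for real $\mu$ exceeding the growth bound one has $|\mu-\nu|\ge \mu-{\rm Re}\,\nu$ for every $\nu\in{\rm spec}(\mathcal{A})$, so the Krein--Rutman eigenvalue $\mu-1/r$ automatically dominates all real parts, and you do not actually need to invoke the Banach-lattice theorem that $s(\mathcal{A})\in{\rm spec}(\mathcal{A})$ a priori; your ``more transparent'' alternative is the self-contained one.
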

\begin{proof}
Let $\boldsymbol{\mathcal{W}} = -\boldsymbol{\mathcal{M}}_{\mathbf{b}}\boldsymbol{\mathcal{B}}_{\mathbf{a}}+ \sum_{e \in \mathcal{E}} q_ea_{S(e)} \boldsymbol{\mathcal{Q}}_e$. Let $\lambda \in \mathbb{C}  \backslash  {\rm spec}(-\boldsymbol{\mathcal{M}}_{\mathbf{b}}\boldsymbol{\mathcal{B}}_{\mathbf{a}}+ \sum_{e \in \mathcal{E}} q_ea_{S(e)} \boldsymbol{\mathcal{Q}}_e$) be real and large enough such that $(\lambda\mathbb{I} - \boldsymbol{\mathcal{W}})^{-1}$ is a positive operator, i.e., $(\lambda\mathbb{I} - \boldsymbol{\mathcal{W}})^{-1}f \geq 0$ whenever $f \geq 0$. Such a $\lambda$ necessarily exists because the semigroup $(\boldsymbol{\mathcal{U}}(t))_{t \geq 0}$ generated by the operator $\boldsymbol{\mathcal{W}}$ is positivity preserving from Lemma \ref{syspos}. Hence, the existence of $\lambda$ follows from the resolvent formula $(\lambda\mathbb{I} - \boldsymbol{\mathcal{W}})^{-1} = \int_0^{\infty} e^{-\lambda t}\boldsymbol{\mathcal{U}}(t)dt$ when $\lambda$ is greater than the growth bound of the semigroup $(\boldsymbol{\mathcal{U}}(t))_{t \geq 0}$, which is equal to the spectral growth bound $s(\boldsymbol{\mathcal{W}})$ of $\boldsymbol{\mathcal{W}}$ since $(\boldsymbol{\mathcal{U}}(t))_{t \geq 0}$ is analytic \cite{engel2000one}[Theorem II.1.10]. Let $R_{\lambda} =  (\lambda \mathbb{I} - \boldsymbol{\mathcal{W}})^{-1}$. The operator $-\boldsymbol{\mathcal{M}}_{\mathbf{b}}\boldsymbol{\mathcal{B}}_{\mathbf{a}}$ has a compact resolvent since $H^1_{a_i}(\Omega)$ is compactly embedded in $L^2_{a_i}(\Omega)$ for each $i \in \mathcal{V}$ \cite{schmudgen2012unbounded}[Proposition 10.6]. The operator $R_{\lambda}$ is compact and positivity preserving since $\boldsymbol{\mathcal{W}}$ is a bounded perturbation of $-\boldsymbol{\mathcal{M}}_{\mathbf{b}}\boldsymbol{\mathcal{B}}_{\mathbf{a}}$. Additionally, the spectral radius of $R_\lambda$ is positive since $0$ is an eigenvalue of $\boldsymbol{\mathcal{W}}$ (and hence $\frac{1}{\lambda}$ is an eigenvalue of $R_{\lambda}$). Therefore, from the {\it Krein-Rutman theorem} \cite{meyer2012banach}[Theorem 4.1.4], it follows that if $r$ is the spectral radius of the operator $R_\lambda$, then there exists a positive nonzero element $\mathbf{h} \in \mathbf{L}^2_{\mathbf{a}}(\Omega)$ such that $r\mathbf{h} - R_\lambda \mathbf{h} = \mathbf{0} $. Then it follows that $\mathbf{h} \in \mathcal{D}(\boldsymbol{\mathcal{W}})$ and $(\lambda -\frac{1}{r}) \mathbf{h} - \boldsymbol{\mathcal{W}} \mathbf{h} = \mathbf{0}$. For the sake of contradiction,
%\kar{The statement of the theorem. Proving that no spectral value is in the right half-plane, which would require a $\lambda > 1/r$} 
suppose that $\lambda > \frac{1}{r}$.  Then we have that 
\begin{align*}
\alpha  \int_{\Omega} h_i(\mathbf{x})d\mathbf{x}  ~+~& \int_{\Omega}D_i(B_{a_i}h_i)(\mathbf{x})d\mathbf{x} \nonumber \\ 
~-~&\sum_{e \in \mathcal{E}} \sum_{j=1}^N \int_{\Omega}q_ea_{S(e)}(\mathbf{x}) Q^{ij}_eh_j(\mathbf{x})d\mathbf{x} = 0
\end{align*}
for each $i \in \mathcal{V}$, where $\alpha = \lambda -\frac{1}{r}$.
This implies that
\begin{equation}
\alpha  \int_{\Omega} h_i(\mathbf{x})d\mathbf{x} -\sum_{e \in \mathcal{E}} \sum_{j=1}^{N} \int_{\Omega} q_ea_{S(e)}(\mathbf{x})Q^{ij}_eh_j(\mathbf{x})d\mathbf{x} = 0 \nonumber
\end{equation}
for each $i \in \mathcal{V}$. But this implies that the matrix $\sum_{e \in \mathcal{E}}q_e k_{S(e)} \mathbf{Q}_e$, where the constants $\lbrace k_i \rbrace_{i \in \mathcal{V}}$ are such that
\begin{equation}
\int_{\Omega} a_{i}(\mathbf{x})h_i(\mathbf{x})d\mathbf{x} = k_i\int_{\Omega} h_i(\mathbf{x})d\mathbf{x},  \nonumber
\end{equation}
 has a positive eigenvalue $\alpha$. This contradicts Theorem \ref{spelc1}, since ${\rm spec}(\sum_{e \in \mathcal{E}}q_e k_{S(e)} \mathbf{Q}_e) \subset  \bar{\mathbb{C}}_-$.
\end{proof}

\begin{proposition}
\label{eq:stabsc}
Let $\mathcal{G}$ be strongly connected. Let $\mathbf{f} \in \mathbf{L}^{\infty}(\Omega)$ be such that $\mathbf{f} \geq c $ for some positive constant $c$. Let $\mathbf{b} \in \mathbf{L}^{\infty}(\Omega)$ such that $b_i =D_i\mathbf{1}$ is a positive constant function for each $i \in \mathcal{V}$.
Suppose $\mathbf{y}^0 \in \mathbf{L}^2(\Omega) $ such that $\mathbf{y}^0 \geq 0$ and $\sum_{i}\int_{\Omega}f_i(\mathbf{x})d\mathbf{x} = \sum_{i}\int_{\Omega}y^0_i(\mathbf{x})d\mathbf{x} = 1$. Let $\mathbf{a}\in \mathbf{L}^{\infty}(\Omega)$ be such that $a_i = 1/f_i$ for each $i \in \mathcal{V}$. Then there exist positive parameters $\{q_e\}_{e \in \mathcal{E}}$ such that, if $(\boldsymbol{\mathcal{V}}(t))_{t \geq 0}$ is the semigroup generated by the operator $-\boldsymbol{\mathcal{M}}_{\mathbf{b}}\boldsymbol{\mathcal{B}}_{\mathbf{a}} + \sum_{e \in \mathcal{E}} q_ea_{S(e)}\boldsymbol{\mathcal{Q}}_e$, then we have 
\begin{equation}
\|\boldsymbol{\mathcal{V}}(t)\mathbf{y}^0- \mathbf{f}\|_2 ~\leq~ Me^{-\lambda t}
\end{equation}
for some positive constants $M$ and $\lambda$ and all $t \geq 0$.
\end{proposition}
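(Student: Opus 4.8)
The plan is to realize $\mathbf{f}$ as the unique (up to scaling) equilibrium of the semigroup $(\boldsymbol{\mathcal{V}}(t))_{t\ge 0}$ and then to invoke a Perron--Frobenius / Krein--Rutman argument for positive irreducible semigroups to produce a spectral gap at $0$, exactly as the scalar estimates \eqref{eq:expcn11}--\eqref{eq:expcn12} were obtained in Lemma \ref{cvglma1}. First I would fix the rates $\{q_e\}_{e\in\mathcal{E}}$: since $\mathcal{G}$ is strongly connected there is a strictly positive \emph{circulation}, i.e. positive constants with $\sum_{e:\,T(e)=k}q_e=\sum_{e:\,S(e)=k}q_e$ for every $k\in\mathcal{V}$ (these are precisely the rates that stabilize the finite CTMC \eqref{eq:ctrsys} to the uniform distribution, cf.\ \cite{elamvaz2016lin}). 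Writing $\boldsymbol{\mathcal{W}}=-\boldsymbol{\mathcal{M}}_{\mathbf{b}}\boldsymbol{\mathcal{B}}_{\mathbf{a}}+\sum_{e\in\mathcal{E}}q_e a_{S(e)}\boldsymbol{\mathcal{Q}}_e$, I would then verify $\boldsymbol{\mathcal{W}}\mathbf{f}=\mathbf{0}$. Since $f_i=1/a_i$ one has $B_{a_i}f_i=0$ by Lemma \ref{cvglma1}, so $\boldsymbol{\mathcal{M}}_{\mathbf{b}}\boldsymbol{\mathcal{B}}_{\mathbf{a}}\mathbf{f}=\mathbf{0}$; the crucial cancellation is that the weight $a_{S(e)}=1/f_{S(e)}$ exactly cancels the factor $f_{S(e)}$ appearing in $\mathbf{Q}_e\mathbf{f}$, so that the $k$-th component of $\sum_e q_e a_{S(e)}\boldsymbol{\mathcal{Q}}_e\mathbf{f}$ reduces to the pointwise-constant quantity $\sum_{e:\,T(e)=k}q_e-\sum_{e:\,S(e)=k}q_e$, which vanishes by the circulation property. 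Hence $\mathbf{f}\in\ker\boldsymbol{\mathcal{W}}$ and $\boldsymbol{\mathcal{V}}(t)\mathbf{f}=\mathbf{f}$ for all $t\ge 0$.

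Next I would record the structural properties of $\boldsymbol{\mathcal{W}}$. The reaction operator $\sum_e q_e a_{S(e)}\boldsymbol{\mathcal{Q}}_e$ is bounded, so $\boldsymbol{\mathcal{W}}$ is a bounded perturbation of $-\boldsymbol{\mathcal{M}}_{\mathbf{b}}\boldsymbol{\mathcal{B}}_{\mathbf{a}}$; since each $-D_iB_{a_i}$ generates an analytic semigroup with compact resolvent (Lemma \ref{opprop1}, via the compact embedding $H^1_{a_i}(\Omega)\hookrightarrow L^2_{a_i}(\Omega)$), the perturbed operator $\boldsymbol{\mathcal{W}}$ again generates an analytic semigroup with compact resolvent. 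Consequently $\mathrm{spec}(\boldsymbol{\mathcal{W}})$ consists of isolated eigenvalues of finite algebraic multiplicity, $(\boldsymbol{\mathcal{V}}(t))_{t\ge0}$ is immediately compact, and the spectral bound $s(\boldsymbol{\mathcal{W}})$ coincides with the growth bound. By Theorem \ref{spelc2} we have $\mathrm{spec}(\boldsymbol{\mathcal{W}})\subset\bar{\mathbb{C}}_-$, and by the previous paragraph $0\in\mathrm{spec}(\boldsymbol{\mathcal{W}})$, so $s(\boldsymbol{\mathcal{W}})=0$.

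The heart of the argument is to show that $0$ is a \emph{simple, strictly dominant} eigenvalue, which I would derive from irreducibility. Each scalar semigroup $(\mathcal{T}^B_{a_i}(t))_{t\ge0}$ is irreducible because the Neumann heat kernel is strictly positive (Lemma \ref{Dave}), and the positive coupling terms $q_e\boldsymbol{\mathcal{Q}}_e$, together with the strong connectivity of $\mathcal{G}$, propagate strict positivity across all components; through the Lie--Trotter representation used in Lemma \ref{syspos} this makes $(\boldsymbol{\mathcal{V}}(t))_{t\ge0}$ an irreducible positive semigroup. For such a semigroup with compact resolvent, the theory of positive operators (the Krein--Rutman theorem \cite{meyer2012banach}[Theorem 4.1.4], equivalently \cite{arendt2006one}[C-III]) guarantees that $s(\boldsymbol{\mathcal{W}})$ is an algebraically simple eigenvalue admitting a strictly positive eigenvector and is strictly larger in real part than every other spectral value. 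Since $\mathbf{f}\ge c>0$ already lies in $\ker\boldsymbol{\mathcal{W}}$, it must be this dominant eigenvector, whence $\ker\boldsymbol{\mathcal{W}}=\mathrm{span}\{\mathbf{f}\}$ and $\mathrm{Re}\,\lambda\le-\lambda_0<0$ for every remaining $\lambda\in\mathrm{spec}(\boldsymbol{\mathcal{W}})$.

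Finally I would extract the rate and identify the limit. Let $P$ be the spectral projection onto $\ker\boldsymbol{\mathcal{W}}=\mathrm{span}\{\mathbf{f}\}$; because the semigroup is immediately compact and $0$ is isolated and dominant, \cite{engel2000one}[Corollary V.3.3] yields $\|\boldsymbol{\mathcal{V}}(t)\mathbf{y}^0-P\mathbf{y}^0\|_{\mathbf{a}}\le \tilde M e^{-\lambda t}\|\mathbf{y}^0-P\mathbf{y}^0\|_{\mathbf{a}}$ for some $\lambda\in(0,\lambda_0]$ and $\tilde M>0$, exactly as in the proof of Lemma \ref{cvglma1}. It then remains to show $P\mathbf{y}^0=\mathbf{f}$: writing $P\mathbf{y}^0=c\mathbf{f}$ and using mass conservation (Lemma \ref{syspos}), the total mass of $\boldsymbol{\mathcal{V}}(t)\mathbf{y}^0=P\mathbf{y}^0+\boldsymbol{\mathcal{V}}(t)(\mathbf{y}^0-P\mathbf{y}^0)$ equals $1$ for all $t$, while the second summand decays to $\mathbf{0}$ in norm and hence carries vanishing mass in the limit, forcing $c\sum_i\int_\Omega f_i\,d\mathbf{x}=c=1$. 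Since $\mathbf{a}$ is bounded above and below, the norms $\|\cdot\|_2$ and $\|\cdot\|_{\mathbf{a}}$ are equivalent, which converts the estimate into the stated bound in $\|\cdot\|_2$. The main obstacle I anticipate is precisely the irreducibility step together with the simplicity and dominance of $s(\boldsymbol{\mathcal{W}})$: one must carefully combine the strict positivity of the scalar Neumann heat kernels with the graph-theoretic coupling to verify the hypotheses of Krein--Rutman, and ensure that $0$ is \emph{algebraically} (not merely geometrically) simple so that no Jordan block obstructs the exponential decay.
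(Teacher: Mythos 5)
Your overall strategy coincides with the paper's: exhibit $\mathbf{f}$ as an equilibrium of the positive semigroup, show the generator has compact resolvent and spectrum in $\bar{\mathbb{C}}_-$ (Theorem \ref{spelc2}), establish irreducibility so that $0$ is an algebraically simple dominant eigenvalue, and conclude with \cite{engel2000one}[Corollary V.3.3]. One place where you are actually more explicit than the paper is the choice of $\{q_e\}$ and the verification that $\boldsymbol{\mathcal{W}}\mathbf{f}=\mathbf{0}$: your computation correctly shows that the $k$-th component of $\sum_e q_e a_{S(e)}\boldsymbol{\mathcal{Q}}_e\mathbf{f}$ equals the constant $\sum_{e:T(e)=k}q_e-\sum_{e:S(e)=k}q_e$, so what is needed is a strictly positive circulation (equivalently, rates for which the \emph{uniform} distribution is stationary for $\sum_e q_e\mathbf{Q}_e$), and such a circulation exists on any strongly connected digraph. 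The paper instead takes the $q_e$ from \cite{elamvaz2016lin} that stabilize $\boldsymbol{\mu}^{eq}$ with $\mu^{eq}_k=\int_\Omega f_k\,d\mathbf{x}$ and asserts the eigenvector property ``by construction''; the balance condition you identify is the one that actually makes the cancellation work, so this part of your argument is sound (and clarifies the paper's). Your final step identifying $P\mathbf{y}^0=\mathbf{f}$ via mass conservation is also correct and is left implicit in the paper.

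The genuine gap is in the irreducibility step, which you yourself flag as the crux. First, you justify irreducibility of the scalar semigroups $(\mathcal{T}^B_{a_i}(t))_{t\ge0}$ by strict positivity of the Neumann heat kernel (Lemma \ref{Dave}); but that lemma requires $\Omega$ to be $C^{1,1}$ or convex and concerns the unweighted Neumann Laplacian, whereas this proposition is stated under the default extension-domain assumption only and involves $B_{a_i}$ with $a_i\in L^{\infty}(\Omega)$ merely bounded above and below. The paper instead invokes the form criterion \cite{ouhabaz2009analysis}[Theorem 4.5], which applies in exactly this generality. Second, and more seriously, the Lie--Trotter product formula gives only \emph{strong} convergence, which preserves positivity (as exploited in Lemma \ref{syspos}) but not strict positivity: a strong limit of positivity-improving operators need only be positive, so ``propagating strict positivity across components through Lie--Trotter'' does not by itself yield irreducibility of $(\boldsymbol{\mathcal{V}}(t))_{t\ge0}$. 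The paper closes this by arguing on the resolvent: it integrates the resolvent equation over $\Omega$ and uses irreducibility of the finite matrix $\sum_e q_e k_{S(e)}\mathbf{Q}_e$ to conclude $\int_\Omega h_i(\mathbf{x})\,d\mathbf{x}>0$ for every $i$, and then feeds the off-diagonal coupling terms back into the scalar resolvent $(\lambda\mathbb{I}-D_iB_{a_i}-G^{ii}\mathcal{M}_{a_i})^{-1}$, whose irreducibility forces $h_i>0$ a.e. You would need this (or an equivalent) argument to verify the hypotheses of your Krein--Rutman step; once irreducibility is in hand, the remainder of your proof goes through as written.
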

\begin{proof}
Since the graph $\mathcal{G}$ is assumed to be strongly connected, from \cite{elamvaz2016lin}[Theorem IV.5] (see \cite{biswal2017mean} for proof) we know that there exist positive parameters $\{q_e\}_{e \in \mathcal{E}} $ such that, if $u_e(t) = q_e$ for all $e \in \mathcal{E}$ and all $t \geq 0$, then the solution $\boldsymbol{\mu}(t)$ of the system \eqref{eq:ctrsys} satisfies
\begin{equation}
\|\boldsymbol{\mu}(t)- \boldsymbol{\mu}^{eq}\|_2 ~\leq~ M_1 e^{-\lambda_1 t}
\end{equation}
for some positive constants $M_1$ and $\lambda_1$ and all $t \geq 0$, where $\mu^{eq}_k = \int_{\Omega}f_k(\mathbf{x})d\mathbf{x}$ for each $k \in \mathcal{V}$ and $\boldsymbol{\mu}^0 \in \mathcal{P}(\mathcal{V})$. In particular, $0$ is a simple eigenvalue of the irreducible operator $\sum_{e \in \mathcal{E}}q_e\mathbf{Q}_e$ and $\boldsymbol{\mu}^{eq}$ is the corresponding unique (up to a scalar multiple) and strictly positive eigenvector. Then $0$ is an eigenvalue for the operator $ \boldsymbol{\mathcal{W}} = -\boldsymbol{\mathcal{M}}_{\mathbf{b}}\boldsymbol{\mathcal{B}}_\mathbf{a} +\sum_e q_ea_{S(e)} \boldsymbol{\mathcal{Q}}_e$ with the corresponding eigenvector $\mathbf{f}$, by construction. We will show that this eigenvalue is simple and is the dominant eigenvalue. Let $\mathbf{g} \in \mathbf{L}^2_\mathbf{a}(\Omega)$ such that $\mathbf{g}$ is not the zero element $\mathbf{0}$ 
and is non-negative a.e. in $\Omega^N$. Defining $\mathbf{h} = (\lambda\mathbb{I} - \boldsymbol{\mathcal{W}})^{-1}\mathbf{g}$ for some $\lambda > 0$, we have that
\begin{align*}
\lambda \int_{\Omega} h_i(\mathbf{x})&d\mathbf{x}  + \int_{\Omega}D_i(B_{a_i}h_i)(\mathbf{x})d\mathbf{x} \nonumber \\ 
-&\sum_{e \in \mathcal{E}} \sum_{j=1}^{N} \int_{\Omega}q_ea_{S(e)}(\mathbf{x}) Q^{ij}_eh_j(\mathbf{x})d\mathbf{x} =  \int_{\Omega }g_i(\mathbf{x})d\mathbf{x}
\end{align*}
for each $i \in \mathcal{V}$. This implies that
\begin{align*}
\lambda \int_{\Omega} h_i(\mathbf{x})d\mathbf{x} ~ - ~ & \sum_{e \in \mathcal{E}} \sum_{j=1}^{N} \int_{\Omega}q_ea_{S(e)}(\mathbf{x}) Q^{ij}_eh_j(\mathbf{x})d\mathbf{x} \nonumber \\
&=\int_{\Omega }g_i(\mathbf{x})d\mathbf{x}
\end{align*}
for each $i \in \mathcal{V}$, which implies that
\begin{equation}
\lambda \int_{\Omega} h_i(\mathbf{x})d\mathbf{x} - \sum_{e \in \mathcal{E}} \sum_{j=1}^{N}  \int_{\Omega}q_e k_{S(e)} Q^{ij}_eh_j(\mathbf{x})d\mathbf{x}=\int_{\Omega }g_i(\mathbf{x})d\mathbf{x}
\end{equation}
for each $i \in \mathcal{V}$ for some positive constants $k_i >0$. The existence of such positive constants is guaranteed, since we assumed that $\mathbf{f}$ is non-negative and hence $\mathbf{h}$ is non-negative. 
However, $ \sum_{e}q_ek_{S(e)} \mathbf{Q}_e$ generates an irreducible semigroup on $\mathbb{R}^N$ whenever $q_e >0 $ implies that $k_e>0$ for all $e \in \mathcal{E}$. Hence, $(\lambda \mathbb{I} - \sum_{e}q_ek_{S(e)} \mathbf{Q}_e)^{-1}$ maps non-negative, nonzero elements of $\mathbb{R}^N$ to strictly positive elements of $\mathbb{R}^N$. This implies that $\int_{\Omega}h_i(\mathbf{x})d\mathbf{x}>0$ for each $i \in \mathcal{V}$. From this, we can conclude that $h_i(\mathbf{x}) > 0$ for a.e. $\mathbf{x} \in \Omega$ for each $i \in \mathcal{V}$. To see this more explicitly, note that $\mathbf{h}$ must satisfy
\begin{equation}
\lambda h_i -D_iB_{a_i}h_i -G^{ii} a_i h_i = g_i + \sum_{j=1,j \neq i}^{N}G^{ij}a_jh_j
\end{equation}
for each $i \in \mathcal{V}$, where $\mathbf{G} = \sum_{e \in \mathcal{E}} q_e\mathbf{Q}_e$. Let $\mathcal{M}_{a_i}$ be the multiplication operator, defined on $L^2(\Omega) = L^2_{a_i}(\Omega)$, that is associated with the function $a_i$. Since $a_i \geq \ell$ for some $\ell>0$, the inverse $R_\lambda^i= (\lambda \mathbb{I} - D_iB_{a_i} -G^{ii}\mathcal{M}_{a_i})^{-1} = (\lambda \mathcal{M}_{a_i}^{-1} - D_iB_{a_i}\mathcal{M}_{a_i}^{-1}  -G^{ii}\mathbb{I})^{-1} \mathcal{M}_{a_i}^{-1}$ exists. The operator $\lambda \mathcal{M}_{a_i}^{-1} - D_iB_{a_i}\mathcal{M}_{a_i}^{-1} $ generates an irreducible semigroup on $L^2(\Omega)$ \cite{ouhabaz2009analysis}[Theorem 4.5] (see equation (4.8) in the cited reference for the class of operators considered); formally, $B_{a_i}\mathcal{M}_{a_i}^{-1}$ is the operator $\nabla \cdot (f_i \nabla (\cdot))$. Hence, $(R_\lambda^i [g_i + \sum_{j=1,j \neq i}^{N}G^{ij}a_jh_j])(\mathbf{x}) $ is strictly positive for a.e. $\mathbf{x} \in \Omega$ and each $i \in \mathcal{V}$, since $\sum_{j=1,j \neq i}^{N}G^{ij}$ and $h_i$ are nonzero for each $i \in \mathcal{V}$. Therefore, $(\lambda \mathbb{I} - \boldsymbol{\mathcal{W}})^{-1} $ maps nonzero, non-negative elements of $\mathbf{L}^2_\mathbf{a}(\Omega)$ to strictly positive elements of $\mathbf{L}^2_\mathbf{a}(\Omega)$. This implies that the semigroup generated by the operator $\boldsymbol{\mathcal{W}}$ is irreducible. Now, we can use \cite{arendt2006one}[Corollary C-III.3.17] to establish that the eigenvalue $0$ is simple and is the dominant eigenvalue.  This follows from the cited corollary because $\boldsymbol{\mathcal{W}}$ has a compact resolvent and generates an analytic semigroup, due to the fact that it is a bounded perturbation of the operator $-\boldsymbol{\mathcal{M}}_{\mathbf{b}}\boldsymbol{\mathcal{B}}_\mathbf{a}$, which itself has a compact resolvent and generates an analytic semigroup \cite{engel2000one}[Proposition III.1.12]. Additionally, we know from \cite{engel2000one}[Corollary III.1.19] that since $\boldsymbol{\mathcal{W}}$ has a compact resolvent, its spectrum is discrete. Then the result follows from \cite{engel2000one}[Corollary V.3.3].
\end{proof}

Irreducibility is not necessary, but only sufficient, for the simplicity of the dominant eigenvalue of a compact positive operator. The goal of the following proposition and theorem is to extend the result in Proposition \ref{eq:stabsc} to a much larger set of equilibrium distributions, for which the resulting semigroup is not necessarily irreducible.

\begin{proposition}
\label{specperb}
Let $\mathbf{P} \in \mathbb{R}^{N \times N}$ be essentially non-negative, i.e., $P^{ij} \geq 0$ for all $i \neq j$ in $\mathcal{\mathcal{V}}$. Let $\boldsymbol{\mathcal{P}}$ be the linear bounded operator on $\mathbf{L}^2(\Omega)$, defined pointwise using $\mathbf{P}$ as $(\boldsymbol{\mathcal{P}}\mathbf{h})(\mathbf{x}) = \mathbf{P}\mathbf{h}(\mathbf{x})$ for a.e. $\mathbf{x} \in \Omega$ for all $\mathbf{h} \in \mathbf{L}^2(\Omega)$. Suppose $\mathbf{b} \in \mathbf{L}^{\infty}(\Omega)$ such that $b_i =D_i\mathbf{1}$ is a positive constant function for each $i \in \mathcal{V}$. In addition, suppose that ${\rm spec}(\mathbf{P})$ lies in $\mathbb{C}_-$. If $a_i = \mathbf{1}$ for each $i \in \mathcal{V}$, then ${\rm spec}(-\boldsymbol{\mathcal{M}}_\mathbf{b} \boldsymbol{\mathcal{B}}_\mathbf{a} +\boldsymbol{\mathcal{P}})$ lies in $\mathbb{C}_-$.

\end{proposition}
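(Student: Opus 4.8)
The plan is to exploit the hypothesis $a_i=\mathbf{1}$, which makes $-\boldsymbol{\mathcal{M}}_{\mathbf{b}}\boldsymbol{\mathcal{B}}_{\mathbf{a}}$ act on the $i$-th component as $-D_iB_{\mathbf{1}}$, a scalar multiple of the (negative) Neumann Laplacian, and to diagonalize the whole operator against the eigenbasis of $B_{\mathbf{1}}$. By Lemma \ref{opprop1} and Lemma \ref{cvglma1}, $B_{\mathbf{1}}$ is self-adjoint with purely discrete spectrum $0=\mu_0<\mu_1\le\mu_2\le\cdots$ and eigenfunctions $\{\phi_k\}_{k\ge0}$ forming an orthonormal basis of $L^2(\Omega)$, with $\phi_0$ constant. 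Setting $\mathbf{D}={\rm diag}(D_1,\dots,D_N)$, I would decompose $\mathbf{L}^2(\Omega)=\bigoplus_{k\ge0}V_k$ orthogonally, where $V_k=\{\mathbf{c}\,\phi_k:\mathbf{c}\in\mathbb{C}^N\}\cong\mathbb{C}^N$. A direct computation, using $-D_iB_{\mathbf{1}}(c_i\phi_k)=-\mu_kD_ic_i\phi_k$ and $\boldsymbol{\mathcal{P}}(\mathbf{c}\,\phi_k)=(\mathbf{P}\mathbf{c})\phi_k$, shows that each $V_k$ is invariant under $\boldsymbol{\mathcal{W}}:=-\boldsymbol{\mathcal{M}}_{\mathbf{b}}\boldsymbol{\mathcal{B}}_{\mathbf{a}}+\boldsymbol{\mathcal{P}}$ and that the restriction $\boldsymbol{\mathcal{W}}|_{V_k}$ is represented, in the coordinates $\mathbf{c}$, by the $N\times N$ matrix $\mathbf{M}_k=\mathbf{P}-\mu_k\mathbf{D}$.

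Next I would verify that ${\rm spec}(\mathbf{M}_k)\subset\mathbb{C}_-$ for every $k$. For $k=0$ we have $\mathbf{M}_0=\mathbf{P}$, whose spectrum lies in $\mathbb{C}_-$ by hypothesis. For $k\ge1$, since $\mathbf{P}$ is essentially non-negative and $\mu_k\mathbf{D}$ is a non-negative diagonal matrix, $\mathbf{M}_k$ is again essentially non-negative and satisfies $\mathbf{M}_k\le\mathbf{P}$ entrywise. For such (Metzler) matrices the spectral abscissa $s(\cdot)$ is a real eigenvalue and is monotone nondecreasing in the entries; this is the same non-negativity and Perron--Frobenius circle of ideas underlying Theorem \ref{spelc1} and \cite{minc1988nonnegative}. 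Hence $s(\mathbf{M}_k)\le s(\mathbf{P})<0$, so ${\rm spec}(\mathbf{M}_k)\subset\mathbb{C}_-$.

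Finally, I would assemble the block estimates into a statement about $\boldsymbol{\mathcal{W}}$ by directly constructing the resolvent. Fix any $\lambda$ with ${\rm Re}(\lambda)\ge0$; then $\lambda\notin{\rm spec}(\mathbf{M}_k)$ for every $k$, so each $(\lambda\mathbb{I}-\mathbf{M}_k)^{-1}$ exists. Writing $\lambda\mathbb{I}-\mathbf{M}_k=(\lambda\mathbb{I}+\mu_k\mathbf{D})-\mathbf{P}$ and using ${\rm Re}(\lambda)\ge0$ to get $\|(\lambda\mathbb{I}+\mu_k\mathbf{D})^{-1}\|\le(\mu_k\min_iD_i)^{-1}$, a Neumann-series argument gives $\|(\lambda\mathbb{I}-\mathbf{M}_k)^{-1}\|\le 2(\mu_k\min_iD_i)^{-1}\to0$ for all sufficiently large $k$, while the finitely many remaining blocks contribute finite norms; hence $\sup_k\|(\lambda\mathbb{I}-\mathbf{M}_k)^{-1}\|<\infty$. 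The block-diagonal operator with these blocks is then a bounded inverse of $\lambda\mathbb{I}-\boldsymbol{\mathcal{W}}$, so $\lambda\in\rho(\boldsymbol{\mathcal{W}})$, and since this covers the entire closed right half-plane we conclude ${\rm spec}(\boldsymbol{\mathcal{W}})\subset\mathbb{C}_-$. The hard part will be precisely this last step: because there are infinitely many blocks, it is not enough that each $\lambda\mathbb{I}-\mathbf{M}_k$ be invertible; one needs the uniform resolvent bound, which is exactly what prevents spectrum from accumulating on the imaginary axis and yields the strict inclusion in the open left half-plane.
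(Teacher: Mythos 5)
Your argument is correct, but it is a genuinely different proof from the one in the paper. The paper's proof of Proposition \ref{specperb} runs parallel to Theorem \ref{spelc2}: it uses the Lie--Trotter formula (as in Lemma \ref{syspos}) to see that $\boldsymbol{\mathcal{W}}=-\boldsymbol{\mathcal{M}}_\mathbf{b}\boldsymbol{\mathcal{B}}_\mathbf{a}+\boldsymbol{\mathcal{P}}$ generates a positive semigroup with compact resolvent, invokes the Krein--Rutman theorem to produce a non-negative eigenfunction $\mathbf{h}$ for the spectral bound $\lambda-\tfrac{1}{r}$, and then integrates the eigenvalue equation over $\Omega$ (the diffusion term drops out by mass conservation) to exhibit the vector $\bigl(\int_\Omega h_i\bigr)_i$ as a non-negative, nonzero eigenvector of $\mathbf{P}$ with eigenvalue $\lambda-\tfrac{1}{r}$, which contradicts ${\rm spec}(\mathbf{P})\subset\mathbb{C}_-$ if that eigenvalue is $\geq 0$. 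You instead exploit the hypothesis $a_i=\mathbf{1}$ much more aggressively: all components share the eigenbasis of the Neumann Laplacian (the discreteness of its spectrum is in Lemma \ref{opprop1}, not Lemma \ref{cvglma1} as you cite, and note $B_{\mathbf{1}}=A_{\mathbf{1}}$), so $\boldsymbol{\mathcal{W}}$ block-diagonalizes into the Metzler matrices $\mathbf{M}_k=\mathbf{P}-\mu_k\mathbf{D}$, whose spectral abscissae are controlled by entrywise monotonicity, and the conclusion follows from your uniform resolvent bound over the blocks. Your approach buys an explicit description of the spectrum as $\bigcup_k{\rm spec}(\mathbf{P}-\mu_k\mathbf{D})$, avoids the Krein--Rutman and positive-semigroup machinery, directly exhibits the resolvent on the whole closed right half-plane rather than only bounding the spectral bound, and you rightly identify the uniform-resolvent estimate as the point that prevents spectrum from accumulating on the imaginary axis (the only detail left implicit is that the decay $\|(\lambda\mathbb{I}-\mathbf{M}_k)^{-1}\|\lesssim \mu_k^{-1}$ also guarantees the candidate inverse maps into $\mathcal{D}(\boldsymbol{\mathcal{B}}_\mathbf{a})$). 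The trade-off is that your argument is tied to the $a_i$ being constant (or at least proportional, so that a common eigenbasis exists), whereas the paper's soft positivity argument is structurally the same one it uses in the weighted setting of Theorem \ref{spelc2}.
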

\begin{proof}
The proof follows the same line of argument as Theorem \ref{spelc2}. Note that according to the Lie-Trotter product formula, $\boldsymbol{\mathcal{W}} = -\boldsymbol{\mathcal{M}}_\mathbf{b}\boldsymbol{\mathcal{B}}_\mathbf{a} +\boldsymbol{\mathcal{P}}$ generates a positive semigroup since both $-\boldsymbol{\mathcal{B}}_\mathbf{a}$ and $\boldsymbol{\mathcal{P}}$ generate positivity preserving semigroups. Hence, if $\lambda >0 $ is large enough, then $R_\lambda = (\lambda - \boldsymbol{\mathcal{W}})^{-1}$ is a positive operator. Moreover, $R_{\lambda}$ is a compact operator and has a nonzero spectral radius $r$. From the Krein-Rutman theorem \cite{meyer2012banach}[Theorem 4.1.4], it follows that there exists a positive function $\mathbf{h} \in \mathbf{L}^2_{\mathbf{a}}(\Omega)= \mathbf{L}^2(\Omega) $ such that $r\mathbf{h} - R_\lambda \mathbf{h}= \mathbf{0}$.  This implies that $\lambda -\frac{1}{r}$ is an eigenvalue of $\boldsymbol{\mathcal{W}}$. However, this implies that $\int_\Omega -(B_{a_i}h_i) = 0$ for each $i \in \mathcal{V}$, and hence that
\begin{equation}
\left(\lambda -\frac{1}{r}\right)  \int_{\Omega} h_i(\mathbf{x})d\mathbf{x} - \sum_{j=1}^{N} \int_{\Omega}P^{ij}h_j(\mathbf{x})d\mathbf{x} ~=~ 0 \nonumber
\end{equation}
for each $i \in \mathcal{V}$. If $\lambda -\frac{1}{r} \geq 0$, then we arrive at a contradiction, since ${\rm spec}(\mathbf{P})$ lies in $\mathbb{C}_-$. Here, we have used the fact that $\mathbf{h}$ is a positive function, and therefore $\int_{\Omega}h_i(\mathbf{x})d\mathbf{x}$ cannot be equal to $0$ for each $i \in \mathcal{V}$.
\end{proof}

\begin{theorem}
\label{solstbpr}
Let $\mathcal{G} =(\mathcal{V},\mathcal{E})$ be strongly connected, and let $\Omega$ be an extension domain. Let $\mathbf{b} \in \mathbf{L}^{\infty}(\Omega)$ such that $b_i =D_i\mathbf{1}$ is a positive constant function for each $i \in \mathcal{V}$.  Let $\mathbf{f} \in \mathbf{L}^{\infty}(\Omega)$ be such that $f_i \geq c\int_{\Omega }f_i(\mathbf{x})d\mathbf{x} $ for some positive constant $c>0$. Let $\mathcal{V}_1 = \lbrace i \in\mathcal{V}: \int_{\Omega }f_i(\mathbf{x})d\mathbf{x}>0 \rbrace$. Additionally, consider the set $\mathcal{E}_1 = \{e \in \mathcal{E}~:~S(e),T(e) \in \mathcal{V}_1\}$. Suppose that the graph $\mathcal{G}_1=(\mathcal{V}_1,\mathcal{E}_1)$ is strongly connected. Then there exist $\mathbf{a}\in \mathbf{L}^{\infty}(\Omega)$ and spatially-dependent reaction coefficients $\lbrace K_e(\mathbf{x}) \rbrace_{e \in \mathcal{E}} \in \mathbf{L}^{\infty}(\Omega)$ for which $-\boldsymbol{\mathcal{M}}_\mathbf{b}\boldsymbol{\mathcal{B}}_\mathbf{a}+\sum_{e \in \mathcal{E}}K_e\boldsymbol{\mathcal{Q}}_e$ generates a positive semigroup $(\boldsymbol{\mathcal{S}}(t))_{t \geq 0}$ on $\mathbf{L}^2_\mathbf{a}(\Omega)$ such that if $\mathbf{y}^0 \in \mathbf{L}^2_\mathbf{a}(\Omega)$ is a positive function and $\sum_{i \in \mathcal{V}}\int_{\Omega} f_i(\mathbf{x})d\mathbf{x} = \sum_{i \in \mathcal{V}}\int_{\Omega} y_i(\mathbf{x})d\mathbf{x}$, then
\begin{equation}
\|\boldsymbol{\mathcal{S}}(t)\mathbf{y}_0 - \mathbf{f}\| ~\leq~ M e^{-\lambda t}
\end{equation}
for some positive constants $M$ and $\lambda$ and all $t \geq 0$.
\end{theorem}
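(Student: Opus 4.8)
The plan is to exploit the partition of the vertex set into the active set $\mathcal{V}_1$ and the complementary set $\mathcal{V}_2 = \mathcal{V}\setminus\mathcal{V}_1$, and to choose the coefficients so that the generator becomes block upper triangular with respect to this partition. First I note that the hypothesis $f_i \geq c\int_\Omega f_i\,d\mathbf{x}$ forces $f_i = 0$ a.e. for every $i\in\mathcal{V}_2$ (since $\int_\Omega f_i\,d\mathbf{x}=0$ and $f_i\geq 0$), while on $\mathcal{V}_1$ each $f_i$ is bounded below by the positive constant $c\min_{i\in\mathcal{V}_1}\int_\Omega f_i\,d\mathbf{x}$. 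Accordingly I define $\mathbf{a}$ by $a_i=1/f_i$ for $i\in\mathcal{V}_1$ and $a_i=\mathbf{1}$ for $i\in\mathcal{V}_2$; both are bounded and bounded below, so $\mathbf{L}^2_\mathbf{a}(\Omega)$ is well defined with norm equivalent to the $\mathbf{L}^2$ norm. For the reaction coefficients I set $K_e=0$ on every edge $e$ with $S(e)\in\mathcal{V}_1$ and $T(e)\in\mathcal{V}_2$, I let $K_e=q_e\,a_{S(e)}$ on the edges $e\in\mathcal{E}_1$, where $\{q_e\}_{e\in\mathcal{E}_1}$ are the positive rates furnished by Proposition \ref{eq:stabsc} applied to the strongly connected graph $\mathcal{G}_1$ with target $\mathbf{f}_1=(f_i)_{i\in\mathcal{V}_1}$, and I assign arbitrary positive constants to all remaining edges, namely those with $S(e)\in\mathcal{V}_2$. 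All coefficients are non-negative elements of $L^\infty(\Omega)$, so Lemma \ref{syspos} guarantees that $\boldsymbol{\mathcal{W}}=-\boldsymbol{\mathcal{M}}_\mathbf{b}\boldsymbol{\mathcal{B}}_\mathbf{a}+\sum_{e\in\mathcal{E}}K_e\boldsymbol{\mathcal{Q}}_e$ generates a positive, mass-conserving semigroup $(\boldsymbol{\mathcal{S}}(t))_{t\geq 0}$.

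By construction the $\mathcal{V}_2\to\mathcal{V}_1$ coupling is retained while the $\mathcal{V}_1\to\mathcal{V}_2$ coupling is suppressed, so, ordering the coordinates as $(\mathcal{V}_1,\mathcal{V}_2)$, the operator is block upper triangular, $\boldsymbol{\mathcal{W}}=\begin{pmatrix}\boldsymbol{\mathcal{W}}_{11}&\boldsymbol{\mathcal{W}}_{12}\\ \mathbf{0}&\boldsymbol{\mathcal{W}}_{22}\end{pmatrix}$, with bounded off-diagonal block $\boldsymbol{\mathcal{W}}_{12}$. The block $\boldsymbol{\mathcal{W}}_{11}$ is exactly the operator treated in Proposition \ref{eq:stabsc} for $\mathcal{G}_1$, so $0$ is a simple eigenvalue of $\boldsymbol{\mathcal{W}}_{11}$ with strictly positive eigenvector $\mathbf{f}_1$ and the remainder of ${\rm spec}(\boldsymbol{\mathcal{W}}_{11})$ lies in $\{{\rm Re}\,\lambda\leq-\lambda_1\}$ for some $\lambda_1>0$. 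The block $\boldsymbol{\mathcal{W}}_{22}$ has the form $-\boldsymbol{\mathcal{M}}_\mathbf{b}\boldsymbol{\mathcal{B}}_\mathbf{a}+\boldsymbol{\mathcal{R}}_{22}$ restricted to $\mathcal{V}_2$, where, since $a_i=\mathbf{1}$ there, $\boldsymbol{\mathcal{R}}_{22}$ is pointwise multiplication by the constant essentially non-negative matrix $\mathbf{R}_{22}$ built from the retained intra-$\mathcal{V}_2$ rates and the draining losses. I claim ${\rm spec}(\mathbf{R}_{22})\subset\mathbb{C}_-$: since $\mathcal{G}$ is strongly connected, from every $v\in\mathcal{V}_2$ there is a directed path to $\mathcal{V}_1$, and any shortest such path uses only edges with source in $\mathcal{V}_2$, all retained with positive rate; hence $\mathbf{R}_{22}$ is the sub-generator on the transient states of an absorbing chain and is Hurwitz. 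Proposition \ref{specperb}, applied with $a_i=\mathbf{1}$ and $\mathbf{P}=\mathbf{R}_{22}$, then gives ${\rm spec}(\boldsymbol{\mathcal{W}}_{22})\subset\mathbb{C}_-$.

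It remains to assemble the two blocks. The resolvent formula for a block upper triangular operator yields ${\rm spec}(\boldsymbol{\mathcal{W}})={\rm spec}(\boldsymbol{\mathcal{W}}_{11})\cup{\rm spec}(\boldsymbol{\mathcal{W}}_{22})$, so $0$ is the only spectral value on the imaginary axis and the rest of the spectrum is bounded away from it. Moreover $\mathbf{f}=(\mathbf{f}_1,\mathbf{0})$ satisfies $\boldsymbol{\mathcal{W}}\mathbf{f}=\mathbf{0}$, because $\boldsymbol{\mathcal{W}}_{11}\mathbf{f}_1=\mathbf{0}$ and the vanishing $\mathcal{V}_2$ components annihilate $\boldsymbol{\mathcal{W}}_{12}$ and $\boldsymbol{\mathcal{W}}_{22}$. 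To see that $0$ is algebraically simple, I examine the Riesz projection $P_0=\frac{1}{2\pi i}\oint(\lambda-\boldsymbol{\mathcal{W}})^{-1}\,d\lambda$ on a small contour around $0$: its $(2,1)$ and $(2,2)$ blocks vanish (the latter being the Riesz projection of $\boldsymbol{\mathcal{W}}_{22}$ at $0$, which is zero since $0\notin{\rm spec}(\boldsymbol{\mathcal{W}}_{22})$), so ${\rm Range}(P_0)\subset\{(v,\mathbf{0})\}$, on which $\boldsymbol{\mathcal{W}}$ acts as $\boldsymbol{\mathcal{W}}_{11}$; thus ${\rm Range}(P_0)$ is contained in the one-dimensional generalized $0$-eigenspace of $\boldsymbol{\mathcal{W}}_{11}$, forcing ${\rm rank}\,P_0=1$. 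Because $\boldsymbol{\mathcal{W}}$ is a bounded perturbation of $-\boldsymbol{\mathcal{M}}_\mathbf{b}\boldsymbol{\mathcal{B}}_\mathbf{a}$, it has compact resolvent and generates an analytic semigroup, so \cite{engel2000one}[Corollary V.3.3] provides $M,\lambda>0$ with $\|\boldsymbol{\mathcal{S}}(t)\mathbf{y}^0-P_0\mathbf{y}^0\|\leq Me^{-\lambda t}\|\mathbf{y}^0\|$. Finally $P_0\mathbf{y}^0$ is a scalar multiple of $\mathbf{f}$, and mass conservation (Lemma \ref{syspos}) together with $\sum_i\int_\Omega f_i=\sum_i\int_\Omega y^0_i$ pins that scalar to $1$, giving $P_0\mathbf{y}^0=\mathbf{f}$ and the claimed estimate.

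The main obstacle is the loss of irreducibility: unlike in Proposition \ref{eq:stabsc}, the semigroup here need not be irreducible, so the simplicity and dominance of the eigenvalue $0$ cannot be read off from the Krein-Rutman theorem directly and must instead be extracted from the block upper triangular structure. Concretely, the delicate points are the stability of the transient block $\boldsymbol{\mathcal{W}}_{22}$, which is precisely where the strong connectivity of the full graph $\mathcal{G}$ (and not merely of $\mathcal{G}_1$) is indispensable, and the rank-one computation of the Riesz projection that promotes the simplicity of $0$ for the decoupled block $\boldsymbol{\mathcal{W}}_{11}$ to algebraic simplicity for the fully coupled operator $\boldsymbol{\mathcal{W}}$.
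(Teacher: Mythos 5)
Your proposal is correct and follows the same architecture as the paper's proof: order the coordinates as $(\mathcal{V}_1,\mathcal{V}_2)$, take $a_i=1/f_i$ on $\mathcal{V}_1$ and constant on $\mathcal{V}_2$, obtain a block upper triangular generator, treat the $\mathcal{V}_1$ block via the irreducible-semigroup result (Proposition \ref{eq:stabsc}), treat the $\mathcal{V}_2$ block via Proposition \ref{specperb}, and conclude with \cite{engel2000one}[Corollary V.3.3]. Two sub-steps are handled differently, and arguably more transparently, than in the paper. First, the paper obtains the rates $q_e$ by invoking the finite-dimensional stabilization theorem of \cite{elamvaz2016lin} on the \emph{full} graph $\mathcal{G}$ and then \emph{deduces} the block-triangular structure and the Hurwitz property of $\mathbf{G}_3$ from the asserted convergence $\mu_j(t)\to 0$ on $\mathcal{V}\setminus\mathcal{V}_1$; you instead build the coefficients by hand (zero on $\mathcal{V}_1\to\mathcal{V}_2$ edges, Proposition \ref{eq:stabsc} rates on $\mathcal{E}_1$, arbitrary positive constants on $\mathcal{V}_2$-sourced edges) and prove Hurwitzness of the transient block directly by the absorbing-chain argument, which makes explicit exactly where strong connectivity of the full graph $\mathcal{G}$ is used. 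Second, you supply the Riesz-projection computation showing that algebraic simplicity of $0$ for the decoupled block $\boldsymbol{\mathcal{W}}_{11}$ lifts to the coupled operator $\boldsymbol{\mathcal{W}}$ (and that the off-diagonal block of $P_0$ has range in ${\rm span}(\mathbf{f}_1)$ via $P_0^2=P_0$); the paper leaves this lifting implicit when it passes from ``$0$ is simple and dominant for $\boldsymbol{\mathcal{W}}_1$'' to the conclusion. The trade-off is that your construction is self-contained modulo Propositions \ref{eq:stabsc} and \ref{specperb}, while the paper's reuses the external CTMC stabilization theorem at the cost of a less explicit justification of the block structure.
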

\begin{proof}
Without loss of generality, we assume that the set $\mathcal{V}_1$ is of the form $\mathcal{V}_1 = \lbrace 1,2, ..., \bar{N} \rbrace$ for some integer $\bar{N} \leq N$. Define $\boldsymbol{\mu}^{eq} \in \mathbb{R}^{N}_{+}$ such that $\mu^{eq}_i = \int_{\Omega}f_i(\mathbf{x})d\mathbf{x}$ for each $i \in \mathcal{V}$. Then from \cite{elamvaz2016lin}[Theorem IV.5] (see \cite{biswal2017mean} for proof), it follows that there exist positive constants $\lbrace q_e \rbrace_{e \in \mathcal{E}}$ such that the solution $\boldsymbol{\mu}(t)$ of the ODE system  
\eqref{eq:ctrsys} converges exponentially to $\boldsymbol{\mu}^{eq}$. In particular, the matrix $\sum_{e \in \mathcal{E}}q_e\mathbf{Q}_e$ has $0$ as a simple eigenvalue with $\boldsymbol{\mu}^{eq}$ as the corresponding eigenvector, which is unique up to a scalar multiple. Let $\mathbf{G} = \sum_{e \in \mathcal{E}}\mathbf{Q}_e$. Then $\mathbf{G}$ is necessarily of the form 
\begin{equation}
~~ \mathbf{G} =  
\begin{bmatrix}
    \mathbf{G}_1       & \mathbf{G}_2  \\
    \mathbf{0} & \mathbf{G}_3
\end{bmatrix},
\end{equation}
where $\mathbf{G}_1 \in \mathbb{R}^{\bar{N} \times \bar{N}}$, $\mathbf{G}_2 \in \mathbb{R}^{\bar{N} \times (N - \bar{N})}$, $\mathbf{G}_3 \in \mathbb{R}^{(N-\bar{N}) \times (N-\bar{N})}$, and $\mathbf{0}$ is the zero element of $\mathbb{R}^{(N - \bar{N}) \times  \bar{N} }$. If $\mathbf{G}$ does not have the block triangular structure above, then there exist indices $i \in \mathcal{V}_1$ and  $j \in \mathcal{V}  \backslash  \mathcal{V}_1$ such that $G^{ji} >0$. But this implies that if $\boldsymbol{\mu}^0 = \boldsymbol{\mu}^{eq}$, then $\dot{\mu}_j(0) \neq  0$ for all $j \in \mathcal{V}$, hence contradicting that $\mathbf{G}\boldsymbol{\mu}^{eq}$ is the zero element of $\mathbb{R}^N$. Moreover, since $\lim_{t \rightarrow \infty}\mu_j(t) =0$ for all $j \in \mathcal{V} \backslash \mathcal{V}_1$ for any $\boldsymbol{\mu}^0 \in \mathbb{R}^{N}$, we must have that ${\rm spec}(\mathbf{G}_3)$ is in $\mathbb{C}_-$ and that $0$ is a simple eigenvalue of $\mathbf{G}_1$. Now, let $\mathbf{a} \in \mathbf{L}^\infty(\Omega)$ be such that $a_i = 1/{f_i} $ if $i \in \mathcal{V}_1 $ and $a_i = k_i \mathbf{1}$ if $i \in \mathcal{V} \backslash \mathcal{V}_1$ for some positive constant $k_i$. Then consider the operator $\boldsymbol{\mathcal{W}}= -\boldsymbol{\mathcal{M}}_\mathbf{b}\boldsymbol{\mathcal{B}}_\mathbf{a} + \sum_{e \in \mathcal{E}}q_e a_S(e)\boldsymbol{\mathcal{Q}}_e$. This operator is of the form  
\begin{equation}
~~\boldsymbol{\mathcal{W}} =  
\begin{bmatrix}
    \boldsymbol{\mathcal{W}}_1       & \boldsymbol{\mathcal{W}}_2  \\
    \mathbf{0} & \boldsymbol{\mathcal{W}}_3
\end{bmatrix},
\end{equation}
where $\boldsymbol{\mathcal{W}}_1 \in \mathcal{L}(X_1,X_1)$, $\boldsymbol{\mathcal{W}}_2 \in \mathcal{L}(X_1, X_2)$, $\boldsymbol{\mathcal{W}}_3 \in \mathcal{L}(X_2,X_2)$, and $\mathbf{0}$ is the zero element of $ \mathcal{L}(X_2,X_1)$, with $X_1 = L^2_{a_1} \times ... \times L^2_{a_{\bar{N}}}$ and $X_2 = L^2_{a_{\bar{N}+1}} \times ... \times L^2_{a_{N}}$. From Proposition \ref{specperb}, it follows that ${\rm spec}(\boldsymbol{\mathcal{W}}_3)$ lies in $\mathbb{C}_-$. Moreover, from Theorem \ref{spelc1}, it follows that $0$ is a simple and dominant eigenvalue of $\boldsymbol{\mathcal{W}}_1$ with the corresponding eigenvector $[f_1~ ...~f_{\bar{N}}]^T$. Then the result follows from \cite{engel2000one}[Corollary V.3.3].
\end{proof}

\section{CONCLUSION} \label{sec:conc}
In this paper, we have proved controllability properties of a system of advection-diffusion-reaction (ADR) PDEs with zero-flux boundary condition that is defined on certain smooth domains. In contrast to previous work, we established controllability of the PDEs with bounded control inputs. Our approach to establishing controllability using spectral properties of the elliptic operators under consideration is also novel. In addition, we have provided constructive solutions to the problem of asymptotically stabilizing a class of hybrid-switching diffusion process (HSDPs) to target non-negative stationary distributions. Future work will focus on extending the arguments in this paper to the case where the corresponding HSDP has diffusion and velocity control parameters in only a small subset of the discrete behavioral states, and the diffusion coefficients are non-constant.

\ifCLASSOPTIONcaptionsoff
  \newpage
\fi

% trigger a \newpage just before the given reference
% number - used to balance the columns on the last page
% adjust value as needed - may need to be readjusted if
% the document is modified later
%\IEEEtriggeratref{8}
% The "triggered" command can be changed if desired:
%\IEEEtriggercmd{\enlargethispage{-5in}}

% references section

% can use a bibliography generated by BibTeX as a .bbl file
% BibTeX documentation can be easily obtained at:
% http://mirror.ctan.org/biblio/bibtex/contrib/doc/
% The IEEEtran BibTeX style support page is at:
% http://www.michaelshell.org/tex/ieeetran/bibtex/
%\bibliographystyle{IEEEtran}
% argument is your BibTeX string definitions and bibliography database(s)
%\bibliography{IEEEabrv,../bib/paper}
%
% <OR> manually copy in the resultant .bbl file
% set second argument of \begin to the number of references
% (used to reserve space for the reference number labels box)
\bibliographystyle{plain}
\bibliography{cdcref_v2}
\end{document}